\newtheorem{theorem}{Theorem}[section]
\newtheorem{lemma}{Lemma}[section]
\newtheorem{corollary}{Corollary}[section]
\newtheorem{proposition}{Proposition}[section]
\newtheorem{assumption}{Assumption}[section]
\theoremstyle{definition}
\newtheorem{example}{Example}[section]
\newtheorem{remark}{Remark}[section]
\newtheorem{algorithm}{Algorithm}[section]
\title{\setstretch{1} The Local Approach to Causal Inference\\ under Network Interference}
\author{Eric Auerbach\footnote{Department of Economics, Northwestern University. E-mail: eric.auerbach@northwestern.edu.} \and Hongchang Guo\footnote{Department of Economics, Northwestern University. E-mail: hongchangguo2028@u.northwestern.edu.} \and Max Tabord-Meehan\footnote{Department of Economics, University of Chicago. E-mail: maxtm@uchicago.edu. \newline We thank Tim Armstrong, Stephane Bonhomme, Ivan Canay, Ben Golub, Joel Horowitz, Chuck Manski, Roger Moon, Seth Richards-Shubik, Azeem Shaikh, Chris Taber, Alex Torgovitsky and seminar participants at Cornell, JSM, Oxford, UCL, UC Boulder, U Rochester, U Glasgow, Laval, UW Madison, USC, Northwestern, U Chicago, TSE for helpful discussions. Research supported by NSF grants SES-2149408 and SES-2149422. This paper supersedes the earlier working paper \emph{A Nonparametric Network Regression}.}
}
\begin{document}
\maketitle
\begin{abstract} \setstretch{1}\noindent
We propose a new nonparametric modeling framework for causal inference when outcomes depend on how agents are linked in a social or economic network. Such network interference describes a large literature on treatment spillovers, social interactions, social learning, information diffusion, disease and financial contagion, social capital formation, and more. Our approach works by first characterizing how an agent is linked in the network using the configuration of other agents and connections nearby as measured by path distance. The impact of a policy or treatment assignment is then learned by pooling outcome data across similarly configured agents. We demonstrate the approach by deriving finite-sample bounds on the mean-squared error of a k-nearest-neighbor estimator for the average treatment response as well as proposing an asymptotically valid test for the hypothesis of policy irrelevance.
\end{abstract}

\section{Introduction}\label{sec:introduction}
Economists are often tasked with predicting outcomes under a counterfactual policy or treatment assignment. In many cases, the counterfactual depends on how the agents are linked in a social or economic network. A diverse literature on treatment spillovers, social interactions, social learning, information diffusion, disease and financial contagion, social capital formation, and more approaches this problem from a variety of specialized, often highly-parametric frameworks \citep[see broadly][]{graham2011econometric,blume2010identification,de2016econometrics,athey2017state,jackson2017economic,bramoulle2020peer}. In this paper, we propose a new framework for causal inference that accommodates many such examples of network interference. 

Our main innovation is a nonparametric modeling approach for sparse network data based on local configurations. Informally, a local configuration refers to the features of the network (the agents, their characteristics, treatment statuses, and how they are connected) nearby a focal agent as measured by path distance. The idea is that these local configurations index the different ways in which a policy or treatment assignment can impact the focal agent's outcome under network interference.  

This approach generalizes a developed literature on spillovers and social interactions in which the researcher specifies reference groups or an exposure map that details exactly how agents influence each other \citep[see for instance][]{manski1993identification,manski2013identification,hudgens2008toward,aronow2017estimating,vazquez2017identification,leung2019causal,arduini2020treatment,savje2021causal,qu2021efficient}. One limitation of this literature is that the effect of a policy or treatment assignment is generally sensitive to how the researcher models the dependence. For example, in the spillovers literature it is often assumed that agents respond to the average treatment of their peers, while in the diffusion literature agents may be informed or infected by any peer. When the researcher is uncertain as to exactly how agents influence each other, misspecification can lead to inaccurate estimates and invalid inferences about the impact of the policy or treatment assignment of interest.\footnote{The frameworks of \cite{leung2019causal,savje2021causal} are misspecification-robust in the sense that their inference procedures may still be valid even if the model of interference used to define the estimand is misspecified. Their procedures do not address the issue we highlight here that an estimand based on a misspecified model may fail to accurately describe the impact of the policy of interest. We discuss this issue in more detail in our note \cite{auerbach2024discussion}.}

Another limitation of this literature is that it does not generally consider policies that change the structure of the network. Network-altering policies are common in  economics. Examples include those that add or remove agents, or connections between agents, from the community \cite[see for instance][]{ballester2006s,azoulay2010superstar, donaldson2016railroads,lee2021key}. Such policies may be difficult to evaluate using standard frameworks, which typically focus on the reassignment of treatment to agents keeping the network structure fixed.\footnote{An example of a policy where an agents is removed from a social network is incarceration, where an agent is forcibly removed from society and put in jail.}

Our methodology addresses these limitations by using local configurations to model network interference. Intuitively, we use the space of local configurations as a ``network sieve'' that indexes the distribution of agent-specific outcomes associated with a given policy or treatment assignment. A contribution of our work is to formalize this local approach and apply it to causal inference under network interference. 

The use of local configurations in economics was proposed by \cite{de2018identifying,anderson2019collaborative} and related to that of ego-centered networks in sociology \citep[see generally][]{wasserman1994social}. In their work, local configurations index moment conditions that partially identify the parameters of a strategic network formation model. The researcher has the flexibility to choose the configurations used for this task and can restrict attention to a small number that occur frequently in the data. In our setting, local configurations correspond to fixed counterfactual policies. It is usually the case that no exact instances of a given policy appear in the data and so we substitute outcomes associated with similar but not exactly the same configurations. Formalizing this procedure requires new machinery, which we introduce building on work by \cite{benjamini2001recurrence}. 

We demonstrate our local approach with applications to two causal inference problems. In both problems the researcher starts with a status-quo policy as described by one local configuration and is tasked with evaluating the impact of an alternative policy as described by another local configuration. The researcher has access to data from multiple-networks corresponding to a partial or stratified interference setting. Such designs are common in education, industrial organization, labor, and development economics where the researcher may collect network data on multiple independent schools, markets, firms, or villages.

The first problem is to estimate average or distributional policy effects/treatment response. For instance, the status-quo policy may be given by a particular network structure and the new policy may be one in which a key agent is removed. The policy effect to be estimated is the expected change in outcome for one or more agents. We propose a $k$-nearest-neighbors estimator for the policy effect and provide non-asymptotic bounds on mean-squared error building on work by \cite{doring2017rate}. We also provide sufficient conditions for the estimator to be asymptotically normal, which can be used to construct confidence intervals for the policy effect in the usual way.

The second problem is to test policy irrelevance/no treatment effects. For instance, the status-quo policy may be given by a particular network structure where no agents are treated and the new policy may keep the same connections between agents but have every agent treated. The hypothesis to be tested is that both policies are associated with the same distribution of outcomes for one or more agents. We propose an asymptotically valid randomization test for this hypothesis building on work by \cite{canay2018approximate}. 


Section \ref{sec:local} defines a local configuration. Section \ref{sec:motivation} incorporates our definition of a local configuration into an econometric model with network interference. Section \ref{sec:policy} describes applications to estimating policy effects and testing policy irrelevance. Section \ref{sec:empirical_illustration} contains an empirical illustration evaluating the impact of social network structure on favor exchange in the setting of \cite{jackson2012social}. Section \ref{sec:conclusion} concludes. Proof of claims, simulation evidence, and other details can be found in an appendix. 



\section{Local configurations}\label{sec:local}
In this section, we first provide an informal description of a local configuration along the lines of \cite{de2018identifying}. We then give a formal definition. 

\subsection{Informal description}\label{sec:local_informal}
Intuitively, agent $i$'s local configuration refers to the agents within path distance $r$ of $i$, their characteristics, and how they are connected. Larger values of $r$ are associated with more complicated configurations, which give a more precise picture about how $i$ is connected in the network. This idea is illustrated in Figure 1.

\begin{figure}
    \centering
    \caption{Illustration of local configurations.}
    \begin{subfigure}[b]{1\textwidth}
    \centering
            \begin{tikzpicture}[->,>= stealth,shorten >=1pt,auto,node distance= 1.5cm,
        thick,main node/.style={circle,fill=blue!20,draw,minimum size=.5cm,inner sep=0pt]}]

    \node[main node] (2) {$2$};
    \node[main node] (6) [ left of =2] {$6$};
    \node[main node] (5) [above right of=2, fill = red, opacity = .5, shape = rectangle] {$5$};
    \node[main node] (8) [below right of =5] {$8$};
    \node[main node] (1) [ below of =2] {$1$};
    \node[main node] (3) [ below right of =1, fill = red, opacity = .5, shape = rectangle] {$3$};
    \node[main node] (4) [ below left of =1] {$4$};
    \node[main node] (7) [ above right of =3] {$7$};
     \node[main node] (9) [ above left of =4] {$9$};
     \node[main node] (10) [ right of =7, fill = red, opacity = .5, shape = rectangle] {$10$};
       \node[main node] (11) [above right of =10] {$11$};
        \node[main node] (12) [above left of =9, fill = red, opacity = .5, shape = rectangle] {$12$};

    \path[-]
    (6) edge node {} (2)
    	edge node {} (9)
    (2) edge node {} (5)
    	edge node {} (5)
         edge node {} (1)
    (5) edge node {} (8)     
    (1) edge node {} (3)
    (1) edge node {} (4)
    (3) edge node {} (7)
    (4) edge node {} (9)
    (7) edge node {} (10)
    (11) edge node {} (10)
        (9) edge node {} (12);
\end{tikzpicture}
\caption{A network connecting a dozen agents. }
    \end{subfigure}

    \begin{subfigure}[b]{1\textwidth}
    \centering
    \begin{tikzpicture}[->,>= stealth,shorten >=1pt,auto,node distance=1cm,
        thick,main node/.style={circle,fill=blue!20,draw,minimum size=.3cm,inner sep=0pt]}]

    \node[main node] (2) [below right of=6]  {};
    \node[draw, fill=blue!20, shape=diamond, aspect=0.7, minimum height=0.3cm, inner sep=0pt] (1) [ below of =2] {$1$};
    \node[main node] (3) [ below right of =1, fill = red, opacity = .5, shape = rectangle] {};
    \node[main node] (4) [ below left of =1] {};

    \path[-]
    (2) edge node {} (1)
    (1) edge node {} (3)
    (1) edge node {} (4);
\end{tikzpicture}
      \begin{tikzpicture}[->,>= stealth,shorten >=1pt,auto,node distance=1cm,
        thick,main node/.style={circle,fill=blue!20,draw,minimum size=.3cm,inner sep=0pt]}]

    \node[draw, fill=blue!20, shape=diamond, aspect=0.7, minimum height=0.3cm, inner sep=0pt] (2) [below right of=6]  {$2$};
    \node[main node] (1) [ below of =2] {};
    \node[main node] (5) [ above right of =2, fill = red, opacity = .5, shape = rectangle] {};
    \node[main node] (6) [ left of =2] {};

    \path[-]
    (2) edge node {} (1)
    (2) edge node {} (5)
    (2) edge node {} (6);
\end{tikzpicture}

\caption{The local configurations for agents $1$ and $2$ associated with radius $1$ are equivalent. }
\end{subfigure}

    \begin{subfigure}[b]{1\textwidth}
    \centering
        \begin{tikzpicture}[->,>= stealth,shorten >=1pt,auto,node distance=1cm,
        thick,main node/.style={circle,fill=blue!20,draw,minimum size=.3cm,inner sep=0pt]}]

      \node[main node] (2) {};
    \node[main node] (6) [ left of =2] {};
    \node[main node] (5) [above right of=2, fill = red, opacity = .5, shape = rectangle] {};
    \node[draw, fill=blue!20, shape=diamond, aspect=0.7, minimum height=0.3cm, inner sep=0pt] (1) [ below of =2] {$1$};
    \node[main node] (3) [ below right of =1, fill = red, opacity = .5, shape = rectangle] {};
    \node[main node] (4) [ below left of =1] {};
    \node[main node] (7) [ above right of =3] {};
     \node[main node] (9) [ above left of =4] {};

    \path[-]
    (6) edge node {} (2)
        	edge node {} (9)
    (2) edge node {} (5)
    	edge node {} (5)
         edge node {} (1)     
    (1) edge node {} (3)
    (1) edge node {} (4)
    (3) edge node {} (7)
    (4) edge node {} (9);
\end{tikzpicture}
\hspace{5mm}
     \begin{tikzpicture}[->,>= stealth,shorten >=1pt,auto,node distance=1cm,
        thick,main node/.style={circle,fill=blue!20,draw,minimum size=.3cm,inner sep=0pt]}]

      \node[draw, fill=blue!20, shape=diamond, aspect=0.7, minimum height=0.3cm, inner sep=0pt] (2) {$2$};
    \node[main node] (6) [ left of =2] {};
    \node[main node] (5) [above right of=2, fill = red, opacity = .5, shape = rectangle] {};
    \node[main node] (1) [ below of =2] {};
    \node[main node] (3) [ below right of =1, fill = red, opacity = .5, shape = rectangle] {};
    \node[main node] (4) [ below left of =1] {};
    \node[main node] (8) [ below right of =5] {};
     \node[main node] (9) [ above left of =4] {};

    \path[-]
    (6) edge node {} (2)
        	edge node {} (9)
    (2) edge node {} (5)
    	edge node {} (5)
         edge node {} (1)     
    (1) edge node {} (3)
    (1) edge node {} (4)
    (5) edge node {} (8)
    (4) edge node {} (9);
\end{tikzpicture}
\caption{The local configurations for agents $1$ and $2$ associated with radius 2 are equivalent.}
\end{subfigure}

    \begin{subfigure}[b]{1\textwidth}
    \centering
   \begin{tikzpicture}[->,>= stealth,shorten >=1pt,auto,node distance= 1cm,
        thick,main node/.style={circle,fill=blue!20,draw,minimum size=.3cm,inner sep=0pt]}]

    \node[main node] (2) {};
    \node[main node] (6) [ left of =2] {};
    \node[main node] (5) [above right of=2, fill = red, opacity = .5, shape = rectangle] {};
    \node[main node] (8) [below right of =5] {};
    \node[draw, fill=blue!20, shape=diamond, aspect=0.7, minimum height=0.3cm, inner sep=0pt] (1) [ below of =2] {$1$};
    \node[main node] (3) [ below right of =1, fill = red, opacity = .5, shape = rectangle] {};
    \node[main node] (4) [ below left of =1] {};
    \node[main node] (7) [ above right of =3] {};
     \node[main node] (9) [ above left of =4] {};
     \node[main node] (10) [ right of =7, fill = red, opacity = .5, shape = rectangle] {};
        \node[main node] (12) [above left of =9, fill = red, opacity = .5, shape = rectangle] {};

    \path[-]
    (6) edge node {} (2)
    	edge node {} (9)
    (2) edge node {} (5)
    	edge node {} (5)
         edge node {} (1)
    (5) edge node {} (8)     
    (1) edge node {} (3)
    (1) edge node {} (4)
    (3) edge node {} (7)
    (4) edge node {} (9)
    (7) edge node {} (10)
        (9) edge node {} (12);
\end{tikzpicture}
\hspace{5mm}
       \begin{tikzpicture}[->,>= stealth,shorten >=1pt,auto,node distance= 1cm,
        thick,main node/.style={circle,fill=blue!20,draw,minimum size=.3cm,inner sep=0pt]}]

    \node[draw, fill=blue!20, shape=diamond, aspect=0.7, minimum height=0.3cm, inner sep=0pt] (2) {$2$};
    \node[main node] (6) [ left of =2] {};
    \node[main node] (5) [above right of=2, fill = red, opacity = .5, shape = rectangle] {};
    \node[main node] (8) [below right of =5] {};
    \node[main node] (1) [ below of =2] {};
    \node[main node] (3) [ below right of =1, fill = red, opacity = .5, shape = rectangle] {};
    \node[main node] (4) [ below left of =1] {};
    \node[main node] (7) [ above right of =3] {};
     \node[main node] (9) [ above left of =4] {};
        \node[main node] (12) [above left of =9, fill = red, opacity = .5, shape = rectangle] {};

    \path[-]
    (6) edge node {} (2)
    	edge node {} (9)
    (2) edge node {} (5)
    	edge node {} (5)
         edge node {} (1)
    (5) edge node {} (8)     
    (1) edge node {} (3)
    (1) edge node {} (4)
    (3) edge node {} (7)
    (4) edge node {} (9)
        (9) edge node {} (12);
\end{tikzpicture}
\caption{The local configurations for agents $1$ and $2$ associated with radius $3$ are not equivalent.}
\end{subfigure}
\end{figure}

Panel (a) depicts twelve agents connected in an unweighted and undirected network with a binary individualized treatment. Agents are either assigned to the treatment (red square nodes) or the control (blue circle nodes). Panel (b) depicts the local configurations of radius 1 for agents 1 and 2. They are both equivalent to a wheel with the focal untreated agent in the center and three other agents on the periphery, one of which is treated. Panel (c) depicts the local configurations of radius 2 for agents 1 and 2. They are  both equivalent to a ring between five untreated agents (one of which is the focal agent) where the focal agent is also connected to a treated agent who is connected to an untreated agent and another agent in the ring adjacent to the focal agent is connected to a treated agent. Panel (d) depicts the local configurations of radius 3 for agents 1 and 2. They are not equivalent because the local configuration for agent 1 contains four treated agents while the local configuration for agent 2 contains only three treated agents. 

In this way, one can describe the local configurations for any choice of agent and radius. Since the diameter of the network (the maximum path distance between any two agents) is 7, any local configuration of radius greater than 7 will be equal to the local configuration of radius 7. However, for networks defined on large connected populations, increasing the radius of the local configuration typically reveals a more complicated network structure. 

 Following \cite{benjamini2001recurrence}, we call the infinite-radius local configuration a rooted network. We now provide a formal definition. 
 
\subsection{Formal definition}\label{sec:local_formal}
We formally define the local configuration from Section \ref{sec:local_informal}. To do this, we first review some terminology and notation from the network theory literature. We then define the space of rooted networks building on \cite{benjamini2001recurrence,aldous2004objective}. 

\subsubsection{Terminology and notation}\label{sec:local_terminology}
A countable population of agents indexed by $\mathcal{I} \subseteq \mathbb{N}$ is linked in a weighted and directed network. The weight of a link from agent $i$ to $j$ is given by $D_{ij} \in \mathbb{Z}_{+}\cup \{\infty\}$. Each agent is also assigned a treatment $T_i \in \mathbb{R}$. The population vector of treatment assignments is denoted $\bold{T} = \{T_i\}_{i \in \mathcal{I}}$ and $\mathcal{T}$ is the set of all possible assignments. 

We explicitly allow for weighted networks because they are used in many of our motivating examples. They might measure, for example, the amount of physical distance between students, number of social connections surveyed between villagers, or degree of collaboration between researchers. The matrix $D$ indexed by $\mathcal{I}\times\mathcal{I}$ with $D_{ij}$ in the $ij$th entry is called the adjacency matrix. We take the convention that larger values of $D_{ij}$ correspond to weaker relationships between $i$ and $j$. We suppose that $D_{ij} = 0$ if and only if $i = j$. When the network is unweighted (agent pairs are either linked or not) $D_{ij} = 1$ denotes a link and $D_{ij} = \infty$ denotes no link from $i$ to $j$. 
 
A path from $i$ to $j$ is a finite ordered set $\{t_{1},...,t_{L}\}$ with values in $\mathbb{N}$, $t_{1} = i$, $t_{L} = j$, and $L \in \mathbb{N}$. The length of the path $\{t_{1},...,t_{L}\}$ is given by $\sum_{s = 1}^{L-1}D_{t_{s}t_{s+1}}$. The path distance from $i$ to $j$ denoted  $\rho(i,j)$ is the length of the shortest path from $i$ to $j$. That is,
\begin{align*}
\rho(i,j) := \inf_{\substack{\{t_{1},...,t_{L}\} \\ \text{ s.t. } t_{1} = i, t_{L} = j  }}\sum_{s = 1}^{L-1}D_{t_{s}t_{s+1}}.
\end{align*}  
If the path distance from $i$ to $j$ is finite we say that $i$ is path connected to $j$. For any $i \in \mathcal{I}$ and $r \in \mathbb{Z}_{+}$, agent $i$'s $r$-neighborhood $\mathcal{I}_{i}(r) := \{j \in \mathcal{I}: \rho(i,j) \leq r\}$ is the collection of agents within path distance $r$ of $i$. $N_{i}(r) := |\mathcal{I}_{i}(r)|$ is the size of agent $i$'s $r$-neighborhood (i.e. the number of agents in $\mathcal{I}_{i}(r)$). For any agent-specific variable (such as an outcome or treatment assignment)  $\mathbf{W} := \{W_{i}\}_{i \in \mathcal{I}}$, $W_{i}(r) := \sum_{j \in \mathcal{I}}W_{j}\mathbbm{1}\{\rho(i,j) \leq r\}$ is the $r$-neighborhood count of $\bold{W}$ for agent $i$. It describes the partial sum of $\bold{W}$ for the agents in $\mathcal{I}_{i}(r)$. $\mathcal{I}_{i}(0) = \{i\}$, $N_{i}(0) = 1$, and $W_{i}(0) = W_{i}$ since $D_{ij} = 0$ if and only if $i = j$. 

We assume that the network (adjacency matrix) is \emph{locally finite}. That is, for every $i \in \mathcal{I}$ and $r \in \mathbb{Z}_{+}$, $N_{i}(r) < \infty$. In words, the assumption is that every $r$-neighborhood of every agent contains only a finite number of agents. The assumption is implicit in much of the literature on network interference (including the examples in Section 3.2 below) where the researcher observes all of the relevant dependencies between agents in finite data. We do not assume a uniform bound on the size of the $r$-neighborhoods. The set of all locally finite adjacency matrices is denoted $\mathcal{D}$.

\subsubsection{Rooted networks}
For an adjacency matrix $D \in \mathcal{D}$ and treatment assignment vector $\mathbf{T} \in \mathcal{T}$, a network is the triple $(\mathcal{I}, D, \mathbf{T})$ where $\mathcal{I}$ is the vertex set and $D$ is the weighted edge set. A rooted network $G_{i} = G_i((\mathcal{I},D,\mathbf{T}), i)$ is the triple $(\mathcal{I},D,\mathbf{T})$ with a focal agent $i \in \mathcal{I}$ called the root. Informally, $G_i$ is the network $(\mathcal{I}, D, \mathbf{T})$  ``from the point of view'' of $i$. In this paper, we will often use the abbreviated notation $G_{i} = G_i(D,\mathbf{T})$ when the population $\mathcal{I}$ is clear. We will also use notation like $\mathcal{I}(g)$, $D(g)$ and $\mathbf{T}(g)$ to refer to the vertex set, weighted edge set, and set of treatment assignments associated with a given rooted network $g$. 

For any $r \in \mathbb{Z}_{+}$, $G_{i}^{r}$ is the subnetwork of $G_i((\mathcal{I},D,\mathbf{T}),i)$ induced by the agents within path distance $r$ of $i$ as measured by path distance $\rho$. Formally, $G_{i}^{r} := ((\mathcal{I}_{i}(r),D_{i}(r),T_{i}(r)), i)$, where $\mathcal{I}_{i}(r) := \mathcal{N}_{i}(r) := \{j \in \mathcal{I}:\rho(i,j) \leq r\}$, $D_{i}(r) := \{D_{jk} : j,k \in \mathcal{I}_i(r)\}$, and $T_{i}(r) := \{T_{j} \in \mathbf{T} : j \in \mathcal{I}_{i}(r)\}$. We say $G_{i}^{r}$ is the rooted network $G_{i}$ truncated at radius $r$. The rooted network formalizes the idea of a local configuration, which we described previously in Section \ref{sec:local_informal} . 

For any $\varepsilon \geq 0$, two rooted networks $G_{i_{1}}$ and $G_{i_{2}}$ are $\varepsilon$-isomorphic (denoted $G_{i_{1}} \simeq_{\varepsilon} G_{i_{2}}$) if all of their $r$-neighborhoods are equivalent up to a relabeling of the non-rooted agents, but where treatment assignments are allowed to disagree up to a tolerance of $\varepsilon$. Formally, $G_{i_1} \simeq_{\varepsilon} G_{i_{2}}$ if for any $r  \in \mathbb{Z}_{+}$ there exists a bijection $f: \mathcal{I}_{i_{1}}(r) \leftrightarrow \mathcal{I}_{i_{2}}(r)$ such that $f(i_{1}) = i_{2}$, $D_{jk} = D_{f(j)f(k)}$, for any $j, k \in \mathcal{I}_{i_{1}}(r)$, and $|T_{j} - T_{f(j)}| \leq \varepsilon$ for any $j \in \mathcal{I}_{i_{1}}(r)$. 



Two rooted networks that are not $0$-isomorphic are assigned a strictly positive distance inversely related to the largest $r$ and smallest $\epsilon$ such that they have $\varepsilon$-isomorphic $r$-neighborhoods. Specifically, we define the following distance on the set of rooted networks:
\begin{align}\label{metric}
d(G_{i_{1}},G_{i_{2}}) := \min\left\{\inf_{(r,\varepsilon) \in \mathbb{Z}_+\times\mathbb{R}_{++}} \{\zeta(r) + \varepsilon: G_{i_{1}}^{r} \simeq_{\varepsilon} G_{i_{2}}^{r}\}, \hspace{2mm} 2\right\}.
\end{align}
where $\zeta(x) = (1+x)^{-1}$. We demonstrate that $d(\cdot, \cdot)$ is a pseudo-metric in Appendix \ref{sec:G_props}. The outer minimization is not essential, but taking $d$ to be bounded simplifies later arguments.  Further note that our specific choice of $\zeta(\cdot)$ is arbitrary, in that any function that is supported on $\mathbb{R}_+$ which is monotonically decreasing to zero would be applicable, but we fix our choice of $\zeta(\cdot)$ for concreteness. 




The notions of an $\varepsilon$-isomorphism and and the rooted network distance $d$ are illustrated in Figure 1. Panels (b) and (c) both depict a pair of rooted networks that are $0$-isomorphic and so have a distance of $0$. Panel (d) depicts a pair of rooted networks that are not $\varepsilon$-isomorphic for any $\varepsilon > 0$ because the two networks do not have the same number of vertices (and so there cannot exist a bijection between them), but these networks truncated up to a radius of $2$ are in fact $0$-isomorphic. Other illustrations can be found in Tables 8-10 in Appendix Section D. 

We use $\mathcal{G}$ to denote the set of equivalence classes of all possible locally finite rooted networks under $d$. We demonstrate that $(\mathcal{G}, d)$ is a separable and complete metric space in Appendix \ref{sec:G_props}.\footnote{In principle our definition of a network and the corresponding definition of $d(\cdot, \cdot)$ could be modified to allow for edges which are continuously weighted. However, our proof of completeness would no longer be valid in this case.} Following \cite{aldous2004objective}, we call the topology on $\mathcal{G}$ induced by $d$ the local topology, and more broadly call modeling on $\mathcal{G}$ the local approach.

To be sure, the $\varepsilon$-isomorphism used to define the network distance $d$ can be modified to fit the researcher's specific setting. For instance, economic theory may suggest that two agents are qualitatively similar if their rooted networks are similar under edit distance (i.e. they can be made $\varepsilon$-isomorphic by adding or deleting a small number of agents or links). In this case the researcher may wish to relax the $\varepsilon$-isomorphism to allow for such discrepancies. The results developed in Section \ref{sec:policy} continue to hold mutatis mutandis with alternative distance metrics. We use the specific metric (\ref{metric}) because it was the simplest version we could think of such that any two agents with a distance of $0$ are observationally equivalent. That is, there is no remaining information in the treatment assignments and network connections that can be used to distinguish them. 
\section{Econometric model}\label{sec:motivation}
\subsection{Main specification}\label{sec:motivation_model}
Each agent $i \in \mathcal{I}$ has an outcome $Y_{i} \in \mathbb{R}$. The population vector of outcomes is denoted $\mathbf{Y} = \{Y_{i}\}_{i \in \mathcal{I}}$. Our main specification is 
\begin{align}\label{eq:outcome_model}
Y_i = h(G_i(D,\mathbf{T}), U_i).
\end{align}
 In this specification, $Y_i$ is determined by three factors: the population treatment assignments $\mathbf{T}$ (adding covariates to the model is straightforward, see Appendix A.1), the population network connections $D$, and agent-specific policy-invariant unobserved heterogeneity term $U_i$ (in some settings, for instance the empirical illustration we consider in Section \ref{sec:empirical_illustration}, the outcome model will depend solely on the network configuration $D$, and thus the treatment assignment vector $\mathbf{T}$ would not appear in model specification \eqref{eq:outcome_model}). 
 
 The function $h : \mathcal{G}\times\mathcal{U} \to \mathbb{R}$ maps $G_i$ and $U_i$ into the outcome $Y_i$.\footnote{$U_i$ is assumed to take values in a separable metric space $\mathcal{U}$. We endow $\mathcal{G} \times \mathcal{U}$ with the usual product topology, define probability measures on the corresponding Borel sigma-algebra, and associate a stochastic rooted network and error pair $(G_i, U_i)$ with a probability measure $\mu$. For now we take $\mu$ as arbitrary and fixed by the researcher. We motivate a specific choice of $\mu$ in the context of a multiple-networks setting in Section 4.2 below.} We assume that the researcher has some knowledge of $\mathbf{Y}$, $\mathbf{T}$, and $D$. $U_i$ and $h$ are unknown.  
 The key assumption of the model is that the effect of the first two factors on the outcome is intermediated by the rooted network $G_i(D,\mathbf{T}) :  \mathcal{D} 
 \times \mathcal{T} \to \mathcal{G}$. That is, for a fixed level of unobserved heterogeneity $U_i$, any pair of treatment assignments and network connections $(d,\mathbf{t}) \in \mathcal{D}\times\mathcal{T}$ that result in the same rooted network $G_{i}(d,\mathbf{t})$ lead to the same outcome $h(G_i(d,\mathbf{t}),U_i)$. In the network interference literature, a function of the treatment assignments and network connections that has this property is often called an \emph{exposure map} and any element of the range of the exposure map is called an \emph{effective treatment}. 
 
What distinguishes our specification from the established network interference literature is that, in the literature, the exposure map is typically taken to be low-dimensional, known up to a finite-dimensional vector of parameters. For instance, the exposure map may be a linear function of the number of treated agents with a certain distance to the focal agent. A drawback of this strategy is that the researcher may misspecify the exposure map, i.e. they choose a low-dimensional function that does not fully intermediate the effect of the treatment assignments and network connections on the outcome. When the exposure map is misspecified, the researcher's inferences about the impact of policy counterfactuals will not generally be valid. For a further discussion of this point, see our note \cite{auerbach2024discussion}. 

Our model, by contrast, takes the exposure map to be rooted-network valued. A benefit of this approach is that the rooted network contains everything that is knowable about an agent based on the nearby treatment assignments and network connections. Because of this, it nests a large class of potential exposure maps and, as a result, is less prone to misspecification. A cost of this approach is that, in many settings, the space of rooted networks may be relatively complicated, potentially leading to slow rates of convergence. We discuss this issue in more detail in Section 4 below.

\subsection{Examples}\label{sec:motivation_examples}
We illustrate the prevalence of our main specification (\ref{eq:outcome_model}) with three concrete examples  from the network economics literature. In each example, it is possible to specify a valid low-dimensional exposure map that summarizes the information in the treatment assignments and network connections that is relevant for each agent's outcome. However, the natural exposure map in one example is not valid in the setting of another. Our rooted-network-valued exposure map, by contrast, is valid in all three settings. 

\begin{example}\label{ex:spillovers}
(Neighborhood spillovers): Agents are assigned to either treatment or control status with $T_{i} = 1$ if $i$ is treated and $T_{i} = 0$ if $i$ is not. Agent $i$'s outcome depends on their treatment status and the number of treated agents nearby 
\begin{align*}
Y_{i} = Y\left(T_{i}, T_{i}(r), U_i\right) 
\end{align*}
where $T_{i}(r) = \sum_{j \in \mathbb{N}}T_{j}\mathbbm{1}\{\rho(i,j) \leq r\}$ is the neighborhood count of treatment assigned within some fixed radius $r$ of $i$. See for instance \cite{cai2015social,leung2016treatment,he2018measuring,viviano2019policy}. We recast this model as a special case of (\ref{eq:outcome_model}) by noting that both $T_i$ and $T_i(r)$ can be written as functions of $G_i$, agent $i$'s rooted network. Specifically, $T_{i}$ is  associated with agent $i$ and so is determined by $G_{i}^{0}$. $T_{i}(r)$ counts the number of treated agents within distance $r$ of $i$ and so is determined by $G_{i}^{r}$. It follows that $Y_{i} = h(G_{i}, U_i)$ for some $h$.
\end{example}

\begin{example}\label{ex:capital}
(Social capital formation): Agents leverage their network connections to garner favors, loans, advice, etc. \cite{jackson2012social} specify a model in which one agent performs a favor for another when there is a third agent connected to both that can monitor the exchange. We consider a stochastic version of this model where agent $i$'s stock of social capital depends on their number of monitored connections 
\begin{align*}
Y_{i}= \left(\sum_{j \in \mathcal{I}} \mathbbm{1}\{\mathcal{I}_{i}(1)\cap \mathcal{I}_{j}(1) \neq \emptyset\}\right)\cdot U_{i}~.
\end{align*}
\cite{karlan2009trust,cruz2017politician} spectify related models of social capital. An example policy of interest is the effect of adding additional connections to the network. We recast this model as a special case of (\ref{eq:outcome_model}) by noting that the number of agents of path distance $2$ from $i$, $\sum_{j \in \mathcal{I}} \mathbbm{1}\{\mathcal{I}_{i}(1)\cap \mathcal{I}_{j}(1) \neq \emptyset\}$, is determined by $G_i^2$. It follows that $Y_i = h(G_i ,U_i)$ for some $h$.
\end{example}

\begin{example}\label{ex:social_interactions}
(Social interactions): Agent $i$'s equilibrium outcome depends on the number and length of paths between them and the other agents, the treatment statuses of those other agents, and the average treatment statuses of the agents nearby those other agents
\begin{align*}
Y_{i} = \lim_{S \to \infty}\sum_{s=0}^{S}\left[\delta^{s}A^*(D)^{s}\left(\mathbf{T}\beta + \mathbf{T}^*(1)\gamma \right)\right]_i + U_{i}~,
\end{align*}
where $T^*_{i}(1) = T_{i}(1)/N_{i}(1)$, $[\cdot]_i$ is the $i$th entry of a vector, and the $ij$th entry of $A^{*}(D)$ is $A_{ij}^*(D) = \mathbbm{1}\{0 < D_{ij} \leq 1\}/N_{i}(1)$. See for instance \cite{bramoulle2009identification, blume2010identification,de2010identification, lee2010specification, goldsmith2013social}. An example policy effect of interest is the average effect of removing a particular agent from the community. See for instance \cite{ballester2006s,calvo2009peer,lee2021key}. We recast this model as a special case of (\ref{eq:outcome_model}) by noting that the $i$th entry of $\delta^{s}A^*(D)^{s}\left(\mathbf{T}\beta + \mathbf{T}^*(1)\gamma \right)$ only depends on the treatment statuses and connections of agents within path-distance $s+1$ of $i$ and so is determined by $G_{i}^{s+1}$. It follows that 
$Y_{i} = \lim_{S\to\infty}\sum_{s=1}^{S}h_{s}(G_{i}^{s},U_{i}) = h(G_{i}, U_{i})$ for some functions $h_{s}$ and $h$. 
\end{example}

There are two key differences between Example \ref{ex:social_interactions} and Examples \ref{ex:spillovers} and \ref{ex:capital}. First, in Example (\ref{ex:social_interactions}), the effective treatment for $i$ depends on all of the treatment statuses and network connections of the agents path connected to $i$. \cite{leung2019causal} argues that this is a common feature of many economic models of network interference. Second, in Example (\ref{ex:social_interactions}), the  description of the effective treatment depends on model parameters which are unknown to the policy maker. Our local approach is, to our knowledge, the first to accommodate these two features when modeling causal effects under network interference. 
 
To be sure, in each of the three examples, it is possible to take the exposure map to be a relatively simple function of the treatment assignments and network connections. For instance, in Example (\ref{ex:spillovers}), a valid exposure map is $(T_{i}, T_{i}(r))$ and in Example (\ref{ex:social_interactions}) it is $(\{\left[A^*(D)^{s}\left(\mathbf{T}\beta + \mathbf{T}^*(1)\right)\right]_i \}_{s \in \mathbb{N}})$. However, the exact choice of exposure map requires the researcher to commit to one parametric specification. Our approach, by contrast, results in a valid exposure map for all three examples. 
 
 \subsection{Parameters of interest}\label{sec:parameters}
Our main objects of interest are the average structural function (ASF) and distributional structural function (DSF) that describe the outcome for $i$ associated with a policy that sets the rooted network to some $g \in \mathcal{G}$. These are, respectively,
\begin{align}\label{poi}
h(g) = E\left[h(g,U_i)\right]~\text{ and }
h_y(g) = E\left[\mathbbm{1}\{h(g,U_i)\le y\}\right]
\end{align}
where the expectation refers to the marginal distribution of $U_i$ under $\mu$ (see footnote 4 in Section \ref{sec:motivation_model} above). See for instance \cite{blundell2003endogeneity}. These functions can be used to estimate and conduct inferences about many causal effects of interest. For example, the average treatment effect (ATE) associated with a policy that alters agent $i$'s rooted network from $g$ to $g'$ is described by  $h(g') - h(g)$. Under appropriate continuity assumptions, $h(g)$ and $h_y(g)$ can be approximated by averaging the outcomes corresponding to rooted networks that are close to $g$ under $d$. We demonstrate this idea with applications to estimating policy effects/treatment response and  testing policy irrelevance/no treatment effects in Sections \ref{sec:est_asf} and \ref{sec:dist_test} below.  

This ATE is related to the average exposure effect of \cite{leung2019causal,savje2021average}. These authors consider inference in the setting where the exposure map chosen by the researcher is potentially invalid in the sense that the CTR assumption of Section \ref{sec:local_formal} is false. A limitation of their approach is that an exposure effect based on a misspecified exposure map may not characterize any policy of interest. We limit the scope for  misspecification by using a general class of exposure maps indexed by the space of rooted networks. 

We could alternatively consider conditional treatment effect parameters. For instance, the researcher may be interested in a policy that alters treatment assignment conditional on the structure of the network. Or the policy may alter the network connections conditional on some observed agent attributes. In such cases the researcher may wish to consider a conditional average treatment effect where the researcher conditions on the part of the rooted network not altered by the policy. Such parameters may in some cases be identified under alternative assumptions than those required to identify the above unconditional ATE. This is discussed following Assumption \ref{ass:sampling} below.

\section{Applications to causal inference}\label{sec:policy}
We apply the local approach framework of Section \ref{sec:motivation} to two causal inference problems. In both problems, the researcher begins with a rooted network associated with a status-quo policy and is tasked with evaluating the impact of an alternative. The researcher has access to data on outcomes and policies from multiple independent communities or clusters such as schools, villages, firms, or markets. This data structure, described in Section \ref{sec:exp_setup}, is not crucial to our methodology, but simplifies the analysis. It is also very common in the network economics literature. We leave applications to alternative settings, for example with data from one large network or endogenous policies, to future work. 

Our first application, described in Section \ref{sec:est_asf}, is to the estimation and inference of policy effects. We construct a $k$-nearest-neighbors estimator for the average structural function in (\ref{poi}) and provide non-asymptotic bounds on its estimation error. We then establish an asymptotic normality result for the estimator that can be used to construct confidence intervals for the policy effect in the usual way. Our second application, described in Section \ref{sec:dist_test}, is to testing policy irrelevance. That is, we test the hypothesis that the rooted networks associated with two policies generate the same distribution of outcomes. These applications contrast a literature studying the magnitude of any potential spillover effects or testing the hypothesis of no spillovers \cite[see for instance][]{aronow2012general,athey2018exact,hu2021average,savje2021average}.  

\subsection{Multiple-networks setting}\label{sec:exp_setup}
A random sample of communities (clusters) are indexed by $c \in [C] := 1, \ldots C$. Each $c \in [C]$ is associated with a finite collection of $m_{c}$ observations $\{W_{ic}\}_{i \in [m_{c}]}$ where $m_c$ is a random positive integer, $W_{ic} := (Y_{ic},G_{ic})$, and $Y_{ic} := h(G_{ic}, U_{ic})$ for some unobserved error $U_{ic}$. Intuitively, each community $c$ is represented by an initial network connecting $m_c$ agents and the $m_{c}$ rooted networks refer to this initial network rooted at each agent in the community. Let $W_{c} := \{W_{ic}\}_{i \in [m_c]}$. We impose the following assumptions on $\{(G_{ic}, U_{ic})\}_{i \in [m_{c}], c \in [C]}$. 
\begin{assumption}\label{ass:sampling}
\begin{enumerate}[(i)]
\item[]
\item $\{W_{c}\}_{c \in [C]}$ are independent and identically distributed  (across communities). 
\item $\{U_{ic}\}_{i \in [m_{c}], c \in [C]}$ are identically distributed (within and across communities). 
\item For any measurable $f$, $i \in [m_{c}]$, and $c \in [C]$, \[E\left[f(G_{ic},U_{ic}) | G_{1c},...,G_{m_{c}c}\right] = E\left[f(G_{ic},U)|G_{ic}\right]~,\] 
where $U$ is an independent copy of $U_{ic}$ (i.e. $U$ has the same marginal distribution as $U_{ic}$ but is independent of $\{W_{c}\}_{c \in [C]}$). 
\end{enumerate}
\end{assumption}
Assumption \ref{ass:sampling} (i) is what makes our analysis ``multiple-networks.'' It states that the networks and errors are independent and identically distributed across communities. We use this independence across communities to characterize the statistical properties of our test procedure and estimator below. We do not make any restrictions on the dependence structure between observations within a community. This allows for arbitrary dependence within a community; see for instance Example \ref{ex:social_interactions}. Weakening the independence assumption (for instance, considering dependent data from one large community) would require additional assumptions about the intra-community dependence structure which we leave to future work. 

Assumption \ref{ass:sampling} (ii) fixes the marginal distribution of the errors. It is used to define the policy effect of interest. The relevant structural function is defined as the expected outcome over the homogeneous marginal distribution of $U_{ic}$ for a fixed rooted network. This assumption can be dropped, for example, by defining the expectation to be with respect to the (mixture) distribution of $U_{\iota_{c} c}$ generated by drawing $\iota_{c}$ uniformly at random from $[m_{c}]$.

Assumption \ref{ass:sampling} (iii) states that the rooted networks are exogenous (i.e. the errors are policy-irrelevant). We require that the conditional distribution of $(G_{ic}, U_{ic})$ given $G_{1c},...,G_{m_{c}c}$ is equal to the conditional distribution of $(G_{ic}, U)$ given $G_{ic}$, where $U$ is an independent copy of $U_{ic}$. Exogeneity is a strong assumption, but allows us to approximate the unknown policy functions using sample averages. 

The literature cited in Section \ref{sec:motivation_examples} provides two separate motivations for assuming that the network is exogenous. One motivation comes from the random assignment of network connections in an experiment. An example of this is \cite{azoulay2010superstar}. Another motivation comes from correct model specification, where the researcher accounts for all of the relevant determinants of the outcome on the right-hand side of their model. An example of this is \cite{calvo2009peer}. Our Assumption  \ref{ass:sampling} is consistent with both motivations. 

The study of endogenous rooted networks where the policy $(\mathbf{T},D)$ is potentially related to the errors $\textbf{U}$ is left to future work. If the policy maker is interested in identifying a conditional average treatment effect as described in the discussion following the definition of $h(g)$ in Section \ref{sec:parameters}, then Assumption \ref{ass:sampling} (iii) could be modified to a conditional exogeneity assumption where the structural errors are independent of the effective treatments conditional on the relevant part of the rooted network.

\subsection{Estimating Policy Effects}\label{sec:est_asf}
The policy maker begins with a status-quo policy described by a treatment and network pair $(\mathbf{t}, d)$, and proposes an alternative $(\mathbf{t}', d')$. The researcher is tasked with estimating the expected effect of the policy change for an agent whose effective treatment under the status-quo is described by the rooted network $g \in \mathcal{G}$ and whose effective treatment under the alternative is described by the rooted network $g' \in \mathcal{G}$. Following Section 3.2, the potential outcomes under policies $g$ and $g'$ are described by $h(g,U)$ and $ h(g',U)$ respectively for some error $U$. The object of interest is the policy effect given by 
\[h(g') - h(g)\]
where $h(g) = E\left[h(g,U)\right]$. 

We provide two examples illustrating how this policy effect could be used in practice.
\begin{example}
(Cluster-randomized experiment): An individualized binary treatment is randomly assigned at the cluster-level (so that every member of the cluster is assigned the same treatment). In this case, we take $g'$ to be the rooted network of a given individual in a community when their cluster is treated, and $g$ to be the counterfactual rooted network with the same agents and connections but with none of the agents treated. The policy effect is then the difference in expected outcomes for the given individual under the two rooted network structures. 
\end{example}

\begin{example}
(Removal of an agent): A central agent is removed from their community. In this case, we take $g'$ to be the rooted network of a remaining individual in the community with the central agent removed, and $g$ to be the counterfactual rooted network that existed before the central agent was removed. The policy effect is then the difference in expected outcomes for the remaining individual under the two rooted network structures. 
\end{example}

The proposed estimator for the policy effect is described in Section \ref{sec:estimator}. In words, it compares the average outcome of the $k$ agents whose rooted networks are most similar to $g$ to the analogous average for $g'$, using the network distance $d$. 



We impose smoothness conditions on the model parameters and bound the variance of the outcome. We do not believe these assumptions to be restrictive in practice. 

\begin{assumption}\label{ass:no_ties}
For $g_{0} \in \{g,g'\}$, $\psi_{g_0}(\ell) := P\left(\min_{i \in [m_c]} d(G_{ic},g_{0}) \le \ell\right)$ is continuous at every $\ell \ge 0$.
\end{assumption}

The function $\psi_{g}(\ell)$ measures the probability that there exists an agent from a randomly drawn community whose rooted network is within $\ell$ of $g$. Assumption \ref{ass:no_ties} states that $\psi_{g}(\ell)$ and $\psi_{g'}(\ell)$ are continuous in $\ell$. It justifies a probability integral transform used to characterize the bias of the estimator. The assumption can be guaranteed by adding a randomizing component to the metric \citep[see the discussion following equation (19) in][]{gyorfi2020universal}, which allows for $G_{ic}$ to be discrete.

\begin{assumption}\label{ass:m_smooth}
For $g_{0} \in \{g,g'\}$ there exists an increasing function $\phi_{g_{0}}: \mathbb{R}_{+} \to \mathbb{R}_{+}$ such that  $\phi_{g_{0}}(x) \rightarrow \phi_{g_{0}}(0) = 0$ as $x \rightarrow 0$ and for every $\tilde{g} \in \mathcal{G}$ 
\[\left|h(g_{0}) - h(\tilde{g})\right| \le \phi_{g_{0}}\left(d(g_{0},\tilde{g})\right)~.\]
\end{assumption}

Assumption \ref{ass:m_smooth} states that $h$ has a modulus of continuity $\phi_{g_0}$ at $g_0$. Such a smoothness condition is standard in the nonparametric estimation literature. It is not generally testable. A model of network interference implies a specific choice of $\phi_{g}$. The three examples of Section \ref{sec:motivation_examples} all satisfy the assumption with $\phi_{g_0}(x) = C^{(x-1)/x}$ for some $C > 1$ potentially depending on parameters of the model but not $g_0$. See Appendix Section \ref{sec:bias_disc} for details. 

\begin{assumption}\label{ass:var}
For every $\tilde{g} \in \mathcal{G}$,  $E\left[h(\tilde{g},U_{ic})^2\right] \leq \bar{\sigma}^2$ for some $\bar{\sigma}^2 > 0$.
\end{assumption}
Assumption \ref{ass:var} bounds the variance of the outcome variable. It is also standard in the nonparametric estimation literature. 
 
\subsubsection{Estimator}\label{sec:estimator}
Let $D_c(g) := \min_{i \in [m_c]}d(G_{ic},g)$ be the distance of the closest rooted network to $g$ in network $c$, and let $\mathcal{I}_c(g) := \arg\min_{i \in [m_c]} d(G_{ic},g)$ be the corresponding set of roots of the rooted networks that achieve this minimum. For each $c \in [C]$, let $\bar{Y}_c(g) := \frac{1}{|\mathcal{I}_c(g)|}\sum_{i \in \mathcal{I}_c(g)}Y_{ic}$ denote the average outcome in $\mathcal{I}_c(g)$. Order $\{\bar{Y}_c(g)\}_{c \in [C]}$ to be increasing in $D_c(g)$ (where ties are broken uniformly at random: see Assumption \ref{ass:no_ties} above and the corresponding discussion). Let the reordered data sequence be given by $(D^*_1(g),\bar{Y}^*_1(g)),(D^*_2(g),\bar{Y}^*_2(g)), \ldots, (D^*_C(g),\bar{Y}^*_C(g))$. The proposed estimator for $h(g)$ is then the average of $\{\bar{Y}^*_k(g)\}_{1 \le j \le k}$ for some $1 \le k \le C$:
\[\hat{h}(g) := \frac{1}{k}\sum_{j = 1}^k \bar{Y}^{*}_{j}(g)\]
and the analogous estimator for $h(g') - h(g)$ is
\[\hat{h}(g') - \hat{h}(g) := \frac{1}{k}\sum_{j = 1}^k \left(\bar{Y}^{*}_{j}(g')-\bar{Y}^{*}_{j}(g)\right).\]

\subsubsection{Bound on estimation error}\label{sec:mse_bound}
We derive a finite-sample bound on the mean-squared error of $\hat{h}(g)$ building on work by \cite{biau2015lectures,doring2017rate,gyorfi2020universal}. 
 
\begin{theorem}\label{thm:mse_bound}
Under Assumptions \ref{ass:sampling}, \ref{ass:no_ties}, \ref{ass:m_smooth}, and \ref{ass:var},
\[E\left[\left(\hat{h}(g) - h(g)\right)^2\right] \le \frac{\bar{\sigma}^2}{k} + E\left[\varphi_g(U_{(k,C)})^2\right]~,\]
where $\varphi_g(x) = \phi_g \circ \psi_g^{\dagger}(x)$, $\psi_g^{\dagger}:[0,1] \rightarrow \mathbb{R}_{+}$ refers to the upper generalized inverse
\[\psi_g^{\dagger}(x) = \sup\{\ell \in \mathbb{R}_{+}: \psi_g(\ell) \le x\}~,\] and $U_{(k,C)}$ is distributed $Beta(k,C-k+1)$. 
\end{theorem}

The bound in Theorem \ref{thm:mse_bound} features a familiar bias-variance decomposition. It also establishes the consistency of $\hat{h}(g)$ in well-behaved settings. For example if (for $g$ fixed) $\varphi_g(\cdot)$ is bounded, continuous at zero, $\varphi_g(0) = 0$, and $k \rightarrow \infty$, $k/C \rightarrow 0$ as $C \rightarrow \infty$ then 
\[\bar{\sigma}^2/k \rightarrow 0 \text{ and } E[\varphi_g(U_{(k,C)})^2] = E\left[\left(\phi_g \circ \psi_g^{\dagger}(U_{(k,C)})\right)^2\right] \rightarrow 0~.\]
The variance component $\bar{\sigma}^2/k$ is standard and decreases as $k$ grows large. In contrast, the bias component $E[\varphi_g(U_{(k,C)})^2]$ and its relationship with $k$ and $C$ are difficult to characterize without further information about $\phi_{g}$ and $\psi_{g}$. Intuitively, the first controls the smoothness of the regression function $h(g)$ and the second controls the proximity of the nearest-neighbors that make up $\hat{h}(g)$ in terms of proximity to $g$. We provide further discussion of how the bound depends on features of these parameters in Appendix \ref{sec:bias_disc}. We note that the bound remains valid even if we do not observe rooted networks that are arbitrarily close to $g$ as $C$ grows; in this case, we would not expect $\hat{h}(g)$ to be consistent unless $\phi_g(\cdot)$ is exactly zero outside of a neighborhood of $g$. Theorem \ref{thm:mse_bound} has the immediate corollary
\begin{corollary}\label{cor:mse_bound}
Suppose the hypothesis of Theorem \ref{thm:mse_bound}. Then
\[E\left[\left(\hat{h}(g')-\hat{h}(g) - (h(g')-h(g))\right)^2\right] \le \frac{4\bar{\sigma}^2}{k} + 4E\left[\varphi_{g\vee g'}(U_{(k,C)})^2\right]\]
where $\varphi_{g\vee g'} := \varphi_{g}\vee \varphi_{g'}$. 
\end{corollary} 



\subsubsection{Limiting Distribution of $\hat{h}(g) - \hat{h}(g')$}
We derive the limiting distribution of $\hat{h}(g)$ (and $\hat{h}(g) - \hat{h}(g')$) under specific assumptions on the rate of growth of the nearest-neighbors parameter $k$ relative to $C$.
We impose the following additional assumptions on the data generating process:
\begin{assumption}\label{ass:pos_mass}
For any $\ell > 0$ and $g_{0} \in \{g,g'\}$, $\psi_{g_{0}}(\ell) := P\left(\min_{i \in [m_c]} d(G_{ic}, g_{0}) \le \ell\right) > 0$. 
\end{assumption}

Recall that $\psi_g(\ell)$ measures the probability that there exists an agent from a randomly drawn community whose rooted network is within distance $\ell$ of $g$ under $d$. Assumption \ref{ass:pos_mass} states that for a randomly drawn community there exists a rooted network within distance $\ell$ of $g$ or $g'$ with positive probability. It implies that as the number of communities $C$ grows, the researcher will eventually observe rooted networks that are arbitrarily close to $g$ or $g'$. 

\begin{assumption}\label{ass:normal_regular} For $g_0$ in $\{g, g'\}$, the conditional means
\[E[\bar{Y}_c(g_0)|D_c(g_0)=d] \text{ and } E[\bar{Y}_c(g_0)^2|D_c(g_0) = d]\]
are are continuous in a neighborhood of $d=0$. The conditional fourth moment of $\bar{Y}_c(g_0)$ is uniformly bounded, i.e. 
\[E[\bar{Y}_c(g_0)^4|D_c(g_0) = d] \le M~,\]
for some $M < \infty$, for all $d$.
\end{assumption}
Assumption \ref{ass:normal_regular} imposes some additional sufficient continuity to guarantee that the conditional variance $\text{var}(\bar{Y}_c(g_0)|D_c(g)=d)$ is well-behaved as a function of $d$. The bound on the conditional fourth moments of $\bar{Y}_c(g_0)$ could be reasonably weakened at the cost of introducing more complicated arguments. We present the limiting distribution of $\hat{h}(g)$ as the number of nearest neighbors $k$ grows large at an appropriate rate.

\begin{theorem}\label{thm:normal}
Maintain Assumptions \ref{ass:sampling} -- \ref{ass:normal_regular}. Further suppose that $k \rightarrow \infty$ as $k/C \rightarrow 0$ and that $kE[\varphi_g(U_{(k,C)})^2] \rightarrow 0$. Then
\[\sqrt{k}(\hat{h}(g) - h(g)) \xrightarrow{d} N(0, \sigma^2_g(0))~,\]
where $\sigma^2_g(d) := \text{var}(\bar{Y}_c(g)|D_c(g) = d)$.
\end{theorem}

We also obtain the following immediate corollary for the limiting distribution of $\hat{h}(g) - \hat{h}(g')$ when the estimators $\hat{h}(g)$ and $\hat{h}(g')$ are computed from disjoint subsamples of $\{W_c\}_{c \in [C]}$: 

\begin{corollary}\label{cor:normal}
Consider a partition of the data $\{W_c\}_{c \in [C]}$ into two disjoint sets labelled $\mathcal{W}_1$ and $\mathcal{W}_2$ of size $C_1$ and $C_2$, respectively. Let $\hat{h}(g)$ be the $k$-nearest neighbors estimator of $h(g)$ computed on $\mathcal{W}_1$ and let $\hat{h}(g')$ be the $k$-nearest neighbors estimator of $h(g')$ computed on $\mathcal{W}_2$. Maintain Assumptions \ref{ass:sampling} -- \ref{ass:normal_regular}. Further suppose that $k \rightarrow \infty$ as $k/\min\{C_1,C_2\} \rightarrow 0$ and that $kE[\varphi_{g \vee g'}(U_{k,C})^2] \rightarrow 0$. Then
\[\sqrt{k}(\hat{h}(g) - \hat{h}(g') - (h(g) - h(g'))) \xrightarrow{d} N(0, \sigma^2_g(0) + \sigma^2_{g'}(0))~.\]
\end{corollary}
To perform inference on $h(g)$ (or $h(g) - h(g')$) using these results, an estimator of the asymptotic variance $\sigma^2_{g_0}(0)$ for $g_0 \in \{g,g'\}$ can be constructed as
\[\hat{\sigma}^2_{g_0} := \frac{1}{k}\sum_{j  = 1}^k(\bar{Y}_j^*(g_0) - \hat{h}(g_0))^2~.\]
Theorem \ref{thm:variance} establishes conditions under which this estimator is consistent. 
\begin{theorem}\label{thm:variance}
Maintain Assumptions \ref{ass:sampling} -- \ref{ass:normal_regular}. Further suppose that $k \rightarrow \infty$ as $k/C \rightarrow 0$, then for $g_0 \in \{g,g'\}$
\[\hat{\sigma}^2_{g_0} \xrightarrow{p} \sigma^2_{g_0}(0)~.\]
\end{theorem}

Combining Theorem \ref{thm:normal} (or Corollary \ref{cor:normal})  with Theorem \ref{thm:variance} establishes the asymptotic validity of inferences based on these estimators. 

\begin{remark}\label{rem:undersmooth}
Our results are derived under the common assumption that $k$ grows at such a rate that we achieve ``under-smoothing."  Intuitively, this means that $k$ must grow sufficiently slowly to guarantee that the bias induced by choosing nearest neighbors ``far away" from $g$ is not too large. Although this condition may be straightforward to satisfy in certain special cases (for instance, if the smoothness function $\phi_g(\cdot)$ is suspected to decay very quickly to zero), in general it introduces a tension in the choice of $k$ when using Theorem \ref{thm:normal} to conduct inference on $h(g)$; we want $k$ to be large enough to guarantee that the normal approximation is appropriate, while being small enough to ensure that we do not introduce an asymptotic bias into the limiting distribution. We could consider procedures which explicitly account for this asymptotic bias \citep[see for instance][]{armstrong2020simple}. However, we leave the study of appropriate modifications of these approaches to our setting to future work. Instead, in Section \ref{sec:dist_test} we propose a test for a closely related but distinct hypothesis of policy irrelevance, which states that the rooted networks associated with two policies generate the same distribution of outcomes. As we will show, the gain from testing this stronger null hypothesis is that the resulting test will be asymptotically valid even when the number of nearest neighbors is held fixed in the limit.
\end{remark}

\begin{remark}
Our results concern the policy effect $h(g') - h(g)$ where $g'$ and $g$ are both specific elements of $\mathcal{G}$. In many settings, however, the policies might refer to a collection of networks $\mathcal{G}_1$ and $\mathcal{G}_2$ where $\mathcal{G}_1, \mathcal{G}_2 \subset \mathcal{G}$, i.e. the parameter of interest is $E[h(G_i,U_i)|G_i \in \mathcal{G}_1] - E[h(G_i,U_i)|G_i \in \mathcal{G}_2]$. For example, $\mathcal{G}_1$ may be the set of rooted networks that are observed in particular community, or the set of rooted networks where exactly half of the root's neighbors are treated. It should be possible to extend our methodology to these cases by first estimating the conditional expectation of the outcome with respect to each rooted network in each collection, and then taking a weighted average of the results to estimate $E[h(G_i,U_i)|G_i \in \mathcal{G}_1]$ and $E[h(G_i,U_i)|G_i \in \mathcal{G}_2]$. Following the literature on two-step semiparametric estimation \citep[see, broadly, ][]{powell1994estimation}, it is likely that, under certain conditions, this double averaging would lead to faster rates of convergence than those described in Theorems 4.1 and 4.2 above.  
\end{remark}

\subsection{Testing policy irrelevance}\label{sec:dist_test}
The policy maker begins with a status-quo community policy described by a treatment and network pair $(\mathbf{t}, d)$, and proposes an alternative $(\mathbf{t}', d')$. The researcher is tasked with testing whether the two policies are associated with the same distribution of outcomes for an agent whose effective treatment under the status-quo is described by the rooted network $g \in \mathcal{G}$ and whose effective treatment under the alternative is described by the rooted network $g' \in \mathcal{G}$. One can extend our procedure to test policy irrelevance for multiple agents via a multiple testing procedure.

Following Section 3.2, the potential outcomes under $g$ and $g'$ are given by $ h(g,U)$ and $ h(g',U)$ respectively for some error $U$. The hypothesis of policy irrelevance is 
\begin{equation}\label{eq:null}
H_0: h_{y}(g) = h_{y}(g') \text{ for every } y \in \mathbb{R}
\end{equation}
where $h_{y}(g) := E\left[\mathbbm{1}\{h(g,U)\}\leq y\right]$. Under Assumption 4.1 (iii), $h_{y}(g)$ describes the conditional distribution of $Y_{i}$ given $G_{i} \simeq g$. 

We revisit the two examples from Section 4.2 to illustrate how this hypothesis test may be used in practice
\begin{example}
(Cluster-randomized experiment, continued): Recall that in this example an individualized binary treatment is randomly assigned at the cluster-level, $g'$ is the rooted network of a given individual in a community when their cluster is treated, and $g$ is the counterfactual rooted network when their cluster is not treated. The hypothesis of policy irrelevance is that the distribution of outcomes associated with both rooted networks is the same. A test of this hypothesis may be used to establish that the effect of the treatment is statistically significant: the observed outcomes cannot be explained solely by variation in the unobserved heterogeneity terms.  
\end{example}

\begin{example}
(Removal of an agent, continued): Recall that in this example a central agent is removed from their community, $g'$ is the rooted network of a remaining individual in the community with the central agent is removed, and $g$ is the counterfactual rooted network that existed before the central agent was removed. The hypothesis of policy irrelevance is that the distribution of outcomes associated with both rooted networks is the same. A test of this hypothesis may be used to establish that the effect of removing the central agent is statistically significant: the observed outcomes cannot be explained solely by variation in the unobserved heterogeneity terms.  
\end{example}

Our test procedure is described in Section \ref{sec:algorithm}. Intuitively, it compares the empirical distribution of outcomes for agents in the data whose rooted networks are most similar to $g$ to the analogous distribution for $g'$. Asymptotic validity follows after imposing an additional continuity assumption on $h_y(g)$:

\begin{assumption}\label{ass:dist_cont}
For every $\tilde{g} \in \mathcal{G}$, the distribution of $h(\tilde{g},U)$ is either continuous or discrete with finite support. For every $y \in \mathbb{R}$, $h_{y}(\tilde{g})$ is continuous in $\tilde{g}$. 
\end{assumption}
Assumption \ref{ass:dist_cont} states that the distribution of outcomes associated with agents whose rooted networks are close to $\tilde{g}$ approximate the distribution of outcomes at $\tilde{g}$. These continuity assumptions are satisfied by the three examples of Section \ref{sec:motivation_examples}. 



\subsubsection{Test procedure}\label{sec:algorithm}

We propose an approximate permutation test of $H_{0}$ building on \cite{canay2017randomization, canay2018approximate}. The test procedure is described in Algorithm 4.1. We assume that $C$ is even to simplify notation. When determining the closest agent in Step 1 or reordering the vectors in Step 2, ties are broken uniformly at random. 

\begin{algorithm}\label{algo:test}{Input: data $\{W_{c}\}_{c \in [C]}$ partitioned into two disjoint sets of of size $C/2$ labelled $\mathcal{W}_1$ and $\mathcal{W}_2$, and parameters $q \le C/2$, $\alpha \in [0,1]$. Output: a rejection decision.}
\begin{itemize}


\item[1.] For every $c \in \mathcal{W}_1$, let $i_{c}(g) := \text{argmin}_{i \in [m_c]}\{d(G_{ic},g)\}$ be the agent in $c$ whose rooted network $G_{ic}$ is closest to $g$ and $W_{c}(g) := (Y_{c}(g),G_{c}(g)) := (Y_{i_{c}(g)c},G_{i_{c}(g)c})$ be $i_{c}(g)$'s outcome and rooted network. Similarly define  $W_{c}(g') := (Y_{c}(g'),G_{c}(g'))$ for every $c \in \mathcal{W}_2$. 

\item[2.] Reorder $\{W_{c}(g)\}_{c \in \mathcal{W}_1}$ and $\{W_{c}(g')\}_{c \in \mathcal{W}_2}$ so that the entries are increasing in $d(G_c(g),g)$ and $d(G_c(g'),g')$ respectively. Denote the first $q$ elements of the reordered $\{W_c(g)\}_{c \in \mathcal{W}_1}$ 
\[W^*(g) := (W^*_1(g), W^*_2(g), \ldots, W^*_q(g))~,\]
where $W^*_c(g) = (Y^*_c(g), G^*_c(g))$. Similarly define $W^*(g')$. Collect the $2q$ outcomes of $W^*(g)$ and $W^*(g')$ into the vector
\[S_C := (S_{C,1}, ..., S_{C,2q}) := \left(Y^{*}_{1}(g), \ldots, Y^{*}_{q}(g), Y^{*}_{1}(g'), \ldots, Y^{*}_{q}(g')\right)~.\] 
\item[3.] Define the Cramer von Mises test statistic 
\[R(S_C) = \frac{1}{2q}\sum_{j=1}^{2q}\left(\hat{F}_{1}(S_{C,j}; S_C) - \hat{F}_{2}(S_{C,j}; S_C)\right)^2~,\]
where \[\hat{F}_{1}(y; S_C) = \frac{1}{q}\sum_{j = 1}^q \mathbbm{1}\{S_{C,j} \le y\} \hspace{2mm} \text{and} \hspace{2mm}  \hat{F}_{2}(y; S_C) = \frac{1}{q}\sum_{j = q + 1}^{2q} \mathbbm{1}\{S_{C,j} \le y\}~.\]
\item[4.] Let $\mathbf{H}$ be the set of all permutations $\pi = (\pi(1), \ldots, \pi(2q))$ of $\{1, ..., 2q\}$ and 
\[S_C^{\pi} = (S_{C,\pi(1)}, ..., S_{C,\pi(2q)})~.\]
Reject $H_{0}$ if the $p$-value  $p \le \alpha$ where $p := \frac{1}{|\mathbf{H}|}\sum_{\pi \in \mathbf{H}} \mathbbm{1}\{R(S_C^{\pi}) \ge R(S_C)\}$.
\end{itemize}
\end{algorithm}


The $p$-value in Step 4 may be difficult to compute when $\mathbf{H}$ is large. Theorem \ref{thm:perm_cest} below continues to hold if $\mathbf{H}$ is replaced by $\mathbf{\hat{H}}$ where $\mathbf{\hat{H}} = \{\pi_1, ..., \pi_B\}$, $\pi_1$ is the identity permutation and $\pi_2, ..., \pi_B$ are drawn independently and uniformly at random from $\mathbf{H}$. Such sampling is standard in the literature, see also \cite{canay2018approximate}, Remark 3.2.

In practice we partition $\{W_c\}_{c \in [C]}$ into two disjoint sets iteratively in the following way. We first select the community which contains the observation closest to $g$, add it to $\mathcal{W}_1$, and remove it from the pool of candidate communities. We then select the community which contains the observation closest to $g'$, add it to $\mathcal{W}_2$, and remove it from the pool of candidate communities. We continue this process, alternating between $g$ and $g'$, until the pool of candidate communities is exhausted.

Since ties are broken uniformly at random when determining the closest agent in Step 1 or reordering the vectors in Step 2, it is likely that our procedure introduces uncertainty in the $p$-value. For this reason, we recommend that practitioners assess the sensitivity of these choices on the resulting $p$-value. Practitioners could also consider formally aggregating several $p$-values obtained from different choices in Steps 1 and 2 using the methods outlined in Section 2.1 of \cite{diciccio2020exact}.

The test presented in Algorithm \ref{algo:test} is non-randomized in the sense that, once the collection $S_C$ is selected, the decision to reject the null hypothesis is a deterministic function of the data. This leads to a test which is potentially conservative. One could alternatively consider a non-conservative version of this test which is randomized. See \cite{lehmann2006testing}, Section 15.2.

\subsubsection{Asymptotic validity}
If the entries of $Y^*(g) = \left(Y^{*}_{1}(g), \ldots, Y^{*}_{q}(g)\right)$ and $Y^*(g') = \left(Y^{*}_{1}(g'), \ldots, Y^{*}_{q}(g')\right)$ were identically distributed to $h(g,U)$ and $h(g',U)$ respectively, then the test described in Algorithm 4.1 would control size in finite samples following standard arguments \citep[e.g.][Theorem 15.2.1]{lehmann2006testing}. However, since the rooted networks corresponding to $Y^*_j(g)$ and $Y^*_j(g')$ are not exactly $g$ and $g'$, such an argument can not be directly applied. 

Our assumptions instead imply that in an asymptotic regime where $q$ is fixed and $C\to\infty$, the entries of $Y^{*}(g)$ and $Y^{*}(g')$ are approximately equal in distribution to $h(g,U)$ and $h(g',U)$. A fixed $q$ rule is appropriate in our setting because the quality of the nearest neighbors to $g$ or $g'$ may degrade rapidly with $q$. Our simulation evidence in Appendix \ref{sec:simulations} suggests that setting $q = 2\lfloor \log (C) \rfloor$ works well in practice. 

We demonstrate that the test procedure in Algorithm 4.1 is asymptotically valid building on work by \cite{canay2017randomization,canay2018approximate}. Finite-sample behavior is examined via simulation in Appendix \ref{sec:simulations}. 



\begin{theorem}\label{thm:perm_cest}
Under Assumptions \ref{ass:sampling}, \ref{ass:pos_mass} and \ref{ass:dist_cont}, the test described in Algorithm  \ref{algo:test} is asymptotically ($C\to\infty$, $q$ fixed) level $\alpha$. 
\end{theorem}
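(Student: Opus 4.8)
The plan is to recognize the procedure as an approximate randomization (permutation) test in the sense of \cite{canay2017randomization, canay2018approximate} and to verify their high-level conditions. It suffices to establish two facts: (a) under $H_0$ the vector of selected outcomes $S_C$ converges in distribution, as $C \to \infty$ with $q$ fixed, to a limit $S$ whose law is invariant under the group $\mathbf{G}$ of coordinate permutations; and (b) the relevant functionals of the Cramer--von Mises statistic $R$ are suitably continuous at $S$. Given these, the exact validity of the randomization test for an exchangeable $S$ (\cite{lehmann2006testing}, Theorem 15.2.1) transfers to the feasible test via a Portmanteau/continuous-mapping argument, yielding $\limsup_{C\to\infty} P(p \le \alpha) \le \alpha$, which is the claimed asymptotic level.

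First I would show that the rooted networks carried by the selected outcomes converge to their targets. Fix $g_0 \in \{g,g'\}$ and let $M_C$ be the distance to $g_0$ of the $q$-th closest selected agent. For any $\ell > 0$, by Assumption \ref{ass:sampling}(i) the number of communities in the i.i.d.\ sample possessing an agent within distance $\ell$ of $g_0$ is a binomial count with success probability $\psi_{g_0}(\ell)$, which is strictly positive by Assumption \ref{ass:pos_mass}; this count diverges in probability, so with probability tending to one at least $q$ such communities exist and are assigned by the partition rule to the half-sample associated with $g_0$. Hence $M_C \xrightarrow{p} 0$ and every selected rooted network $G^*_j(g_0)$ converges in probability to $g_0$ under $d$. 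The vanishing-probability event that one community is simultaneously closest to both $g$ and $g'$ can be discarded.

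The main step, and the chief obstacle, is to translate convergence of the rooted networks into convergence in distribution of the outcomes. Conditioning on the entire array of rooted networks and invoking exogeneity (Assumption \ref{ass:sampling}(iii)), each selected outcome $Y^*_j(g_0)$ has conditional distribution function $h_y(G^*_j(g_0))$, while outcomes drawn from distinct communities are conditionally independent by Assumption \ref{ass:sampling}(i). Continuity of $h_y(\cdot)$ in its rooted-network argument (Assumption \ref{ass:dist_cont}) together with $G^*_j(g_0)\xrightarrow{p} g_0$ forces these conditional laws to converge to that of $h(g_0,U)$ at every continuity point $y$; integrating out the networks and using the cross-community independence shows that $S_C$ converges in distribution to a vector $S$ with $2q$ independent coordinates, the first $q$ distributed as $h(g,U)$ and the last $q$ as $h(g',U)$, the two blocks independent because they are built from the disjoint half-samples $\mathcal{W}_1$ and $\mathcal{W}_2$. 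Under $H_0$, $h_y(g)=h_y(g')$ for all $y$, so the $2q$ coordinates of $S$ are i.i.d.\ from a common law and the distribution of $S$ is exchangeable, i.e.\ invariant under every $\pi \in \mathbf{G}$. The delicate point is disentangling the dependence that the nearest-neighbor selection induces between which outcomes are chosen and their realized values; this is precisely what the conditional formulation of Assumption \ref{ass:sampling}(iii) is designed to control.

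Finally I would verify the continuity condition and conclude. When the distribution of $h(g,U)$ is continuous, the coordinates of $S$ are almost surely distinct, so $R$, a fixed function of the $2q$ values through their two empirical distribution functions, is continuous at $S$; the continuous mapping theorem then yields joint convergence of $(R(S_C^{\pi}))_{\pi\in\mathbf{G}}$, and the approximate-symmetry theorem of \cite{canay2017randomization} gives $\limsup_C P(p\le\alpha)\le\alpha$. The discrete-with-finite-support case in Assumption \ref{ass:dist_cont} is the reason that assumption is stated as a dichotomy: there the limiting randomization distribution of $R$ is a tractable discrete object, the idealized test on the exchangeable $S$ remains exactly level $\alpha$ (\cite{lehmann2006testing}, Theorem 15.2.1), and the same Portmanteau argument, applied at the continuity points of the limiting $p$-value functional, transfers conservativeness to the feasible test. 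In either case the test is asymptotically level $\alpha$.
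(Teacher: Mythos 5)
Your proposal is correct and follows essentially the same route as the paper: it verifies the high-level conditions of \cite{canay2018approximate} by (i) using Assumptions \ref{ass:sampling}(i) and \ref{ass:pos_mass} with a law-of-large-numbers argument to show the selected rooted networks converge in probability to $g$ and $g'$, (ii) exploiting the conditional-independence/exogeneity structure of Assumption \ref{ass:sampling}(iii) so that each selected outcome has conditional law $h_y(G^*_j(\cdot))$, and (iii) invoking Assumption \ref{ass:dist_cont} and dominated convergence to conclude $S_C \xrightarrow{d} S$ with i.i.d.\ (hence exchangeable) coordinates under $H_0$. This matches the paper's Lemmas \ref{lem:order_OK} and \ref{lem:verify_perm} and its handling of the continuous versus discrete cases.
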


\section{Empirical illustration}\label{sec:empirical_illustration}
We illustrate our local approach with a study of the influence of network structure on favor exchange in the  setting of \cite{jackson2012social}. In their work, \cite{jackson2012social} specify an equilibrium model of favor exchange where one agent is willing to perform a favor for another if there is a third agent who has a social connection to both agents and can monitor the exchange. Intuitively, the social norm is for agents to be altruistic and provide favors for each other and the third agent exerts social pressure on the other two agents to conform to this norm. Without this social pressure, agents have an incentive to ignore the social norm and not provide favors for each other. When such a norm enforcing third agent exists they say that the favor exchange relationship between the first two agents is supported. That is, formally, the relationship between $i$ and $j$ is supported if and only if there exists a $k \in \mathcal{I}$ such that $D_{ik} = D_{jk} = 1$.\footnote{ Strictly speaking, \cite{jackson2012social} only define support for pairs of agents that exchange favors. That is, they say that the relationship between two agents is supported if they exchange favors and their relationship is monitored. We instead say that a relationship is supported if it is monitored, regardless of whether or not the relevant agents exchange favors.  } The notion of support is extended to agents by counting the number of supported links adjacent to that agent and we refer to this statistic as agent support. That is, formally, agent $i$'s support is given by $\sum_{j\neq i}\mathbbm{1}\{\sum_{k}D_{ik}D_{jk} > 0\}$. 

Support contrasts alternative measures of social cohesion such as clustering. An agent's clustering coefficient measures the fraction of agent pairs connected to the agent that are also connected to each other, i.e. $\left(\sum_{i \neq j \neq k}D_{ij}D_{ik}D_{jk}\right)/\left(\sum_{i \neq j \neq k}D_{ij}D_{ik} \vee 1\right)$. One might think that high levels of agent clustering also drives favor exchange between agents. However, in the \cite{jackson2012social} model, the two are unrelated conditional on support. Intuitively, the decision for agents to exchange favors is made on the extensive margin in that it only depends on whether or not their relationship is monitored. Adding additional monitors may increase the level of clustering, but under the theory does not impact favor exchange all else equal. The authors take the existence of low-clustering high-support favor exchange networks in real-world network data as corroborating evidence for their model. 

We apply our local approach to evaluate the influence of support and clustering on the number of favors exchanged directly. There is no treatment assigned in this illustration, so we take $\mathcal{T}$ to be the empty set. Instead, we compare the average outcomes for collections of agents whose rooted networks are isomorphic to certain fixed configurations with various amounts of support and clustering described in Figure 2. The advantage of our approach relative to the empirical analysis of  \cite{jackson2012social} is that we do not assume that the social network only impacts the amount of favors exchanged by an agent through the level of support and clustering. Instead, the outcomes can depend on any feature of the social connections, as determined by the agent's rooted network. Following \cite{jackson2012social}, we consider data from 75 rural villages in Karnakata, India and construct our favor exchange variable (the outcome) and monitoring networks as described in their footnote 39.\footnote{Specifically, to construct the monitoring network we use their hedonic network, which documents friends and visitors. To define the number of favors exchanged we use the number of households in the village to which they report borrowing or lending money or kerorice. The data was originally collected by \cite{banerjee2013diffusion} to study the diffusion of information about a microfinance program.} Summary statistics on the number of households in each village are provided in Table \ref{tab:summary_stats_app}.

To be sure, the social network links are not randomly assigned in this setting and our causal interpretation hinges on the validity of our Assumption 4.1. In particular, we assume that any other driver of favor exchange in the village are unrelated to the structure of the monitoring network. To the extent that this exogeneity assumption is violated, the associations inferred from our methodology may not be causal. 

Below we estimate a policy effect and test the hypothesis of policy irrelevance for two pairs of rooted networks describing three different ways that agents can monitor each other in the hedonic network. The networks we chose are given by the ``cutlery ensemble'' of rooted networks depicted in Figure \ref{fig:rooted_hedonic}. In the knife network, the root agent $\alpha$ has one supported relationship and a clustering coefficient of $0$. In the fork network, the root agent $\beta$ has two supported relationships and a clustering coefficient of $0$. In the spoon network, the root agent $\gamma$ has three supported relationships and a clustering coefficient of $1$. We chose these networks for our illustration because they are simple to describe and have varying degrees of support and clustering.


\begin{table}[htbp]
    \centering
    \caption{Summary statistics for the number of households per village}
    \begin{tabular}{cccccccc}
        \toprule
        Total Num. & Mean  & Min & 25\% Quantile & 50\% Quantile & 75\% Quantile & Max \\
        \midrule
        13405& 179 & 70 & 140 & 167 & 218 & 326\\
        \bottomrule
    \end{tabular}
    \label{tab:summary_stats_app}
\end{table}

\begin{figure}
    \centering
            \begin{tikzpicture}[->,>= stealth,shorten >=1pt,auto,node distance=1.3cm,
        thick,main node/.style={circle,fill=blue!20,draw,minimum size=.4cm,inner sep=0pt]}]
\raisebox{8.5mm}{
      \node[main node] (2) {};
    \node[main node] (6) [ left of =2] {};
    \node[draw, fill=blue!20, shape=diamond, aspect=0.7, minimum height=0.4cm, inner sep=0pt] (5) [left of=6] {$\alpha$};

    \path[-]
    (6) edge node {} (2)   
    (5) edge node {} (6);
    }
\end{tikzpicture}
\hspace{20mm}
        \begin{tikzpicture}[->,>= stealth,shorten >=1pt,auto,node distance=1.3cm,
        thick,main node/.style={circle,fill=blue!20,draw,minimum size=.4cm,inner sep=0pt]}]

      \node[main node] (2) {};
    \node[main node] (6) [ left of =2] {};
    \node[draw, fill=blue!20, shape=diamond, aspect=0.7, minimum height=0.4cm, inner sep=0pt] (5) [above left of=6] {$\beta$};
    \node[main node] (1) [below left of =6] {};

    \path[-]
    (6) edge node {} (2)   
    (1) edge node {} (6)
    (5) edge node {} (6);
\end{tikzpicture}
\hspace{20mm}
     \begin{tikzpicture}[->,>= stealth,shorten >=1pt,auto,node distance=1.3cm,
        thick,main node/.style={circle,fill=blue!20,draw,minimum size=.4cm,inner sep=0pt]}]

      \node[main node] (2) {};
    \node[draw, fill=blue!20, shape=diamond, aspect=0.7, minimum height=0.4cm, inner sep=0pt] (6) [ above left of =2] {$\gamma$};
    \node[main node] (5) [right of=2] {};
    \node[main node] (1) [ below left of =2] {};

    \path[-]
    (6) edge node {} (2)
        	edge node {} (1)
    (2) edge node {} (5)
    	edge node {} (1);
\end{tikzpicture}

\caption{ These three networks constitute the ``cutlery ensemble.'' We use them to describe three different ways that agents can monitor each other in a social network. The first network on the left is the ``knife'' network rooted at agent $\alpha$. The second network in the middle is the ``fork'' network rooted at agent $\beta$. The third network on the right is the ``spoon'' network rooted at the agent $\gamma$. }\label{fig:rooted_hedonic}
\end{figure}

Our first policy comparison is the level of favor exchange in the fork network to that of the knife network. Specifically, we compare the distributions of $Y_{\alpha}$ and $Y_{\beta}$ where the two random variables refer to the number of favors performed by $\alpha$ and $\beta$ respectively. Intuitively, this comparison measures the effect of taking $\alpha$'s community as given by the knife network, adding an additional fourth agent, and connecting them to the center agent of the knife. This policy change increases the support of the root agent from $1$ to $2$ but does not impact the clustering coefficient of the root. Our second policy comparison is the level of favor exchange in the spoon network to that of the fork network (i.e. we compare the distributions of $Y_{\beta}$ and $Y_{\gamma}$). Intuitively, this comparison measures the effect of taking $\beta$'s community as given by the fork network and adding a social connection between agent $\beta$ and the other ``prong'' of the fork directly below $\beta$. This policy change increases the support of the root agent from $2$ to $3$ and the clustering coefficient of the root agent from $0$ to $1$.  Table \ref{tab:est_ci} reports the point estimates and confidence intervals associated with the average differences in favors for these policy counterfactuals using our $k$-nearest neighbors estimator. We find that overall moving from fork to spoon leads to a positive and statistically significant increase in the number of favors, whereas this is not the case when moving from knife to fork.


\begin{table}[htbp]
    \centering
    \caption{Estimates and confidence intervals of treatment effects}
    \begin{tabular}{cccc}
        \toprule
        & & $ E[Y_\beta] -  E[Y_\alpha]$ & $ E[Y_\gamma] - E[Y_\beta]$\\
        \midrule
        \multirow{3}{*}{$k=10$} & Est.&  0.14 & 0.46\\
        & 95\% CI&[-0.46,0.75] & [-0.21,1.13] \\
        & 90\% CI&[-0.37,0.65] &[-0.10,1.02] \\
        \midrule
        \multirow{3}{*}{$k=20$} &Est. & 0.08& 0.58\\
        & 95\% CI &[-0.24,0.40] & [0.22, 0.95]\\
        & 90\% CI &[-0.19,0.35] &[0.28,0.89] \\
        \midrule
        \multirow{3}{*}{$k=30$} & Est.&0.04 &0.76 \\
        & 95\% CI &[-0.19,0.26] &[0.50,1.02] \\
        & 90\% CI &[-0.15,0.22] &[0.54,0.98] \\
        \bottomrule
    \end{tabular}
    \label{tab:est_ci}
\end{table}

We also test the hypotheses that $Y_{\alpha}$ and $Y_{\beta}$ are equal in distribution, and $Y_{\beta}$ and $Y_\gamma$ are equal in distribution, using the approximate randomization test described in Section \ref{sec:dist_test}. For the test of $Y_{\alpha} =_d Y_{\beta}$, we obtain p-values of $1.000, 0.680, 0.582$ for $q = 5, 10, 20$ respectively, which we interpret as providing no evidence against the hypothesis.  For the test of  $Y_{\beta} =_d Y_{\gamma}$, we obtain p-values of $0.065, 0.014, 0.049$ for $q = 5, 10, 20$ respectively, so that we have statistical significance at the $10\%$ level.\footnote{We note that, particularly in the case of the spoon network, there exist multiple rooted networks which are equidistant to the target rooted network both within and across villages; Figures \ref{fig:closest_knife_knife_vs_fork_jackson_testing}--\ref{fig:closest_spoon_fork_vs_spoon_jackson_testing} in Appendix \ref{sec:robust_app} provide depictions of these networks. As a consequence, the (arbitrary) selection of such rooted networks introduces additional randomness into the computation of the $p$-value. Appendix \ref{sec:robust_app} discusses the robustness of our results to these random selections.}

Ultimately, we view these empirical results as challenging the view that clustering does not play a role in favor exchange. This is because the first policy comparison (between knife and fork) constitutes an increase in support holding clustering fixed, but we found no evidence of a change in favor exchange. The second policy comparison (between spoon and fork) constitutes both an increase in support and clustering  and we did find evidence of an increase in favor exchange. Our conclusion is that there may still be some role for clustering to explain altruism in real-world networks.\footnote{We thank Ben Golub for helping us interpret these results in the context of \cite{jackson2012social}'s model.}

\section{Conclusion}\label{sec:conclusion}
This paper proposes a new nonparametric modeling framework for causal inference under network interference. Rooted networks serve the role of effective treatments in that they index the ways in which the treatment assignments and network structure can influence the agent outcomes. We demonstrate our approach with an estimation strategy for average or distributional policy effects, a test for the hypothesis of policy irrelevance, and an empirical illustration studying the effect of network structure on social capital formation. 

Much work remains to be done, as indicated in the discussions of the assumptions and results in the sections above. Other potential directions for future work include considering the problem of policy learning under network interference \citep[see for example][]{ananth2020optimal, viviano2019policy, kitagawa2020should}, applying the framework to identify and estimate the parameters of a strategic network formation model \cite[as in for instance][]{de2018identifying}, or semiparametrically estimating average treatment effects for binary treatment \cite[as in, for instance][]{abadie2006large}. Ultimately we see our work as one step in a direction allowing for the more flexible econometric modeling of sparse network structures. 

\bibliographystyle{aer}
\bibliography{literature}

\appendix

\section{Proof of claims}
We document the following lemma which is a special case of Proposition 8.1 in \cite{biau2015lectures}. It is used to demonstrate Theorems \ref{thm:mse_bound} and \ref{thm:normal}. 
\begin{lemma}\label{lem:order_properties}
Let $(X_1, Y_1), (X_2, Y_2), \ldots, (X_n, Y_n)$ be an i.i.d sample taking values in $\mathbb{R}^2$.  Denote by $(X^*_1, Y^*_1), \ldots, (X^*_n, Y^*_n)$ the re-ordering of the data according to increasing values of $X_i$.  Let $\xi(X,Y)$ be some function of $(X,Y)$ such that $E[|\xi(X,Y)|] < \infty$ and define $r(X) = E[\xi(X,Y)|X]$. Then, conditional on $X_1, \ldots, X_n$, the random variables
\[(X^*_1, Y^*_1), (X^*_2, Y^*_2)\ldots, (X^*_n, Y^*_n)~,\]
are independent. Moreover, for each $1 \le i \le n$, 
\[E[\xi(X^*_i,Y^*_i)|X_1, \ldots, X_n] = r(X^*_i)~.\]
\end{lemma}
\subsection{Properties of $(\mathcal{G}, d)$}\label{sec:G_props}
The results of this section consider a more general notion of rooted network space than given in Section \ref{sec:local_formal}. Specifically, each agent $i$ is associated with $P$ covariates $C_i \in \mathbb{R}^{P}$ where the $p$th entry of $C_{i}$ is given by $c_{ip}$ and $\bold{C} = \{C_{i}\}_{i \in \mathcal{I}}$. This is as opposed to Section \ref{sec:local_formal} where each agent $i$ is associated with just one covariate  $T_{i} \in \mathbb{R}$ called the treatment assignment. The definition of the $\varepsilon$-isomorphism in Section \ref{sec:local_formal} used to define the rooted network distance in this more general case is modified so that $|c_{jp} - c_{f(j)p}| < \epsilon$ for all $p \in [P]$. Otherwise the definition of the rooted network space $(\mathcal{G},d)$ is the same.

\begin{proposition} The rooted network distance $d$ is a pseudometric on the set of locally-finite rooted networks.
\end{proposition}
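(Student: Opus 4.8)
The plan is to verify the four defining properties of a pseudometric---nonnegativity, $d(G,G)=0$, symmetry, and the triangle inequality---treating the first three as routine and concentrating on the last. Nonnegativity is immediate since $\zeta(r) = (1+r)^{-1} \ge 0$, $\varepsilon > 0$, and the outer $\min$ with $2$ preserves nonnegativity. For $d(G_i, G_i) = 0$, I would note that the identity map on $V_i^r$ witnesses $G_i^r \simeq_\varepsilon G_i^r$ for every $r$ and every $\varepsilon > 0$, so the quantity $\zeta(r) + \varepsilon$ can be driven to $0$ by sending $r \to \infty$ and $\varepsilon \to 0^+$. This is also where I would flag why $d$ is only a \emph{pseudo}metric: $d(G_{i_1}, G_{i_2}) = 0$ forces the two rooted networks to be $0$-isomorphic but not literally identical, so distinct isomorphic representatives have zero distance.

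Symmetry follows from the symmetry of the relation $\simeq_\varepsilon$: if a bijection $f : V_{i_1}^r \leftrightarrow V_{i_2}^r$ witnesses $G_{i_1}^r \simeq_\varepsilon G_{i_2}^r$, then $f^{-1}$ witnesses $G_{i_2}^r \simeq_\varepsilon G_{i_1}^r$, since root-preservation, the edge condition $e_{jk} = e_{f(j)f(k)}$, and the treatment condition $|T_j - T_{f(j)}| \le \varepsilon$ are each invariant under passing to $f^{-1}$. Hence the feasible sets in the two infima coincide and $d(G_{i_1}, G_{i_2}) = d(G_{i_2}, G_{i_1})$.

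The main obstacle is the triangle inequality $d(G_{i_1}, G_{i_3}) \le d(G_{i_1}, G_{i_2}) + d(G_{i_2}, G_{i_3})$, where the difficulty is that the infimum trades off two coupled parameters $(r, \varepsilon)$ and that one must compose two \emph{approximate} isomorphisms. If the right-hand side is $\ge 2$ the inequality is trivial because $d \le 2$ by construction, so I would assume the sum is $< 2$, which forces both feasible sets to be nonempty. Fixing $\delta > 0$, I would choose near-optimizers $(r_1, \varepsilon_1)$ and $(r_2, \varepsilon_2)$ with $G_{i_1}^{r_1} \simeq_{\varepsilon_1} G_{i_2}^{r_1}$, $G_{i_2}^{r_2} \simeq_{\varepsilon_2} G_{i_3}^{r_2}$, whose values $\zeta(r_1)+\varepsilon_1$ and $\zeta(r_2)+\varepsilon_2$ lie within $\delta/2$ of the respective distances. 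Setting $r := \min\{r_1, r_2\}$, I would first observe that any witnessing bijection preserves path distance to the root (being root- and edge-preserving), hence restricts to a bijection of the radius-$r$ neighborhoods; this yields witnesses $f : V_{i_1}^r \leftrightarrow V_{i_2}^r$ of $G_{i_1}^r \simeq_{\varepsilon_1} G_{i_2}^r$ and $g : V_{i_2}^r \leftrightarrow V_{i_3}^r$ of $G_{i_2}^r \simeq_{\varepsilon_2} G_{i_3}^r$. The composition $g \circ f$ is a root- and edge-preserving bijection $V_{i_1}^r \leftrightarrow V_{i_3}^r$, and the triangle inequality for $|\cdot|$ gives $|T_j - T_{g(f(j))}| \le |T_j - T_{f(j)}| + |T_{f(j)} - T_{g(f(j))}| \le \varepsilon_1 + \varepsilon_2$, so $G_{i_1}^r \simeq_{\varepsilon_1 + \varepsilon_2} G_{i_3}^r$.

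Finally I would close the estimate using the subadditivity of $\zeta$ under the minimum: since $\zeta$ is decreasing and nonnegative, $\zeta(r) = \max\{\zeta(r_1), \zeta(r_2)\} \le \zeta(r_1) + \zeta(r_2)$, whence
\[
d(G_{i_1}, G_{i_3}) \le \zeta(r) + \varepsilon_1 + \varepsilon_2 \le (\zeta(r_1)+\varepsilon_1) + (\zeta(r_2)+\varepsilon_2) \le d(G_{i_1},G_{i_2}) + d(G_{i_2},G_{i_3}) + \delta.
\]
Letting $\delta \downarrow 0$ gives the triangle inequality. The only genuinely delicate points are verifying that a root- and edge-preserving bijection automatically respects the neighborhood filtration (so that truncation to radius $r$ and composition are legitimate) and the bookkeeping that keeps the two parameters aligned through the $\min$; everything else is mechanical.
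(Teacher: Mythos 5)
Your proposal is correct and follows essentially the same route as the paper: verify identity and symmetry directly, then prove the triangle inequality by taking near-optimizers $(r_1,\varepsilon_1)$ and $(r_2,\varepsilon_2)$, truncating to $r=\min\{r_1,r_2\}$, composing the two witnessing bijections, applying the triangle inequality for $|\cdot|$ to the treatments, and using $\zeta(\min\{r_1,r_2\})=\max\{\zeta(r_1),\zeta(r_2)\}\le\zeta(r_1)+\zeta(r_2)$. Your observation that a root- and edge-preserving bijection respects the neighborhood filtration is exactly the fact the paper packages as $V(g^{a})^{b}=V(g^{\min(a,b)})$ to make the composition well-defined.
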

\begin{proof} To be a pseudometric on the set of rooted networks, $d$ must satisfy (i) the identity condition $d(g,g) = 0$, (ii) the symmetry condition $d(g,g') = d(g',g)$, and (iii) the triangle inequality $d(g,g'') \leq d(g,g') + d(g',g'')$ for any three rooted networks $g,g',g''$. Conditions (i) and (ii) follow immediately from the definition of $d$. We demonstrate (iii) below. 

Fix an arbitrary $g,g',g''$ and denote the vertex sets, edge sets, and agent-specific covariates associated with $g$ by $(\mathcal{I}(g),D(g),\bold{C}(g))$. Suppose $d(g,g') + d(g',g'') = \eta$. Recall $\zeta(x) = (1+x)^{-1}$. Then by the definition of $d$, for every $\nu > 0$ there exist $\varepsilon,\varepsilon' \in \mathbb{R}_{+}$ and $r,r' \in \mathbb{Z}_{+}$  with  $\zeta(r) + \varepsilon + \zeta(r') + \varepsilon' < \eta + \nu$ such that for every $x, x' \in \mathbb{Z}_{+}$ there exists root-preserving bijections $f_x:\mathcal{I}(g^{r})^x\leftrightarrow \mathcal{I}({g'}^{r})^x$ and  $f'_{x'}:\mathcal{I}({g'}^{r'})^{x'}\leftrightarrow \mathcal{I}({g''}^{r'})^{x'}$ such that $D_{jk}(g^{r}) = D_{f_x(j)f_x(k)}({g'}^{r})$, $|c_{jp}(g^{r}) - c_{f_x(j)p}({g'}^{r})| \leq \varepsilon$, for every $j,k \in \mathcal{I}(g^{r})^{x}$, $p \in [P]$, and  $D_{jk}({g'}^{r'}) = D_{f'_{x'}(j)f'_{x'}(k)}({g''}^{r'})$,  $|c_{jp}({g'}^{r'}) - c_{f'_{x'}(j)p}({g''}^{r'})| \leq \varepsilon$, for every $j,k \in \mathcal{I}({g'}^{r'})^{x'}$, $p \in [P]$. 

Let $r'' = \min(r,r')$. We construct a root-preserving bijection $f''_{x''}:\mathcal{I}(g^{r''})^{x''} \leftrightarrow \mathcal{I}(g''^{r''})^{x''}$ for every $x'' \in \mathbb{Z}_+$ such that $D_{jk}(g^{r''}) = D_{f_{x''}''(j)f_{x''}''(k)}({g''}^{r''})$ and  $|c_{jp}(g^{r''}) - c_{f_{x''}''(j)p}({g''}^{r''})| \leq \varepsilon + \varepsilon'$ for every $j, k \in \mathcal{I}({g}^{r''})^{x''}$, $p \in [P]$. It follows from the definition of $d$ and $\zeta$ that 
\begin{align*}
d(g,g'') \leq \zeta(r'') + \varepsilon + \varepsilon' \leq \zeta(r) + \zeta(r') +  \varepsilon + \varepsilon' < \eta + \nu.
\end{align*}
Since $\nu > 0$ was arbitrary, (iii) follows.  

To construct such a bijection $f''_{x''}$, we rely on the fact that for any rooted network $g$ and $a,b \in \mathbb{R}_{+}$, $\mathcal{I}(g^{a})^{b} = \mathcal{I}(g^{b})^{a} =  \mathcal{I}(g^{\min(a,b)})$. This implies that for any $x'' \le r''$, $\mathcal{I}(g^r)^{x''} = \mathcal{I}(g^{r''})^{x''}$, $\mathcal{I}(g'^r)^{x''} = \mathcal{I}(g'^{r'})^{x''} = \mathcal{I}(g'^{r''})^{x''}$, and $\mathcal{I}(g''^{r'})^{x''} = \mathcal{I}(g''^{r''})^{x''}$. For any $x'' > r''$, $\mathcal{I}(g^{r''})^{x''} = \mathcal{I}(g^{r''})^{r''}$ and $\mathcal{I}(g''^{r''})^{x''} = \mathcal{I}(g''^{r''})^{r''}$. 

These equalities ensure that the root-preserving bijection $f''_{x''} = f'_{x''}\circ f_{x''}$ is well-defined for any $x'' \in \mathbb{Z}_{+}$ because they imply that the domain of $f'_{x''}$ and the range of $f_{x''}$ are both equal to $\mathcal{I}(g'^{r''})^{x''}$. Furthermore by construction $D_{jk}(g^{r''}) = D_{f_{x''}''(j)f_{x''}''(k)}({g''}^{r''})$ for every $j, k \in \mathcal{I}(g^{r''})^{x''}$, and by the triangle inequality for $|\cdot|$ we obtain $|c_{jp}(g^{r''}) - c_{f_{x''}''(j)p}({g''}^{r''})| \leq \varepsilon + \varepsilon'$ for every $j \in \mathcal{I}(g^{r''})^{x''}$ and $p \in [P]$. 
\end{proof}

\begin{proposition}$(\mathcal{G},d)$ is a separable metric space.
\end{proposition}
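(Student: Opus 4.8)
The plan is to dispatch the metric axioms quickly and then devote the work to constructing a countable dense set. Since $\mathcal{G}$ is by definition the collection of equivalence classes of locally finite rooted networks under the relation $g \sim g'$ iff $d(g,g') = 0$, and since $d$ is a pseudometric by the previous proposition, $d$ descends to a well-defined function on the quotient that additionally satisfies the identity of indiscernibles by construction. Hence $(\mathcal{G},d)$ is a genuine metric space, and it remains only to exhibit a countable dense subset.

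The candidate dense set, call it $\mathcal{G}_{0}$, is the collection of equivalence classes of \emph{finite} rooted networks (those with finitely many vertices) whose edge weights lie in $\mathbb{Z}_{+}\cup\{\infty\}$ and whose agent covariates are rational, i.e.\ $C_{i} \in \mathbb{Q}^{P}$ at every vertex $i$. First I would check that $\mathcal{G}_{0}$ is countable: for each $n \in \mathbb{N}$ there are only countably many weighted graphs on $n$ labeled vertices, since each of the $n^{2}$ directed edges carries a weight from the countable set $\mathbb{Z}_{+}\cup\{\infty\}$ (a finite product of countable sets); there are countably many rational covariate assignments $\mathbb{Q}^{nP}$; and $n$ choices of root. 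Taking the union over $n$ yields a countable family, whose image in $\mathcal{G}$ is countable.

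Next I would establish density. Fix $g \in \mathcal{G}$ and $\delta > 0$. Since $\zeta(r) = (1+r)^{-1} \to 0$, choose $r \in \mathbb{Z}_{+}$ with $\zeta(r) < \delta/2$ and fix $\varepsilon \in (0, \delta/2)$. By local finiteness the truncation $g^{r}$ has finitely many vertices, its edge weights already lie in $\mathbb{Z}_{+}\cup\{\infty\}$, and I would replace each covariate $c_{jp}$ by a rational $q_{jp}$ with $|c_{jp} - q_{jp}| \le \varepsilon$, producing a finite rooted network $\tilde g \in \mathcal{G}_{0}$ structurally identical to $g^{r}$. To bound $d(g,\tilde g)$ I would verify $g^{r} \simeq_{\varepsilon} \tilde g^{r}$ using the pair $(r,\varepsilon)$: since $\tilde g = \tilde g^{r}$ and $g^{r}$ share the same vertex set, the same edges, and covariates differing by at most $\varepsilon$, the identity map is a valid root-preserving bijection at every truncation radius $s$. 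This gives $d(g,\tilde g) \le \zeta(r) + \varepsilon < \delta$, so $\mathcal{G}_{0}$ is dense.

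The step I expect to be the main obstacle is this last verification, because the relation $\simeq_{\varepsilon}$ quantifies over all truncation radii $s$ and interacts subtly with truncation. The care needed is twofold: to note that shortest paths from the root to any vertex within radius $r$ stay inside $V^{r}_{g}$, so root-distances are unchanged by truncation and hence $\tilde g^{s} = \tilde g$ and $(g^{r})^{s} = g^{r}$ stabilize for $s \ge r$; and to confirm that for $s < r$ the identity bijection still matches vertices, edges, and $\varepsilon$-close covariates. Once these are checked, the single pair $(r,\varepsilon)$ suffices in the infimum defining $d$, and the argument closes.
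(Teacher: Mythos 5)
Your proof is correct and follows essentially the same route as the paper's: the countable dense set is the collection of finite rooted networks with rational covariates, countability comes from a countable union of countable sets, and density comes from truncating $g$ at a large radius $r$ and perturbing covariates to rationals within $\varepsilon$, giving $d(g,\tilde g)\le \zeta(r)+\varepsilon$. Your added care in verifying the $\varepsilon$-isomorphism at every truncation radius (and in handling the quotient to get a genuine metric) only makes explicit what the paper leaves as ``by construction.''
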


\begin{proof}To be separable, $\mathcal{G}$ must have a countable dense subset. Let $\tilde{\mathcal{G}}$ be the subset of rooted networks with finite vertex sets and rational-valued covariates. $\tilde{\mathcal{G}}$ is countable because it is the countable union of countable sets. The claim follows by showing that $\tilde{\mathcal{G}}$ is dense in $\mathcal{G}$.

To demonstrate that $\tilde{\mathcal{G}}$ is dense in $\mathcal{G}$, we fix an arbitrary rooted network $g \in \mathcal{G}$ and $\eta > 0$, and show that there exists a rooted network $g' \in \tilde{\mathcal{G}}$ such that $d(g,g') \leq \eta$. Choose any $r \in \mathbb{Z}_+$, $\varepsilon \in \mathbb{R}_{++}$ such that $\zeta(r) + \varepsilon \leq \eta$. Define $g'$ to be the rooted network  on the same set of agents and edge weights as $g^{r}$, but with rational-valued covariates chosen to be uniformly within $\varepsilon$ of their analogues in $g$. Since $g$ is locally finite, $g^{r}$ has a finite vertex set, and so $g' \in \tilde{\mathcal{G}}$. Furthermore $g'$ and $g^{r}$ are $\varepsilon$-isomorphic by construction. It follows that $d(g,g') \leq \zeta(r) + \varepsilon \leq \eta$. \end{proof}

\begin{proposition}$(\mathcal{G},d)$ is a complete metric space.
\end{proposition}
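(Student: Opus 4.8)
The plan is to show that every Cauchy sequence in $(\mathcal{G}, d)$ converges, using the fact that a Cauchy sequence converges as soon as one of its subsequences does. Given a Cauchy sequence $\{g_n\}$, I would first pass to a rapidly converging subsequence $\{g_{n_k}\}$ with $d(g_{n_k}, g_{n_{k+1}}) < \delta_k$, where $\delta_k = \min\{\zeta(k)/2,\, 2^{-k}\}$. Unpacking the definition of $d$, each such inequality produces a witnessing pair $(\rho_k, \varepsilon_k)$ with $\zeta(\rho_k) + \varepsilon_k < \delta_k$; since $\zeta$ is strictly decreasing this forces $\rho_k \ge k$ and $\varepsilon_k \le 2^{-k}$ (so that $\sum_k \varepsilon_k < \infty$), together with a root-preserving bijection $\psi_k : V(g_{n_k}^{\rho_k}) \leftrightarrow V(g_{n_{k+1}}^{\rho_k})$ that preserves edge weights exactly and matches each agent's covariate vector up to $\ell^\infty$-error $\varepsilon_k$.

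Next I would build the edge structure of the limit. Because $\psi_k$ preserves the root and all edge weights exactly, it preserves path distance from the root, so for any $r \le \rho_k$ it restricts to a root-preserving graph isomorphism between the $r$-neighborhoods $V(g_{n_k}^r)$ and $V(g_{n_{k+1}}^r)$. Hence for each fixed $r$ and all $k \ge r$ (so that $\rho_k \ge k \ge r$) the rooted graph isomorphism type of the $r$-neighborhood is constant along the subsequence, and these types are consistent across radii, since a single $\psi_k$ simultaneously realizes the identifications at every radius $\le \rho_k$. I would define the limit $g$ as the increasing union of these stabilized $r$-neighborhoods, with vertices across different $k$ identified through the composed maps $\psi_{k-1}\circ\cdots\circ\psi_r$. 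Each $r$-neighborhood of $g$ is a copy of the finite set $V(g_{n_r}^r)$, so $g$ is locally finite and lies in $\mathcal{G}$.

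The main obstacle is the covariates, because graph isomorphisms are not canonical: a naive attempt to read off ``the covariate of a fixed vertex'' is spoiled by graph automorphisms that may permute vertices carrying different covariates. I avoid this by never fixing an abstract labeling and instead transporting covariates along the specific chain $\{\psi_k\}$. For a vertex $v$ in the $r$-neighborhood, the sequence of covariate vectors of its images $\psi_{k-1}\circ\cdots\circ\psi_r(v)$ in $g_{n_k}$ is Cauchy in $\mathbb{R}^P$, since consecutive terms differ coordinatewise by at most $\varepsilon_k$ and $\sum_k \varepsilon_k < \infty$; completeness of $\mathbb{R}^P$ supplies a limit $c(v)$, and consistency across radii again holds because the same $\psi_k$ transports covariates at every radius simultaneously. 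This assigns well-defined limit covariates to $g$.

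Finally I would verify $g_{n_k} \to g$ and conclude. Fixing $r$, for every $k \ge r$ the $r$-neighborhood of $g_{n_k}$ is graph-isomorphic to that of $g$ via the composed map, and each vertex's covariate differs from its limit by at most $\eta_k := \sum_{j \ge k}\varepsilon_j$; hence $g_{n_k}^r \simeq_{\eta_k} g^r$ and $d(g_{n_k}, g) \le \zeta(r) + \eta_k$. Letting $k \to \infty$ gives $\limsup_k d(g_{n_k}, g) \le \zeta(r)$, and since this holds for every $r$ with $\zeta(r) \to 0$, we obtain $d(g_{n_k}, g) \to 0$. As the original sequence is Cauchy with a convergent subsequence, it converges to $g$, establishing completeness.
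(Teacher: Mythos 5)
Your proof is correct, and it follows the same overall strategy as the paper's: stabilize the finite combinatorial structure of the $r$-truncations, then use completeness of $\mathbb{R}^{P}$ to obtain limiting covariates. The execution differs in two ways that are worth noting. First, the paper works radius by radius, extracting for each $r$ a subsequence of $g_n^r$ with constant vertex-set size and only then invoking Cauchyness; you instead pass once to a rapidly convergent subsequence with $\sum_k \varepsilon_k < \infty$ and $\rho_k \ge k$, which handles all radii simultaneously and makes the consistency of the truncated limits across radii (a point the paper asserts when it defines $g$ via $g^r = g_\infty^r$) automatic. Second, and more substantively, the paper's treatment of the labeling problem is to declare that the agents can be ``ordered in a canonical way'' so that the covariate sequences are Cauchy and the edge sets constant; since rooted graphs can have nontrivial automorphisms, that canonical ordering is exactly the delicate point, and your device of transporting covariates along the explicit chain of witnessing bijections $\psi_{k-1}\circ\cdots\circ\psi_r$ is a cleaner and more rigorous way to supply it. The one step you should make explicit in a written version is why a single edge- and root-preserving bijection of the $\rho_k$-truncations restricts to a bijection of every $x$-neighborhood for $x \le \rho_k$ (path distances to vertices within radius $x$ are unchanged by truncation at radius $\rho_k \ge x$, so the bijection preserves them); you state the conclusion but the justification deserves a line. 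Modulo that, the argument is complete.
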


\begin{proof}To be complete, every Cauchy sequence in $\mathcal{G}$ must have a limit that is also in $\mathcal{G}$. Let $g_{n}$ be a Cauchy sequence in $\mathcal{G}$. That is, for every $\eta > 0$ there exists an $m \in \mathbb{N}$ such that $d(g_{m'},g_{m''}) \leq \eta$ for every $m',m'' \in \mathbb{N}$ with $m',m'' \geq m$. 

Fix an arbitrary $r \in \mathbb{Z}_+$ and let $g_{n}^{r}$ be the sequence of rooted networks formed by taking $g_{n}$ and replacing each rooted network in the sequence with its truncation at radius $r$. Since $\mathcal{G}$ is locally finite, the vertex set of each element of $g_{n}^{r}$ is finite. Let $N_{n}^{r} := |\mathcal{I}(g_{n}^{r})|$ denote the sequence of vertex set sizes. Since $g_{n}$ is a Cauchy sequence, $g_{n}^{r}$ is a Cauchy sequence, and so the entries of $N_{n}^{r}$ must also be uniformly bounded. Let $N^{r} := \limsup_{n \to \infty}N_{n}^{r} < \infty$.  

Let $g_{s(n)}^{r}$ be an infinite subsequence of $g_{n}^{r}$ such that $N_{s(n)}^{r} = V$ for some $V \in \mathbb{N}$ and every $n \in \mathbb{N}$. Such a subseqence and choice of $V$ must exist because $N_{n}^{r}$ takes only finitely many values. Since $g_{n}^{r}$ is a Cauchy sequence, $g_{s(n)}^{r}$ is a Cauchy sequence, and so for every $\eta > 0$ there exists an  $m \in \mathbb{N}$ such that for every  $m',m'' \in \mathbb{N}$ with $m',m'' \geq m$ there exists a bijection $f: \mathcal{I}(g_{s(m')}^{r}) \leftrightarrow \mathcal{I}(g_{s(m'')}^{r})$ with $D_{jk}(g_{s(m')}^{r}) = D_{f(j)f(k)}(g_{s(m'')}^{r})$ for every $j,k \in \mathcal{I}(g_{s(m')}^{r})$ and $|c_{jp}(g_{s(m')}^{r}) - c_{f(j)p}(g_{s(m'')}^{r})| \leq \eta$ for every $j \in \mathcal{I}(g_{s(m')}^{r})$ and $p \in [P]$. 



Let the rooted networks in the sequence $\tilde{g}_{s(n)}^{r}$ be $0$-isomorphic to those in $g_{s(n)}^{r}$ but with agents ordered in a canonical way such that for any $j \in [V]$ and $p \in[P]$ the corresponding sequences of covariates $c_{jp}(\tilde{g}_{s(n)}^{r})$ are Cauchy sequences in $\mathbb{R}$ and the corresponding sequence of edges $D_{jk}(\tilde{g}_{s(n)}^{r})$ are constant. The resulting sequence of covariate matrices are Cauchy sequences with respect to the max-norm in $\mathbb{R}^{V\times P}$, and so converge to a unique limit $C$ because finite-dimensional Euclidean space is complete. Let $E = D_{jk}(\tilde{g}_{s(n)}^{r})$ be the constant (in $n$) edge set.


Let $g_{\infty}^{r}$ be the rooted network with vertex set $[V]$, edge set $E$ and covariate set $C$. Then $g_{\infty}^r$ is a limit for the sequence $g_{s(n)}^{r}$ by construction. It is also a limit for the sequence $g_{n}^{r}$ because $g_{s(n)}^{r}$ is a subsequence of $g_{n}^{r}$ and $g_{n}^{r}$ is a Cauchy sequence. Since $r \in \mathbb{Z}_{+}$ was arbitrary, $g_{n}^{r}$ converges to $g_{\infty}^{r} \in \mathcal{G}$ for any $r \in \mathbb{Z}_{+}$. Define  $g$ to be the rooted network such that $g^r = g_\infty^r$ for every $r \in \mathbb{Z}_+$. Then $g$ is the limit of $g_{n}$ and the claim follows. 
\end{proof}

%

\subsection{Theorem \ref{thm:mse_bound} and Corollary \ref{cor:mse_bound}}
{\bf Proof of Theorem \ref{thm:mse_bound}}
\begin{proof}
Our proof follows the structure of that in \cite{doring2017rate}, Theorem 6.

\noindent Let $\bar{h}(g) = E[\hat{h}(g)|D_1(g), \ldots D_C(g)]$.
Decomposing the mean-squared error gives
\begin{equation}\label{eq:MSE_decomp}
E[(\hat{h}(g) - h(g))^2] = E[(\hat{h}(g) - \bar{h}(g))^2] + E[(\bar{h}(g) - h(g))^2]~,
\end{equation}
and the claim follows by bounding $E[(\hat{h}(g) - \bar{h}(g))^2] \leq \sigma^2/k$ and $E[(\bar{h}(g) - h(g))^2] \leq E[\varphi_g(U_{(k,T)})^2]$. 

Before presenting each bound, we derive a series of useful identities. First, we show
\begin{equation}\label{eq:r_def}
E[\bar{Y}_c(g)|D_c(g)] = f(D_c(g))~,
\end{equation}
where
\[f(d):= E\left[\frac{1}{|\mathcal{I}_c(g)|}\sum_{i \in \mathcal{I}_c(g)}h(G_{ic})\Bigg|D_c(g) = d\right]~.\]
To see this, note that by the law of iterated expecations and the definitions of $D_c(g)$ and $\mathcal{I}_c(g)$:
\[E[\bar{Y}_c(g)|D_c(g)] = E\left[\frac{1}{|\mathcal{I}_c(g)|}\sum_{i \in \mathcal{I}_c(g)}E[Y_{ic}|G_{1c}, \ldots, G_{m_cc}]\Bigg|D_c(g)\right]~,\]
and by Assumption \ref{ass:sampling},
\[E[Y_{ic}|G_{1c}, \ldots, G_{m_cc}] = h(G_{ic})~.\]
Next, it follows by \eqref{eq:r_def} and Lemma \ref{lem:order_properties} that 
\begin{equation}\label{eq:bar_Y}
E[\bar{Y}^*_j(g)|D_1(g), \ldots  D_C(g)] = f(D^*_j(g))~,
\end{equation}
so that by the definition of $\bar{h}(g)$, we have
\begin{equation}\label{eq:bar_h}
\bar{h}(g) = \frac{1}{k}\sum_{j = 1}^k f(D^*_j(g))~.
\end{equation}

Next, we derive the following identity:
\begin{equation}\label{eq:phi_bound}
E\left[\frac{1}{|\mathcal{I}_c(g)|}\sum_{i \in \mathcal{I}_c(g)}|h(g) - h(G_{ic})| \Bigg|D_c(g)\right] \le \phi_g(D_c(g))~.
\end{equation}
To see this, note that by Assumption \ref{ass:m_smooth}, $|h(g) - h(G_{ic})| \le \phi_g(d(g, G_{ic}))$ and by the definition of $\mathcal{I}_c(g)$, $d(g,G_{ic}) = D_c(g)$ for $i \in \mathcal{I}_c(g)$. Hence
\[E\left[\frac{1}{|\mathcal{I}_c(g)|}\sum_{i \in \mathcal{I}_c(g)}|h(g) - h(G_{ic})| \Bigg|D_c(g)\right] \le E\left[\frac{1}{|\mathcal{I}_c(g)|}\sum_{i \in \mathcal{I}_c(g)}\phi_g(d(g,G_{ic})) \Bigg|D_c(g)\right] = \phi(D_c(g))~.\]
Applying \eqref{eq:phi_bound} with the definition of $f(\cdot)$, we obtain by the triangle inequality that 
\[|h(g) - f(d)| \le \phi_g(d)~,\]
from which it follows that 
\begin{equation}\label{eq:phi_bound2}
|h(g) - f(D^*_j(g))| \le \phi_g(D^*_j(g))~.
\end{equation} 

Now we derive the first bound in \eqref{eq:MSE_decomp}. To that end, 
\begin{align*}
E[(\hat{h}(g) - \bar{h}(g))^2|D_1(g), \ldots, D_C(g)]
&= E\left[\left(\frac{1}{k}\sum_{j = 1}^k\left(\bar{Y}^{*}_{j}(g) - f(D^{*}_{j}(g))\right)\right)^2 \Big| D_1(g), \ldots, D_C(g)\right] \\
&=  \frac{1}{k^{2}}\sum_{j=1}^{k}E\left[\left(\bar{Y}^{*}_{j}(g) - f(D^{*}_{j}(g))\right)^2 \Big| D_1(g), \ldots, D_C(g)\right] \\
&\le  \frac{1}{k^2}\sum_{j=1}^kE\left[\left(\bar{Y}^*_j(g)\right)^2|D_1(g), \ldots, D_C(g)\right] \le \frac{\bar{\sigma}^2}{k}~,
\end{align*}
where the first equality follows by \eqref{eq:bar_h}, the second follows from the independence of the pairs $\{(\bar{Y}^*_j(g), D^*_j(g))\}_{1 \le j \le k}$ conditional on $D_1(g), \ldots, D_C(g)$ (by Lemma \ref{lem:order_properties}), the first inequality follows from the definition of a conditional variance (after noting \eqref{eq:bar_Y}), and  final inequality follows from the following derivation and Assumption \ref{ass:sampling}:
\begin{align*}
E\left[\bar{Y}^2_c(g)|D_c(g)\right] &= E\left[\left(\frac{1}{|\mathcal{I}_c(g)|}\sum_{i \in \mathcal{I}_c(g)} Y_{ic}\right)^2 \Bigg|D_c(g)\right] \\
&\le E\left[\frac{1}{|\mathcal{I}_c(g)|}\sum_{i \in \mathcal{I}_c(g)} Y_{ic}^2 \Bigg|D_c(g)\right]  \\
&=  E\left[\frac{1}{|\mathcal{I}_c(g)|}\sum_{i \in \mathcal{I}_c(g)} E[Y_{ic}^2|G_{1c},\ldots,G_{m_cc}] \Bigg|D_c(g)\right]  \\
&\le  E\left[\frac{1}{|\mathcal{I}_c(g)|}\sum_{i \in \mathcal{I}_c(g)} \bar{\sigma}^2 \Bigg|D_c(g)\right]  = \bar{\sigma}^2 ~,
\end{align*}
where the first equality follows by defininition, the first inequality is Jensen's, the second equality follows by the law of iterated expectations, and the final inequality by Assumption  \ref{ass:sampling} and the fact that $E[h(g,U_{ic})^2] \le \bar{\sigma}^2$ for all $g \in \mathcal{G}$ under Assumption \ref{ass:var}.

Next we derive the second bound in \eqref{eq:MSE_decomp}. Note that:
\begin{align*}
(\bar{h}(g) - h(g))^2 &\le \left(\frac{1}{k}\sum_{j=1}^k\left|h(g) - f(D^{*}_{j}(g))\right|\right)^2\\
&\le \left(\frac{1}{k}\sum_{j=1}^k \phi_g(D_j^*(g))\right)^2 \\
&\le \phi_g(D_k^*(g))^2 \\
&\le \varphi_g(\psi_g(D_k^*(g)))~,
\end{align*}
where the first inequality follows by \eqref{eq:bar_h} and the triangle inequality, the second by \eqref{eq:phi_bound2}, the third by the definition of $\phi_g(\cdot)$ and $D^*_j(g)$, and the final inequality by the definition of the upper generalized inverse. 
Under Assumption \ref{ass:no_ties}, the probability integral transform implies
\[\psi_g(D_k^*(g)) \,{\buildrel d \over =}\, U_{(k,C)}~,\]
where $U_{(k,C)}$ is the $k$th order statistic from a sequence of $C$ independent and identically distributed standard uniform random variables, and so is distributed $Beta(k,C+1-k)$. As a result
\[\varphi_g\left(\psi_g(D^*_k(g))\right)^2 \,{\buildrel d \over =}\, \varphi_g(U_{(k,C)})^2\]
and so
\[E[(\bar{h}(g) - h(g))^2] \le E[\varphi_g(U_{(k,C)})^2]~.\]
\end{proof}

\noindent {\bf Proof of Corollary \ref{cor:mse_bound}}
\begin{proof}
This follows immediately from the inequality $(a - b)^2 \le 2(a^2 + b^2)$ and Theorem \ref{thm:mse_bound}.
\end{proof}

\subsection{Theorem \ref{thm:normal} and Corollary \ref{cor:normal}}
\noindent To prove Theorems \ref{thm:normal} and \ref{thm:variance}, we require the following preliminary lemma:
\begin{lemma}\label{lem:order_LLN}
Maintain the Assumptions of Theorem \ref{thm:normal}, let 
\[m_t(d) := E[\bar{Y}^t_c(g)|D_c(g)=d]~,\]
for $t \in \{1,2\}$. Then
\[\frac{1}{k}\sum_{j=1}^k |m_t(D^*_j(g)) - m_t(0)| \xrightarrow{p} 0~,\]
as $C \rightarrow \infty$, $k/C \rightarrow 0$. 
\end{lemma}
\begin{proof}
Since $m_t(d)$ is continuous in a neighborhood of zero by assumption, let $\omega(h):[0,\delta] \rightarrow \mathbb{R}_+$ define a modulus of continuity at zero for $m_t(d)$ for some $\delta > 0$. That is, $\omega(\cdot)$ is non-decreasing, continuous at zero, and satisfies
\[|m_t(d) - m_t(0)| \le \omega(|d|)~,\]
for $|d| \le \delta$. Moreover, since $m_t(d)$ is bounded by assumption,
\[|m_t(d) - m_t(0)| \le M'~,\]
for some constant $M' > 0$, for all $d$. By the definition of $D^*_j(g)$ for $j = 1, \ldots, k$ and the properties of the modulus of continuity:
\begin{align*}
\frac{1}{k}\sum_{j=1}^k \left| m_t(D^*_j(g)) - m(0) \right| 
&\le \frac{1}{k} \sum_{j=1}^k \omega(D^*_j(g)) I\{ D_j^*(g) \le \delta \} 
+ M \frac{1}{k} \sum_{j=1}^k I\{ D_j^*(g) > \delta \} \\
&\le \omega(D^*_k(g)) + M I\{ D^*_k(g) > \delta \} ~.
\end{align*}
The result then follows by the continuous mapping theorem once we argue that $D^*_k(g) \xrightarrow{p} 0$ as $C \rightarrow \infty$, $k/C \rightarrow 0$. To that end, note that for any $\epsilon > 0$, the events
\[\left\{D^*_k(g) > \epsilon\right\} = \left\{\frac{1}{C}\sum_{c = 1}^CI\{D_c(g) < \epsilon\} < \frac{k}{C}\right\}~.\]
By the law of large numbers, 
\[\frac{1}{C}\sum_{c=1}^CI\{D_c(g) < \epsilon\} \xrightarrow{p} P(D_c(g) < \epsilon) > 0 ~,\]
where the last inequality follows by Assumption \ref{ass:pos_mass}. Thus, as $k/C \rightarrow 0$, 
\[P(D_k^*(g) > \epsilon) \rightarrow 0~,\]
as desired.
\end{proof}

\noindent {\bf Proof of Theorem \ref{thm:normal}}
\begin{proof}
Borrowing the notation from the proof of Theorem \ref{thm:mse_bound}, consider the following decomposition:
\[\sqrt{k}(\hat{h}(g) - h(g)) = \sqrt{k}(\hat{h}(g) - \bar{h}(g)) + \sqrt{k}(\bar{h}(g) - h(g))~.\]
First, we argue that under the stated assumptions the second term converges in probability to zero. To see this, note that it suffices to show that 
\[E[(\sqrt{k}(\bar{h}(g) - h(g)))^2] \rightarrow 0~.\]
By the characterization of the squared bias obtained in the proof of Theorem \ref{thm:mse_bound}, it follows immediately that
\[E[(\sqrt{k}(\bar{h}(g) - h(g)))^2] \le kE[\varphi_g(U_{(k,C)})^2] \rightarrow 0~,\]
where the convergence to zero follows by assumption. It thus remains to characterize the limiting distribution of the first term. We proceed by studying its limiting distribution conditional on $D_1(g), \ldots, D_C(g)$. Re-writing:
\[\sqrt{k}(\hat{h}(g) - \bar{h}(g)) = \frac{1}{\sqrt{k}}\sum_{j=1}^k\left(Y^*_j(g) - E[Y^*_j(g)|D_1(g), \ldots, D_C(g)]\right)~.\] 
by Lemma \ref{lem:order_properties}, conditional on $D_1(g), D_2(g), \ldots D_C(g)$, 
\[\left((D_1^*(g),Y^*_1(g)), \ldots, (D_k^*(g),Y_k^*(g))\right)\] are independent,
\[\text{var}\left(\bar{Y}_j^*(g) - E[\bar{Y}^*_j(g)|D_1(g), \ldots, D_C(g)]|D_1(g), \ldots, D_C(g)\right) = \sigma^2_g(D^*_j(g))~,\]
where $\sigma^2_g(d) = \text{var}(\bar{Y}_c(g)|D_c(g) = d)$, and 
\[E[(\bar{Y}_j^*(g) - E[\bar{Y}^*_j(g)|D_1(g), \ldots, D_C(g)])^3|D_1(g), \ldots D_C(g)] \le M'~,\]
for some $M' < \infty$, where the last inequality follows by our assumption of uniformly bounded conditional fourth (and thus third) moments. 
Let
\[F_k(t) = P\left(\frac{\frac{1}{\sqrt{k}}\sum_{j=1}^k\left(Y^*_j(g) - E[Y^*_j(g)|D_1(g), \ldots, D_C(g)]\right)}{\sqrt{\frac{1}{k}\sum_{j=1}^k\sigma^2_g(D^*_j(g))}} \le t\Bigg|D_1(g), \ldots D_C(g)\right)~.\] 
Then by the Berry-Eseen theorem applied conditionally on $D_1(g), D_2(g), \ldots D_C(g)$, along with the above inequality
\[\sup_{t \in \mathbb{R}} |F_k(t) - \Phi(t)| \le \frac{\gamma M'/\sqrt{k}}{\left(\frac{1}{k}\sum_{j=1}^k\sigma^2_g(D^*_j(g))\right)^{3/2}}~,\]
for some constant $\gamma > 0$. 
Next, we show $\frac{1}{k}\sum_{j=1}^k\sigma^2_g(D^*_j(g))  \xrightarrow{p} \sigma^2_g(0)$. To see this, note that
\[\sigma^2_g(d) = m_2(d) - m_1(d)^2~,\]
where $m_1$ and $m_2$ are defined as in the statement of Lemma \ref{lem:order_LLN}. By Lemma \ref{lem:order_LLN}, 
\[\frac{1}{k}\sum_{j=1}^km_2(D^*_j(g)) \xrightarrow{p} m_2(0)~.\]
By the triangle inequality, the identity $a^2 - b^2 = (a+b)(a-b)$ and the assumption of uniformly bounded fourth (and thus first) moments,
\[\left|\frac{1}{k}\sum_{j=1}^km_1(D^*_j(g))^2 - m_1(0)^2\right| \le C \frac{1}{k}\sum_{j=1}^k\left|m_1(D^*_j(g)) - m_1(0)\right| \xrightarrow{p} 0~,\]
for some constant $C< \infty$, where the convergence in probability follows from Lemma \ref{lem:order_LLN}.
Putting both of these together, we obtain \[\frac{1}{k}\sum_{j=1}^k\sigma^2_g(D^*_j(g)) \xrightarrow{p} \sigma^2_g(0)~,\]
so that for every $t \in \mathbb{R}$,
 \[|F_k(t) - \Phi(t)| \xrightarrow{p} 0~.\]
It then follows by the dominated convergence theorem and Slutsky's theorem that
\[\sqrt{k}(\hat{h}(g) - \bar{h}(g)) \xrightarrow{d} N(0, \sigma^2_g(0))~,\]
as desired.
\end{proof}

\noindent {\bf Proof of Corollary \ref{cor:normal}}
\begin{proof}
This follows immediately from the continuous mapping theorem and Theorem \ref{thm:normal}.
\end{proof}

\subsection{Theorem \ref{thm:variance}}
\begin{proof}
Note that by definition,
\[\hat{\sigma}^2_g = \frac{1}{k}\sum_{j=1}^k\bar{Y}_j^*(g)^2 - \left(\frac{1}{k}\sum_{j=1}^k\bar{Y}_j^*(g)\right)^2~,\]
we show that 
\[ \frac{1}{k}\sum_{j=1}^k\bar{Y}_j^*(g)^2 \xrightarrow{p} m_2(0)~,\]
where $m_2(\cdot)$ is defined in the statement of Lemma \ref{lem:order_LLN}. The argument that the second term converges to $m_1(0)^2$ follows similarly.
By Lemmas \ref{lem:order_properties} and \ref{lem:order_LLN},
\[E\left[\frac{1}{k}\sum_{j=1}^k\bar{Y}_j^*(g)^2 \middle| D_1(g), \ldots, D_C(g)\right] = \frac{1}{k}\sum_{j=1}^km_2(D^*_j(g)) \xrightarrow{p} m(0)~.\]
By Chebyshev's inequality conditional on $D_1(g), \ldots, D_C(g)$, for any $\epsilon > 0$, 
\begin{align*}
P\left( \left| \frac{1}{k} \sum_{j=1}^k \bar{Y}_j^*(g)^2 
- \frac{1}{k}\sum_{j=1}^km_2(D^*_j(g)) \right| 
> \epsilon \, \middle| \, D_1(g), \ldots, D_C(g) \right) 
&\leq \frac{\sum_{j=1}^k E\left[ \bar{Y}_j^*(g)^4 
\, \middle| \, D_1(g), \ldots, D_C(g)\right]}{k^2 \epsilon^2} \\
&\leq \frac{M}{k \epsilon^2} \rightarrow 0~,
\end{align*}
where the first inequality follows from Lemma \ref{lem:order_properties} and the second inequality from Lemma \ref{lem:order_properties} and Assumption \ref{ass:normal_regular}.
The result then follows from the dominated convergence theorem.
\end{proof}

\subsection{Theorem \ref{thm:perm_cest}}
We verify the assumptions of Theorem 4.2 in \cite{canay2018approximate}. Specifically, we verify their Assumption 4.5 in the case where $h(g,U)$ is continuous and their Assumption 4.6 in the case where $h(g,U)$ is discrete. Their assumptions 4.5 and 4.6 parts (i), (ii) and (iii) follow from our Lemma \ref{lem:verify_perm} below along with our Assumption  \ref{ass:dist_cont} that $h(g,U)$ is either continuous or discrete. Assumption 4.5 part (iv) follows from our choice of test statistic in Step 4 of Algorithm 4.1. 

Our proof of Lemma \ref{lem:verify_perm} relies on the following Lemma \ref{lem:order_OK} which is a modification of Lemma \ref{lem:order_properties} (and Proposition 8.1 in \cite{biau2015lectures}).
\begin{lemma}\label{lem:order_OK}
Let Assumption \ref{ass:sampling} hold and for any measurable $f:\mathbb{R}\times\mathcal{G} \rightarrow \mathbb{R}$ define \[r(g) = E[f(h(g,U_{ic}),g)]~.\] Then for any $g \in \mathcal{G}$, the entries of 
\[\left(W^{*}_{1}(g), \ldots, W^{*}_{C}(g)\right)\]
are independent conditional on $G_1(g), \ldots, G_C(g)$ and for every $1 \leq j \leq C$ 
\[E[f(W^{*}_{j}(g))|G_1(g), \ldots G_C(g)] = r(G^{*}_{j}(g))~.\]
\end{lemma}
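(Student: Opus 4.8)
The plan is to reduce everything to the collection of selected observations $\{W_c(g)\}_{c \in [C]}$ and to exploit two separate layers of conditioning: independence \emph{across} communities (Assumption \ref{ass:sampling}(i)) and exogeneity \emph{within} a community (Assumption \ref{ass:sampling}(iii)). It is convenient to absorb the outer function into a single map by writing $\tilde f(G,u) := f(h(G,u),G)$, so that $f(W_{ic}) = \tilde f(G_{ic},U_{ic})$ and $r(g) = E[\tilde f(g,U)]$ where $U$ is a draw from the common marginal of the errors (Assumption \ref{ass:sampling}(ii)). Because the ordering in Step 2 ranks the $W_c(g)$ by $d(G_c(g),g)$, the sorting permutation $\sigma$ is a measurable function of $(G_1(g),\dots,G_C(g))$; conditional on these rooted networks it is deterministic, and $W^*_j(g) = W_{\sigma(j)}(g)$. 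Hence it suffices to establish both claims for the unordered selections and then transport them through the fixed permutation $\sigma$.

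For the conditional independence claim, I would first observe that each pair $(W_c(g),G_c(g))$ is a deterministic function of $W_c$ alone (the argmin agent's outcome and rooted network), so by Assumption \ref{ass:sampling}(i) the pairs $\{(W_c(g),G_c(g))\}_{c\in[C]}$ are independent across $c$. A standard fact for independent pairs then gives that, conditional on $(G_1(g),\dots,G_C(g))$, the variables $\{W_c(g)\}_{c\in[C]}$ are mutually independent, with the conditional law of $W_c(g)$ depending only on $G_c(g)$. Applying the $(G_1(g),\dots,G_C(g))$-measurable permutation $\sigma$ preserves conditional independence, yielding the claim for $(W^*_1(g),\dots,W^*_C(g))$.

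For the conditional-mean claim, fix $j$ and set $c^\star := \sigma(j)$, which is determined by the conditioning. Since $f(W_{c^\star}(g))$ is a function of $W_{c^\star}$ and $W_{c^\star}$ is independent of $(G_{c'}(g))_{c'\ne c^\star}$ by Assumption \ref{ass:sampling}(i), the conditioning collapses to
\[
E\!\left[f(W^*_j(g)) \mid G_1(g),\dots,G_C(g)\right] = E\!\left[f(W_{c^\star}(g)) \mid G_{c^\star}(g)\right].
\]
The crux is to evaluate the right-hand side despite the fact that the selected agent $\iota := i_{c^\star}(g)$ depends on \emph{all} within-community rooted networks. I would condition first on the finer $\sigma$-field $\mathcal{F}_{c^\star} := \sigma(G_{1c^\star},\dots,G_{m_{c^\star}c^\star})$, with respect to which $\iota$ is measurable, write $\tilde f(G_{\iota c^\star},U_{\iota c^\star}) = \sum_i \mathbbm{1}\{\iota = i\}\,\tilde f(G_{i c^\star},U_{i c^\star})$, and apply Assumption \ref{ass:sampling}(iii) termwise to replace each $U_{i c^\star}$ by an independent copy, giving $E[f(W_{c^\star}(g)) \mid \mathcal{F}_{c^\star}] = \bar{\tilde f}(G_{\iota c^\star}) = r(G_{c^\star}(g))$, where $\bar{\tilde f}(G) := E[\tilde f(G,U)]$. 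Since $G_{c^\star}(g)$ is $\mathcal{F}_{c^\star}$-measurable, the tower property then lowers the conditioning to $G_{c^\star}(g)$ and delivers $E[f(W_{c^\star}(g)) \mid G_{c^\star}(g)] = r(G_{c^\star}(g)) = r(G^*_j(g))$.

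The main obstacle is precisely this data-dependent choice of nearest agent: the index $\iota$ is selected \emph{using} the within-community network configuration, so exogeneity cannot be invoked marginally but must be applied after conditioning on the full within-community $\sigma$-field, where the argmin is frozen. The decomposition over the events $\{\iota = i\}$ combined with Assumption \ref{ass:sampling}(iii) is what makes this legitimate. A minor technical point, which I would address by augmenting each $W_c$ with an independent uniform tie-breaking variable (independent across communities and of the errors), is that the argmin in Step 1 and the ordering in Step 2 involve random tie-breaking; since this randomness is independent of the $U_{ic}$ and the rooted networks, it enlarges $\mathcal{F}_{c^\star}$ harmlessly and leaves every conditional expectation above unchanged.
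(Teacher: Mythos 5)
Your proof is correct and follows essentially the same route as the paper's: the paper outsources the conditional-independence and conditioning-collapse steps to Proposition 8.1 of \cite{biau2015lectures}, and then, exactly as you do, handles the data-dependent selection of the nearest agent by decomposing over the events $\{\tau = i\}$ inside the within-community $\sigma$-field, applying Assumption \ref{ass:sampling}(iii) termwise, and finishing with the tower property. The only differences are cosmetic: you prove the reordering/collapse step from first principles (and explicitly handle the random tie-breaking) where the paper cites the reference and works with conditional distribution functions up to null sets of the pushforward measure.
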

\begin{proof}
Proposition 8.1 of \cite{biau2015lectures} directly implies the first claim that the elements of $\left(W^{*}_{1}(g), \ldots, W^{*}_{C}(g)\right)$ are independent conditional on $G_1(g), \ldots, G_C(g)$. 
For the second claim,  note that by a similar argument to the second statement in Proposition 8.1 of \cite{biau2015lectures},
\[E[f(W^*_{j}(g)) | G_1(g), \ldots, G_C(g)] = \tilde{r}_g(G^*_{j}(g))~,\]
where $\tilde{r}_g(\tilde{g}) = E[f(W_c(g)) | G_c(g) \simeq_0 \tilde{g}]$. Next we show that $\tilde{r}_g(\tilde{g}) = r(\tilde{g})$ (up to $\nu$-null sets, where $\nu$ is the pushforward measure induced by $G_c(g)$). Once we have shown that, then it will follow that $\tilde{r}_g(G^*_{j}(g)) = r(G^*_{j}(g))$ (since the pushforward measure induced by $G^*_j(g)$ is dominated by $\nu$ by construction) which demonstrates the claim.

Let $\tau$ be the random index such that $W_{\tau,c} = W_c(g)$, then 
\begin{align*}
E[f(W_c(g))|G_{1c}, \ldots, G_{m_cc}]
&= \sum_{i=1}^{m_c}{\bf 1}\{\tau = i\}E[f(W_{i,c})|G_{1c}, ..., G_{m_cc}] \\
&= \sum_{i=1}^{m_c}{\bf 1}\{\tau = i\}E[f(h(G_{ic},U_{ic}), G_{ic})|G_{1c}, \ldots, G_{m_cc}] \\
&=  \sum_{i=1}^{m_c}{\bf 1}\{\tau = i\}E[f(h(G_{ic}, U), G_{ic})|G_{ic}] \\
&= \sum_{i=1}^{m_c}{\bf 1}\{\tau = i\}r(G_{ic}) \\
&= r(G_c(g))~,
\end{align*}
where the first equality follows from the fact that $\tau$ is a function of $G_{1c}, ... G_{m_cc}$, the second follows from the definition of $W_{ic}$, and the third and fourth equalities follow from Assumption \ref{ass:sampling}(iii). By the law of iterated expectations and the fact that $G_c(g)$ is a measurable function of $G_{1c}, \ldots G_{m_cc}$ it follows that
\[\tilde{r}_g(G_c(g)) = E[E[f(W_c(g))|G_{1c}, ..., G_{m_cc}]|G_c(g)] = r(G_c(g))~,\]
and so $\tilde{r}_g(\tilde{g}) = r(\tilde{g})$ (up to $\nu$-null sets).
\end{proof}



\begin{lemma}\label{lem:verify_perm}
Under Assumptions \ref{ass:sampling}, \ref{ass:pos_mass} and \ref{ass:dist_cont} and the null hypothesis of policy irrelevance (\ref{eq:null})
\[S_C \xrightarrow{d} S\]
where $S = (S_1, ..., S_{2q})$ is a random vector with independent and identically distributed entries equal in distribution to $h(g,U)$. For any permutation $\pi \in \mathbf{H}$
\[S^{\pi} \,{\buildrel d \over =}\, S\]
so that we satisfy  (i) and (ii) of  Assumptions 4.5 and 4.6 of \cite{canay2018approximate}. 
\end{lemma}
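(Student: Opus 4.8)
The plan is to reduce both claims to the single convergence statement $S_C \xrightarrow{d} S$. The permutation-invariance claim $S^{\pi} \overset{d}{=} S$ is immediate once the limit vector $S$ is shown to have independent and identically distributed coordinates, since any permutation of i.i.d. coordinates has the same joint law; so the substance lies entirely in establishing $S_C \xrightarrow{d} S$ with $S$ having i.i.d. entries distributed as $h(g,U)$. I would prove this by computing the joint CDF of $S_C$, conditioning on the selected rooted networks, passing to the limit coordinate by coordinate, and invoking the null hypothesis to collapse the $g$- and $g'$-blocks into a single i.i.d. law.

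First I would establish that the selected nearest-neighbor networks converge to their targets: for each fixed $j \le q$, $d(G^*_j(g),g) \to 0$ and $d(G^*_j(g'),g') \to 0$ in probability as $C \to \infty$. Fix $\ell > 0$; by Assumption \ref{ass:pos_mass}, $\psi_g(\ell) > 0$, and the number of communities in $\mathcal{W}_1$ possessing an agent within distance $\ell$ of $g$ is $\mathrm{Binomial}(C/2, \psi_g(\ell))$, which diverges almost surely by the strong law of large numbers. Hence eventually at least $q$ communities qualify, forcing $d(G^*_j(g),g) \le \ell$ for all $j \le q$; letting $\ell \downarrow 0$ gives the claim, and symmetrically for $g'$ using $\mathcal{W}_2$.

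Next I would write out the joint CDF. Because $\mathcal{W}_1$ and $\mathcal{W}_2$ are disjoint and communities are i.i.d.\ (Assumption \ref{ass:sampling}(i)), the $g$-block and the $g'$-block are independent; within each block, Lemma \ref{lem:order_OK} gives that the selected outcomes are independent conditional on the collection of selected networks, and applying Lemma \ref{lem:order_OK} with $f(y,\tilde{g}) = \mathbbm{1}\{y \le c\}$ identifies the conditional CDF of $Y^*_j(g)$ given $G^*_j(g)$ as $h_c(G^*_j(g))$. Thus for any $s = (s_1,\ldots,s_{2q})$,
\[
P(S_C \le s) = E\left[\prod_{j=1}^q h_{s_j}(G^*_j(g)) \prod_{j=1}^q h_{s_{q+j}}(G^*_j(g'))\right].
\]
Using the convergence from the previous step together with the continuity of $h_y(\cdot)$ in its network argument (Assumption \ref{ass:dist_cont}) and bounded convergence (each factor lies in $[0,1]$), the right-hand side converges to $\prod_{j=1}^q h_{s_j}(g) \prod_{j=1}^q h_{s_{q+j}}(g')$. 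The null hypothesis $h_y(g) = h_y(g')$ then collapses this to $\prod_{j=1}^{2q} h_{s_j}(g) = P(S \le s)$, the CDF of a vector with i.i.d.\ coordinates distributed as $h(g,U)$, yielding $S_C \xrightarrow{d} S$.

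I expect the main obstacle to be the conditioning bookkeeping in the penultimate step: the order statistics $G^*_1(g),\ldots,G^*_q(g)$ are mutually dependent, so one must condition on the entire collection and rely on Lemma \ref{lem:order_OK} for both the conditional independence of the outcomes and the identification of each conditional CDF as $h_{s_j}(G^*_j(g))$. A secondary point requiring care is the continuous-versus-discrete dichotomy of Assumption \ref{ass:dist_cont}: since $h_y(\cdot)$ is continuous in the network argument for every fixed $y$, the coordinatewise convergence holds at every threshold $s_j$, so convergence of the joint CDF holds at the continuity points of the limit in both cases, which is exactly what is needed to verify parts (i) and (ii) of Assumptions 4.5 and 4.6 of \cite{canay2018approximate}.
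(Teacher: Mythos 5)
Your proposal is correct and follows essentially the same route as the paper's proof: both condition on the selected rooted networks and use Lemma \ref{lem:order_OK} to factor the joint CDF into $E\left[\prod_j h_{s_j}(G^*_j(g))\prod_j h_{s_{q+j}}(G^*_j(g'))\right]$, both derive $G^*_j(g)\xrightarrow{p}g$ from Assumption \ref{ass:pos_mass} via a law-of-large-numbers count of communities landing in a $d$-ball around $g$, and both finish with continuity of $h_y(\cdot)$, dominated (bounded) convergence, and the null hypothesis to obtain the i.i.d.\ limit. The only cosmetic difference is that you phrase the nearest-neighbor convergence in terms of a diverging Binomial count over $\mathcal{W}_1$ while the paper writes the equivalent event as a sample proportion falling below $j/C$.
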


\begin{proof}
Lemma \ref{lem:order_OK} implies that the entries of
\[(W^{*}(g),W^{*}(g')) := \left(W_{1}^{*}(g), \ldots, W_{q}^{*}(g), W_{1}^{*}(g'), \ldots, W_{q}^{*}(g')\right)\]
are independent conditional on $\{G_c(g)\}_{c \in \mathcal{D}_{1}}$ and $\{G_c(g')\}_{c \in \mathcal{D}_{2}}$, and that the conditional distribution functions of $Y_{j}^{*}(g)$ and $Y_{j}^{*}(g')$ are given by $h_{y}(G_{j}^{*}(g))$ and $h_{y}(G_{j}^{*}(g'))$ respectively. It follows by the law of iterated expectations that
\begin{align*}
P\left(Y_{1}^{*}(g)\le y_1, ..., Y_{q}^{*}(g)\le y_q, Y_{1}^{*}(g')\le y_{q+1}, \ldots, Y_{q}^{*}(g')\le y_{2q} \right) \\
= E\left[\displaystyle\prod_{j=1}^q h_{y_{j}}(G_{j}^{*}(g)) \displaystyle\prod_{j=1}^q h_{y_{j+q}}(G_{j}^{*}(g'))\right]~.
\end{align*}

We first show that Assumption \ref{ass:pos_mass} implies that $G_{j}^{*}(g) \xrightarrow{p} g$ and $G_{j}^{*}(g') \xrightarrow{p} g'$ for every $j$ as $C \rightarrow \infty$. To see this, fix $\epsilon > 0$ and write
\[\left\{d\left(G_{j}^{*}(g), g\right) > \epsilon\right\} = \left\{\frac{1}{C}\sum_{c = 1}^C\mathbbm{1}\{G_c^{*}(g) \in B_{g, \epsilon}\} < \frac{j}{C}\right\}~,\]
where $B_{g, \epsilon} = \{\tilde{g} \in \mathcal{G}: d(g, \tilde{g}) \le \epsilon\}$. By the law of large numbers and Assumption \ref{ass:pos_mass}, 
\[\frac{1}{C}\sum_{c = 1}^C \mathbbm{1}\{G_c^{*}(g) \in B_{g, \epsilon}\} \xrightarrow{p} P\left(G_c^{*}(g) \in B_{g, \epsilon}\right) > 0~,\]
and since $j/C \leq q/C \rightarrow 0$, it follows that $P\left(d\left(G_{j}^{*}(g), g\right) > \epsilon\right) \rightarrow 0$
as $C \rightarrow \infty$. 

Assumption \ref{ass:dist_cont} and the continuous mapping theorem imply that 
\[ h_{y}(G_{j}^{*}(g)) \xrightarrow{p}  h_{y}(g) \text{ and } h_{y}(G_{j}^{*}(g')) \xrightarrow{p}  h_{y}(g')~,\]
for every $j$ and $y \in \mathbb{R}$, and so it follows from the dominated convergence theorem that
\[E\left[\displaystyle\prod_{j=1}^q h_{y_{j}}(G_{j}^{*}(g)) \displaystyle\prod_{j=1}^q h_{y_{j+q}}(G_{j}^{*}(g'))\right] 
\rightarrow \prod_{j=1}^q h_{y_{j}}(g) \displaystyle\prod_{j=1}^q  h_{y_{j+q}}(g')~.\]
The claim follows from the fact that $h_{y}(g) = h_{y}(g')$ under the null hypothesis (\ref{eq:null}). 
\end{proof}

\section{Interpreting the bias}\label{sec:bias_disc}
The bound on estimation error described in Theorem \ref{thm:mse_bound} of Section \ref{sec:mse_bound} contains a variance term and a bias term. The variance term is standard. The bias term is new and so in this section we characterize how its features depend on the components $\phi_{g}$ and $\psi_{g}$. Intuitively, the first controls the smoothness of the regression function $h(g)$ and the second controls the quality of the nearest-neighbors that make up $\hat{h}(g)$ in terms of proximity to $g$. Supporting simulation evidence can be found in Appendix Section \ref{sec:simulations} below. 

The continuity parameter $\phi_{g}$ is often relatively easy to characterize using economic theory because many models of network interference give an explicit bound. In particular, for our three examples in Section \ref{sec:motivation_examples}, $\phi_{g}(x)$ quickly converges to $0$ with $x$ for any $g\in\mathcal{G}$. In the neighborhood spillovers model of Example \ref{ex:spillovers} with binary treatments and uniformly bounded expected outcomes, $\phi_{g}(x) \leq M\mathbbm{1}\{x > \frac{1}{1+r}\}$ where $M = \sup_{g \in \mathcal{G}}2h(g)$, because if $d(g,\tilde{g}) \leq \frac{1}{1+r}$ then $h(g) = h(\tilde{g})$. Similarly, in the social capital formation model of Example \ref{ex:capital} with uniformly bounded $2$-neighborhoods, $\phi_{g}(x) \leq M\mathbbm{1}\{x > \frac{1}{3}\}$ where $M$ bounds the number of agents within path distance $2$ of the root agent. In both of these examples, the function $\phi_{g}(x)$ is flat in a neighborhood of $0$ and so there is no asymptotic bias ($C \to \infty$). 

In the linear-in-means peer effects model of Example \ref{ex:social_interactions} with binary treatments and  assuming $\sup_{i \in \mathcal{I}}|T_i^*(1)| < \infty$, $\phi_{g}(x) \leq \frac{2M(\delta \rho)^{(1-x)/x}}{1 - \delta\rho}$ where $\rho$ bounds the spectral radius of $A^{*}(D)$, $|\delta\rho| < 1$ by assumption, and $M = \sup_{i \in V(\tilde{g}), \tilde{g} \in \mathcal{G}}\left|T_{i}\beta + T_{i}^{*}(1)\gamma\right|$. This is because if $d(g,\tilde{g}) \leq \frac{1}{1+r}$ then $h^{s}(g) = h^{s}(\tilde{g})$ for every $s \leq r$ and the remainder term in the policy function $\left|\sum_{s = r+1}^{\infty}[\delta^{s}A^{*}(D)^{s}(\mathbf{T}\beta + \mathbf{T}^{*}(1)\gamma)]_{i}\right| \leq M\sum_{s = r+1}^{\infty}\delta^{s}\rho^{s} \leq \frac{M(\delta \rho)^{r}}{1 - \delta\rho}$ for any $g \in \mathcal{G}$. In contrast to the first two examples, in this example the function $\phi_{g}(x)$ is not necessarily flat in a neighborhood of $0$. However, it is close to flat in the sense that its slope can be made arbitrarily uniformly close to $0$ in an open neighborhood of $0$.

In contrast to these explicit bounds on $\phi_{g}$, we are not aware of any similarly convenient way to analytically characterize the regularity parameter $\psi_{g}$, even for relatively simple network formation models. If the network is sparse or has a predictable structure (agents interact in small groups or on a regular lattice), then the rooted network variable may essentially act like a discrete random variable, and so $\psi_{g}(\ell)$ may be uniformly bounded away from $0$ for any fixed $\ell > 0$. For example, the model of \cite{jackson2012social} implies that certain social networks should in equilibrium be described by a union of completely connected subgraphs. If agent degree is also bounded, then this model has  $\psi_{g}(\ell)$ uniformly bounded away from $0$ for any fixed $\ell > 0$ and so there is no asymptotic bias ($C \to \infty$).

Irregular network formation models are more common empirically, however. For example, a large literature considers models of network formation in which connections between agents are conditionally independent across agent-pairs. Examples include the Erd\"os-Renyi model, Watts-Strogatz model, and random geometric graph model \cite[see broadly][Section 4.1]{jackson2008social}. For such models, the associated $\psi_{g}$ function can change dramatically with the model parameters. 



\section{Simulation evidence}\label{sec:simulations}
We provide simulation evidence characterizing some of the finite sample properties of our test procedure and estimator. Section \ref{sec:sim_design} describes the simulation design, Section \ref{sec:sim_test} gives the results for the first application testing policy irrelevance, and Section \ref{sec:sim_MSE} gives the results for the second application estimating policy effects. 

\subsection{Simulation design}\label{sec:sim_design}
We simulate data from $C$ communities, where $C$ is specified below. Each community contains $20$ agents. Links between agents are drawn from either an Erd\"os-Renyi, Watts-Strogatz, or random geometric graph model, as detailed below. 

Outcomes are generated for each agent $i \in [20]$ in each community $c \in [C]$ according to the model
\begin{align*}
Y_{ic} = \theta_1f(G_{ic}^1) + \theta_2f(G_{ic}^2) + U_{ic}
\end{align*}
where $\theta = (\theta_1, \theta_2) \in \mathbb{R}^2$ is specified below, 
\begin{align*}
f(g) &:= deg(g) + 2clust(g)~, \\
deg(g) &:= \frac{1}{|\mathcal{I}(g)|}\sum_{i \in \mathcal{I}(g)}\sum_{j \in \mathcal{I}(g)}D_{ij}(g)
\end{align*}
measures the average degree of the network $(V(g), E(g))$, 
\begin{align*}
clust(g) := \frac{1}{|\mathcal{I}(g)|}\sum_{i \in \mathcal{I}(g)}\frac{\sum_{j \in \mathcal{I}(g)}\sum_{k\in \mathcal{I}(g)}D_{ij}D_{ik}D_{jk}}{\sum_{j \in \mathcal{I}(g)}\sum_{k\in \mathcal{I}(g)}D_{ik}D_{jk}}
\end{align*}
measures the average clustering of the network $(\mathcal{I}(g), D(g))$, and $U_{ic} \sim U[-5, 5]$ is independent of $G_{ic}$. Our focus on the degree and clustering statistics is meant to mimic the first two examples of Section \ref{sec:motivation_examples} and the empirical example of Section \ref{sec:empirical_illustration}.


We draw the rooted networks from one of three models: the Erd\"os-Renyi model, the Watts-Strogatz model, and the random geometric grah model. See Section 4.1 of \cite{jackson2008social} for a textbook introduction. All three models generate unweighted networks. The first model is the Erd\"os-Renyi model, which is parametrized by $p \in [0,1]$. Under the Erd\"os-Renyi model, the upper diagonal entries of the adjacency matrix are independent and identically distributed Bernoulli random variables with mean $p$. The second model is the Watts-Strogatz model, which is parametrized by $(M,p) \in \mathbbm{N}\times [0,1]$. Under the Watts-Strogatz model, the network connections are generated in two stages. In the first stage, the nodes are arranged in a circle and each node is connected to their $2M$ closest neighbors. In the second stage, the connected pairs of nodes are listed sequentially, and with probability $p$ (drawn independently for each pair), the link between the pair is destroyed and another pair of agents (drawn uniformly at random from the set of unlinked nodes) is connected instead. The third model is the random geometric graph model, which is parametrized by $r\in [0,1]$. Under the random geometric graph model, nodes draw positions independent and uniformly distributed on $[0,1]^2$. Nodes within a Euclidean distance of $r$ are connected. 



\subsection{Testing policy irrelevance}\label{sec:sim_test}

In this section we study the finite-sample behavior of the approximate randomization test outlined in Algorithm \ref{algo:test}. Based on our simulation results, we recommend that practitioners adopt $q = 2\lfloor \log(C) \rfloor$ as a preliminary rule of thumb. We leave a detailed study of data-driven choices of $q$ to future work.

We first evaluate the size control properties of the test when the null hypothesis is true. The rooted networks we study are depicted in Figure \ref{fig:rooted_networks_testing}. We consider the following pairwise comparisons of these rooted networks and corresponding network formation model, with outcome model given by $\theta = (0,2)$:
\begin{itemize}
	\item $(g_1, g_2)$ under Erd\"os-Renyi model with parameter $0.1$.
	\item $(g_3, g_8)$ under Watts-Strogatz model with parameter $(2, 0.2)$.
	\item $(g_6, g_7)$ under Watts-Strogatz model with parameter $(2, 0.8)$.
	\item $(g_1, g_2)$ under Watts-Strogatz model with parameter $(2, 0.275)$.
	\item $(g_4, g_5)$ under random geometric model with parameter $0.2$.
	\item $(g_1, g_2)$ under random geometric model with parameter $0.2$.
\end{itemize}
We note that under our outcome model with $\theta = (0,2)$, these pairs of rooted networks have the same distributions of outcomes, and thus the null hypothesis holds. Table \ref{tab:rej_prob_null} reports the rejection probabilities of our test for $C \in \{20, 50, 100, 200\}$, and $q$ ranging from $4$ to $20$, with $\alpha = 0.05$. 
  The results show that the test rejects the null hypothesis with probability approximately equal to $\alpha$ when $q$ is small and/or $C$ is large. Size distortion may occur when $q$ is large and $C$ is small. These simulation results also demonstrate that, although we can control size for most DGPs with sufficiently small values of $q$ and large values of $C$, for some of our DGPs size control is particularly challenging unless $q$ is chosen to be very small; these are DGPs for which the rooted networks under consideration appear extremely infrequently. 

\begin{figure}
    \centering
     \begin{tikzpicture}[->,>= stealth,shorten >=1pt,auto,node distance=1cm,
        thick,main node/.style={circle,fill=blue!20,draw,minimum size=.3cm,inner sep=0pt]}]

      \node[main node] (2) {};
    \node[draw, fill=blue!20, shape=diamond, aspect=0.7, minimum height=0.3cm, inner sep=0pt] (6) [ above left of =2] {$1$};
    \node[main node] (5) [above right of=2] {};
    \node[main node] (1) [ below left of =2] {};
    \node[main node] (3) [ below right of =2] {};

    \path[-]
    (6) edge node {} (2)
        	edge node {} (1)
    (2) edge node {} (5)
    	edge node {} (1)
	edge node {} (3);
\end{tikzpicture}
\hspace{5mm}
     \begin{tikzpicture}[->,>= stealth,shorten >=1pt,auto,node distance=1cm,
        thick,main node/.style={circle,fill=blue!20,draw,minimum size=.3cm,inner sep=0pt]}]

      \node[main node] (2) {};
    \node[main node] (6) [ above left of =2] {};
    \node[draw, fill=blue!20, shape=diamond, aspect=0.7, minimum height=0.3cm, inner sep=0pt] (5) [above right of=2] {$2$};
    \node[main node] (1) [ below left of =2] {};
    \node[main node] (3) [ below right of =2] {};

    \path[-]
    (6) edge node {} (2)
        	edge node {} (1)
    (2) edge node {} (5)
    	edge node {} (1)
	edge node {} (3);
\end{tikzpicture}
\hspace{5mm}
        \begin{tikzpicture}[->,>= stealth,shorten >=1pt,auto,node distance=1cm,
        thick,main node/.style={circle,fill=blue!20,draw,minimum size=.3cm,inner sep=0pt]}]

      \node[draw, fill=blue!20, shape=diamond, aspect=0.7, minimum height=0.3cm, inner sep=0pt] (2) {$3$};
    \node[main node] (6) [ left of =2] {};
    \node[main node] (5) [above left of=6] {};
    \node[main node] (1) [below left of =6] {};

    \path[-]
    (6) edge node {} (2)   
    (1) edge node {} (6)
    (5) edge node {} (6);
\end{tikzpicture}
\hspace{5mm}
     \begin{tikzpicture}[->,>= stealth,shorten >=1pt,auto,node distance=1cm,
        thick,main node/.style={circle,fill=blue!20,draw,minimum size=.3cm,inner sep=0pt]}]

      \node[main node] (2) {};
    \node[draw, fill=blue!20, shape=diamond, aspect=0.7, minimum height=0.3cm, inner sep=0pt] (6) [ above left of =2] {$4$};
    \node[main node] (5) [right of=2] {};
    \node[main node] (1) [ below left of =2] {};

    \path[-]
    (6) edge node {} (2)
        	edge node {} (1)
    (2) edge node {} (5)
    	edge node {} (1);
\end{tikzpicture}
\hspace{5mm}
\begin{tikzpicture}[->,>=stealth,shorten >=1pt,auto,node distance=1cm,
        thick,main node/.style={circle,fill=blue!20,draw,minimum size=.3cm,inner sep=0pt}]

      \node[main node] (2) {};
      \node[main node] (6) [above left of=2] {};
      \node[main node] (1) [below left of=2] {};
      \node[draw, fill=blue!20, shape=diamond, aspect=0.7, minimum height=0.3cm, inner sep=0pt] (3) [right of=2] {$5$};

      \path[-]
        (6) edge node {} (2)
        (6) edge node {} (1)
        (2) edge node {} (5)
        (2) edge node {} (1)
        (2) edge node {} (3);
    \end{tikzpicture}
    \hspace{5mm}
    \begin{tikzpicture}[->,>=stealth,shorten >=1pt,auto,node distance=1cm,
        thick,main node/.style={circle,fill=blue!20,draw,minimum size=.3cm,inner sep=0pt}]

      \node[draw, fill=blue!20, shape=diamond, aspect=0.7, minimum height=0.3cm, inner sep=0pt] (2) {$6$};
      \node[main node] (6) [above left of=2] {};
      \node[main node] (1) [below left of=2] {};
      \node[main node] (3) [right of=2] {};
      \node[main node] (4) [right of=3] {};

      \path[-]
        (6) edge node {} (2)
        (2) edge node {} (5)
        (2) edge node {} (1)
        (2) edge node {} (3)
        (3) edge node {} (4);
    \end{tikzpicture}
    \hspace{5mm}
    \begin{tikzpicture}[->,>=stealth,shorten >=1pt,auto,node distance=1cm,
        thick,main node/.style={circle,fill=blue!20,draw,minimum size=.3cm,inner sep=0pt}]

    \node[main node] (2) {};
    \node[main node] (6) [above left of=2] {};
    \node[main node] (1) [below left of=2] {};
    \node[draw, fill=blue!20, shape=diamond, aspect=0.7, minimum height=0.3cm, inner sep=0pt] (3) [right of=2] {$7$};
    \node[main node] (4) [right of=3] {};
    
    \path[-]
    (6) edge node {} (2)
    (2) edge node {} (5)
    (2) edge node {} (1)
    (2) edge node {} (3)
    (3) edge node {} (4);
    \end{tikzpicture}
    \hspace{4.8mm}
    \begin{tikzpicture}[->,>=stealth,shorten >=1pt,auto,node distance=1cm,
        thick,main node/.style={circle,fill=blue!20,draw,minimum size=.3cm,inner sep=0pt}]
    \raisebox{6mm}{
      \node[main node] (2) {};
      \node[draw, fill=blue!20, shape=diamond, aspect=0.7, minimum height=0.3cm, inner sep=0pt] (6) [left of=2] {$8$};
      \node[main node] (5) [left of=6] {};
      \node[main node] (1) [left of=5] {};

      \path[-]
        (6) edge node {} (2)
        (5) edge node {} (6)
        (1) edge node {} (5);
    }
    \end{tikzpicture}
\caption{Eight rooted networks truncated at radius $2,$ labeled $g_1$ to $g_8$.}
\label{fig:rooted_networks_testing}
\end{figure}


\begin{table}[htbp]
    \centering
    \caption{Rejection probabilities: $\alpha = 0.05$ ($2,000$ Monte Carlo iterations)}
    \renewcommand{\arraystretch}{1.0}
    \setlength{\tabcolsep}{4pt} 
    \begin{tabular}{@{}c*{17}{c}@{}}
        \toprule
        & \multicolumn{17}{c}{$H_0 : g_1 =_d g_2$ (Erd\"os-Renyi(0.1))} \\
        & \multicolumn{17}{c}{$q$}  \\
        \cmidrule(lr){2-18} 
        $C$ & 4 & 5 & 6 & 7 & 8 & 9 & 10 & 11 & 12 & 13 & 14 & 15 & 16 & 17 & 18 & 19 & 20  \\
        \midrule
        20  &  6.7 & 7.6  & 8.5  & 9.6  & 9.3  & 8.7  & 8.5  &  &  &  &  &  &  &  &  &  &  \\
        50  & 4.2 & 4.7  & 5.2  & 5.2  & 6.0  & 7.4  & 8.8  & 9.8  & 11.3  & 12.6  & 14.1  & 15.2  & 16.5  & 17.1  & 17.3  & 17.6  & 17.5 \\
        100 & 4.2 & 4.0  & 4.0  & 4.6  & 4.9  & 5.5  & 6.0  & 5.1  & 5.3  & 5.9  & 6.5  & 6.4  & 6.0  & 7.3  & 8.0  & 9.2  & 10.1 \\
        200 & 5.6 & 5.4  & 5.0  & 5.1  & 4.6  & 4.9  & 4.6  & 5.0  & 4.3  & 4.7  & 4.9  & 5.3  & 5.4  & 4.9  & 5.1  & 5.0  & 5.0 \\
        \midrule
        & \multicolumn{17}{c}{$H_0 : g_3 =_d g_8$ (Watts-Strogatz(2, 0.2))} \\
        \midrule
        20  &  4.8  & 5.2  & 4.8  & 4.4  & 4.2  & 4.2  & 4.9  &  &  &  &  &  &  &  &  &  &  \\
        50  &  5.1  & 5.6  & 5.1  & 5.1  & 5.1  & 5.8  & 4.9  & 5.0  & 5.0  & 5.1  & 5.2  & 5.1  & 5.7  & 5.7  & 5.7  & 5.7  & 5.4 \\
        100 & 5.2  & 5.5  & 5.3  & 4.5  & 4.9  & 5.1  & 5.5  & 5.5  & 5.7  & 5.1  & 5.7  & 4.9  & 5.0  & 4.7  & 5.1  & 5.2  & 5.1 \\
        200 & 5.4  & 4.2  & 5.1  & 4.3  & 4.5  & 4.3  & 4.1  & 4.0  & 4.3  & 4.4  & 4.2  & 4.1  & 4.2  & 4.5  & 4.4  & 4.2  & 4.0 \\
        \midrule
         & \multicolumn{17}{c}{$H_0 : g_6 =_d g_7$  (Watts-Strogatz(2, 0.8))} \\
        \midrule
        20  & 5.1  & 4.6  & 5.0  & 5.4  & 5.4  & 5.9  & 5.9  &  &  &  &  &  &  &  &  &  &  \\
        50  & 5.1  & 4.8  & 4.8  & 5.0  & 4.7  & 5.3  & 4.5  & 5.2  & 5.2  & 5.4  & 5.5  & 5.8  & 5.7  & 5.0  & 5.0  & 5.0  & 5.1 \\
        100 & 5.5  & 6.2  & 5.6  & 5.7  & 5.9  & 6.2  & 5.2  & 5.1  & 5.3  & 5.3  & 5.3  & 5.1  & 4.8  & 4.7  & 4.5  & 5.0  & 5.2 \\
        200 & 5.2  & 5.5  & 4.6  & 4.8  & 5.1  & 5.1  & 5.1  & 4.8  & 5.1  & 5.3  & 5.3  & 5.1  & 5.5  & 5.8  & 5.9  & 5.2  & 4.5 \\
        \midrule
        & \multicolumn{17}{c}{$H_0 : g_1 =_d g_2$  (Watts-Strogatz(2, 0.275))} \\
        \midrule
        20  & 12.6  & 14.8  & 13.9  & 13.8  & 13.5  & 12.4  & 12.3  &  &  &  &  &  &  &  &  &  &  \\
        50  & 12.6  & 16.5  & 19.9  & 23.9  & 27.1  & 29.3  & 31.1  & 31.6  & 32.3  & 30.2  & 30.0  & 28.6  & 27.9  & 27.9  & 26.9  & 27.0  & 27.1 \\
        100 & 8.7  & 11.8  & 14.5  & 20.2  & 23.3  & 28.5  & 32.5  & 37.8  & 42.8  & 47.0  & 50.6  & 54.7  & 56.5  & 58.7  & 60.2  & 60.0  & 59.8 \\
        200 & 5.9  & 6.9  & 8.8  & 10.7  & 13.9  & 18.4  & 21.6  & 24.7  & 28.4  & 31.7  & 36.2  & 41.3  & 45.4  & 49.5  & 52.7  & 56.0  & 59.9 \\
        \midrule
        & \multicolumn{17}{c}{$H_0 : g_4 =_d g_5$  (Random Geometric(0.2))} \\
        \midrule
        20  & 5.8  & 6.3  & 7.2  & 10.1  & 13.7  & 18.9  & 25.1  &  &  &  &  &  &  &  &  &  &  \\
        50  & 4.1  & 4.5  & 5.2  & 5.5  & 5.1  & 5.3  & 5.1  & 6.0  & 5.9  & 5.7  & 6.3  & 7.0  & 8.5  & 11.9  & 14.6  & 18.2  & 23.1 \\
        100 & 4.2  & 4.5  & 4.3  & 4.9  & 4.7  & 4.7  & 4.9  & 5.1  & 4.8  & 4.9  & 4.5  & 4.5  & 4.4  & 4.7  & 4.2  & 4.3  & 4.9 \\
        200 & 5.0  & 4.8  & 4.6  & 5.0  & 5.2  & 4.7  & 5.2  & 5.0  & 4.2  & 4.4  & 4.8  & 4.5  & 4.4  & 5.2  & 5.5  & 5.0  & 4.5  \\
        \midrule
         & \multicolumn{17}{c}{$H_0 : g_1 =_d g_2$ (Random Geometric(0.2))} \\
        \midrule
        20  & 24.6  & 33.3  & 41.6  & 51.9  & 58.2  & 63.9  & 69.6  &  &  &  &  &  &  &  &  &  &  \\
        50  & 15.0  & 21.8  & 26.9  & 36.7  & 45.3  & 52.2  & 58.9  & 63.9  & 69.0  & 73.6  & 79.2  & 82.8  & 85.6  & 87.9  & 90.8  & 92.3  & 89.0 \\
        100 &7.7  & 10.3  & 15.4  & 22.1  & 28.0  & 34.0  & 40.9  & 48.0  & 54.9  & 59.9  & 65.6  & 71.7  & 75.8  & 79.5  & 82.6  & 85.7  & 89.0 \\
        200 & 5.0  & 5.7  & 6.4  & 8.2  & 9.8  & 12.9  & 17.6  & 21.9  & 26.4  & 33.5  & 40.1  & 45.2  & 51.9  & 57.7  & 63.6  & 67.7  & 72.4 \\
        \bottomrule
    \end{tabular}
    \label{tab:rej_prob_null}
\end{table}

Next, to evaluate the power properties of the test procedure, we consider two rooted networks $g_{3}$ and $g_{4}$ that are associated with two different conditional distributions of outcomes under the model in Section \ref{sec:sim_design}. To vary the extent to which the distributions of the outcomes under these two rooted networks differ, we modify our data generating process for the outcomes by varying the amount of unobserved heterogeneity $U_{ic}$. In particular, we consider the following distributions of $U_{ic}$: $U[-1,1]$, $U[-2,2]$, $U[-3,3]$, $U[-4,4]$. Table \ref{tab:rej_prob_alt} reports the results of the rejection probabilities of our test when testing $g_3 =_d g_4$  at level $\alpha = 0.05$, for $q$ ranging from $4$ to $20$. The results show that the test correctly rejects the null hypothesis with probability greater than $\alpha$. The probability of rejection generally increases with $q$ and $C$.

\begin{table}[htbp]
    \centering
    \caption{Rejection probabilities: $\alpha = 0.05$ ($2,000$ Monte Carlo iterations)}
    \renewcommand{\arraystretch}{1.0}
    \setlength{\tabcolsep}{1pt} 
    \begin{tabular}{@{}c*{17}{c}@{}}
        \toprule
        & \multicolumn{17}{c}{$H_0 : g_3 =_d g_4$ (Erd\"os-Renyi(0.1), $U_{ic} \sim U\in[-5,5]$)} \\
        & \multicolumn{17}{c}{$q$}  \\
        \cmidrule(lr){2-18} 
        $C$ & 4 & 5 & 6 & 7 & 8 & 9 & 10 & 11 & 12 & 13 &14 & 15 & 16 & 17 & 18 & 19 & 20  \\
        \midrule
        20 &  16.7  & 20.2  & 22.1  & 22.4  & 21.3  & 20.4  & 19.4  &  &  &  &  &  &  &  &  &  &  \\
        50 &  18.0  & 22.7  & 28.5  & 31.5  & 34.6  & 39.7  & 43.1  & 45.7  & 48.7  & 50.1  & 54.2  & 55.7  & 57.2  & 57.1  & 56.6  & 55.2  & 53.8 \\
        100 &  18.7  & 25.2  & 28.4  & 34.2  & 37.5  & 40.5  & 43.2  & 47.9  & 50.7  & 54.9  & 58.0  & 60.9  & 64.3  & 66.8  & 68.6  & 71.3  & 73.9 \\
        200 &  21.9  & 29.2  & 32.3  & 37.7  & 41.9  & 45.9  & 49.0  & 53.2  & 56.1  & 58.9  & 61.8  & 63.4  & 66.7  & 68.8  & 70.8  & 74.0  & 75.9 \\
         \midrule
        & \multicolumn{17}{c}{$H_0 : g_3 =_d g_4$ (Erd\"os-Renyi(0.1), $U_{ic} \sim U\in[-4,4]$)} \\
        \midrule
        20  & 22.8  & 28.4  & 30.9  & 31.3  & 30.5  & 28.6  & 27.6  &  &  &  &  &  &  &  &  &  &  \\
        50  & 25.4  & 32.7  & 40.5  & 45.5  & 50.4  & 55.3  & 58.9  & 62.3  & 66.8  & 70.6  & 73.3  & 74.4  & 74.1  & 75.2  & 73.6  & 73.5  & 72.7 \\
        100 & 28.7  & 35.2  & 41.3  & 47.7  & 52.3  & 55.7  & 61.1  & 66.0  & 70.2  & 73.8  & 77.7  & 79.4  & 82.0  & 83.6  & 86.1  & 88.1  & 90.2 \\
        200 & 30.4  & 41.5  & 46.9  & 54.7  & 58.1  & 63.0  & 68.9  & 71.7  & 74.6  & 77.6  & 80.0  & 83.1  & 84.2  & 86.0  & 88.1  & 89.5  & 90.7 \\
        \midrule
         & \multicolumn{17}{c}{$H_0 : g_3 =_d g_4$ (Erd\"os-Renyi(0.1), $U_{ic} \sim U\in[-3,3]$)} \\
        \midrule
        20  & 34.2  & 44.3  & 47.3  & 46.9  & 46.1  & 43.1  & 41.4  &  &  &  &  &  &  &  &  &  &  \\
        50  & 40.9  & 51.9  & 60.0  & 67.5  & 73.2  & 77.7  & 82.1  & 86.4  & 89.1  & 90.8  & 91.8  & 92.5  & 92.5  & 92.0  & 91.8  & 91.2  & 90.1 \\
        100 & 46.2  & 56.6  & 64.5  & 70.7  & 75.9  & 81.1  & 85.5  & 88.9  & 91.1  & 93.1  & 94.6  & 95.7  & 96.7  & 97.6  & 98.1  & 98.7  & 98.7 \\
        200 & 50.6  & 65.2  & 72.7  & 78.5  & 83.5  & 86.9  & 91.2  & 92.0  & 93.6  & 95.4  & 96.3  & 97.1  & 97.8  & 98.4  & 98.7  & 99.1  & 99.3 \\
        \midrule
        & \multicolumn{17}{c}{$H_0 : g_3 =_d g_4$ (Erd\"os-Renyi(0.1), $U_{ic} \sim U\in[-2,2]$)} \\
        \midrule
        20  & 61.6  & 71.8  & 75.2  & 72.3  & 70.1  & 67.4  & 65.4  &  &  &  &  &  &  &  &  &  &  \\
        50  & 70.5  & 83.5  & 89.2  & 93.7  & 96.2  & 97.8  & 98.6  & 99.3  & 99.5  & 99.6  & 99.7  & 99.6  & 99.5  & 99.4  & 99.6  & 99.3  & 99.4 \\
        100 & 81.6  & 90.1  & 93.9  & 96.5  & 98.3  & 98.9  & 99.3  & 99.8  & 99.9  & 99.8  & 99.9  & 99.9  & 100.0  & 100.0  & 100.0  & 100.0  & 100.0 \\
        200 & 90.6  & 96.5  & 98.3  & 98.9  & 99.5  & 99.6  & 99.7  & 99.9  & 99.9  & 99.9  & 100.0  & 99.9  & 100.0  & 100.0  & 100.0  & 100.0  & 100.0 \\
        \midrule
        & \multicolumn{17}{c}{$H_0 : g_3 =_d g_4$ (Erd\"os-Renyi(0.1), $U_{ic} \sim U\in[-1,1]$)} \\
        \midrule
        20  & 96.7  & 94.1  & 93.9  & 90.5  & 90.4  & 89.3  & 87.4  &  &  &  &  &  &  &  &  &  &  \\
        50  & 99.4  & 99.9  & 100.0  & 100.0  & 100.0  & 100.0  & 100.0  & 100.0  & 100.0  & 100.0  & 100.0  & 100.0  & 100.0  & 100.0  & 100.0  & 100.0  & 100.0 \\
        100 & 99.8  & 100.0  & 100.0  & 100.0  & 100.0  & 100.0  & 100.0  & 100.0  & 100.0  & 100.0  & 100.0  & 100.0  & 100.0  & 100.0  & 100.0  & 100.0  & 100.0 \\
        200 & 99.9  & 100.0  & 100.0  & 100.0  & 100.0  & 100.0  & 100.0  & 100.0  & 100.0  & 100.0  & 100.0  & 100.0  & 100.0  & 100.0  & 100.0  & 100.0  & 100.0 \\
        \bottomrule
    \end{tabular}
    \label{tab:rej_prob_alt}
\end{table}

\subsection{Estimating policy effects}\label{sec:sim_MSE}
We study the mean-squared error of the $k$-nearest-neighbor estimator for the policy function $h$ given in Section \ref{sec:estimator}. for four rooted networks and $\theta \in \{ (1,0),(1,1/2) \}$. Under $\theta = (1,0)$ the distribution of outcomes depends on the features of the network within radius $1$ of the root. Under $\theta = (1,1/2)$ the distribution of outcomes  also depends on the features of the network within radius $2$ of the root. 

The choice of rooted networks we consider is represented by $g_3$, $g_4$, $g_9$, and $g_0$ in Figure \ref{fig:rooted_networks}. Networks $g_{3}$ and $g_{9}$ depict two wheels with the rooted agent on the periphery. These networks have moderate average degree and no average clustering: $(3/2,0)$ and $(5/3,0)$ respectively. Network $g_{4}$ depicts a closed triangle connected to a single agent. This network has moderate average degree and high average clustering $(2,7/12)$.  Finally, network $g_{0}$ depicts a closed triangle connected to a wheel with the rooted agent both on the periphery of the wheel and part of the triangle. This network has moderate average degree and average clustering $(2,1/3)$.
The results of the simulation are shown in Table 5. As suggested by Theorem 4.2, mean-squared error is generally decreasing with $C$ for a fixed choice of $k$. In addition, mean-square error is generally smaller for the $\theta = (1, 0)$ experiment than it is for the $\theta = (1, 0.5)$ experiment for a fixed choice of $C$ and $k$. The effect is more pronounced for the networks $g_{4}$ and $g_{0}$, for which we typically observe fewer good matches in the data compared to $g_3$ and $g_9$ (we quantify this observation by estimating $\psi_{g}$ for each of the four networks in Section \ref{sec:psi_measure}). This is also consistent with Theorem 4.2.


Fixing $C$ and comparing across $k$, we expect a bias-variance trade-off. For networks $g_3$ and $g_9$, there is no meaningful bias in the estimated policy function because $f(\tilde{g}^{1})$ and $f(\tilde{g}^{2})$ are similar and $\psi_{\tilde{g}}(1) \approx 1$ for $\tilde{g} = g_{3}, g_9$. As a result, it is optimal to use the nearest-neighbor from every community in $[C]$ (i.e. choose $k = C$). In contrast for  $g_4$ and $g_{0}$ the rooted networks of the nearest neighbors in each community may be very different from the relevant policies, and so setting $k = C$ can lead to an inflated mean-squared error. 

We conclude that unless the researcher has additional information about the structure of network interference or the density of the policies of interest, $k$ should not be large relative to the sample size. This is also consistent with our findings for the test of policy irrelevance above. 

\begin{figure}
    \centering
        \begin{tikzpicture}[->,>= stealth,shorten >=1pt,auto,node distance=1cm,
        thick,main node/.style={circle,fill=blue!20,draw,minimum size=.3cm,inner sep=0pt]}]

      \node[draw, fill=blue!20, shape=diamond, aspect=0.7, minimum height=0.3cm, inner sep=0pt] (2) {$3$};
    \node[main node] (6) [ left of =2] {};
    \node[main node] (5) [above left of=6] {};
    \node[main node] (1) [below left of =6] {};

    \path[-]
    (6) edge node {} (2)   
    (1) edge node {} (6)
    (5) edge node {} (6);
\end{tikzpicture}
\hspace{5mm}
     \begin{tikzpicture}[->,>= stealth,shorten >=1pt,auto,node distance=1cm,
        thick,main node/.style={circle,fill=blue!20,draw,minimum size=.3cm,inner sep=0pt]}]

      \node[main node] (2) {};
    \node[draw, fill=blue!20, shape=diamond, aspect=0.7, minimum height=0.4cm, inner sep=0pt] (6) [ above left of =2] {$4$};
    \node[main node] (5) [right of=2] {};
    \node[main node] (1) [ below left of =2] {};

    \path[-]
    (6) edge node {} (2)
        	edge node {} (1)
    (2) edge node {} (5)
    	edge node {} (1);
\end{tikzpicture}
\hspace{5mm}
        \begin{tikzpicture}[->,>= stealth,shorten >=1pt,auto,node distance=1cm,
        thick,main node/.style={circle,fill=blue!20,draw,minimum size=.3cm,inner sep=0pt]}]

      \node[draw, fill=blue!20, shape=diamond, aspect=0.7, minimum height=0.4cm, inner sep=0pt] (2) {$9$};
    \node[main node] (6) [ left of =2] {};
    \node[main node] (5) [above left of=6] {};
    \node[main node] (4) [above of=6] {};
    \node[main node] (3) [below of=6] {};
    \node[main node] (1) [below left of =6] {};

    \path[-]
    (6) edge node {} (2)   
    (1) edge node {} (6)
    (5) edge node {} (6)
    (4) edge node {} (6)
    (3) edge node{} (6);
\end{tikzpicture}
\hspace{5mm}
     \begin{tikzpicture}[->,>= stealth,shorten >=1pt,auto,node distance=1cm,
        thick,main node/.style={circle,fill=blue!20,draw,minimum size=.3cm,inner sep=0pt]}]

      \node[draw, fill=blue!20, shape=diamond, aspect=0.7, minimum height=0.4cm, inner sep=0pt] (2) {$0$};
    \node[main node] (6) [ above left of =2] {};
    \node[main node] (5) [right of=2] {};
    \node[main node] (1) [ below left of =2] {};
    \node[main node] (3) [ below right of =5] {};
    \node[main node] (4) [ above right of =5] {};
    \node[main node] (8) [ right of =5] {};

    \path[-]
    (6) edge node {} (2)
        	edge node {} (1)
    (2) edge node {} (5) 
    (1) edge node {} (2)
    (4) edge node {} (5)
    (8) edge node {} (5)
    (3) edge node {} (5);
\end{tikzpicture}

\caption{Four rooted networks truncated at radius $2,$ labeled $g_3$, $g_4$, $g_9$, and $g_{0}$.}\label{fig:rooted_networks}
\end{figure}

\begin{table}[htbp]
\scriptsize
  \centering
  \caption{Estimated MSEs \\ ($1,000$ Monte Carlo iterations)}
    \begin{tabular}{rcccc|cccc|cccccc}
    \toprule
          & \multicolumn{1}{r}{} &       & $C = 20$ & \multicolumn{1}{r}{} &       & $C = 50$ &       & \multicolumn{1}{r}{} &       &       & $C = 100$ &       &       &  \\
          & \multicolumn{1}{r}{} &       & $k$     & \multicolumn{1}{c}{} &       & $k$     &       & \multicolumn{1}{r}{} &       &       & $k$     &       &       &  \\
\cmidrule{3-15}    \multicolumn{1}{c}{$\alpha$} & \multicolumn{1}{c}{$g_{\iota}$} & 5     & 10    & \multicolumn{1}{c}{20}    & 5     & 10    & 20    & \multicolumn{1}{c}{50}    & 5     & 10    & 20    & 50    & 75    & 100 \\
    \midrule
          & $g_3$    & 1.71  & 0.84  & 0.41  & 1.69  & 0.86  & 0.43  & 0.18  & 1.63  & 0.81  & 0.41  & 0.17  & 0.11  & 0.08 \\
    \multicolumn{1}{c}{$(1, 0)$} &$g_4$    & 1.77  & 1.53  & 3.27  & 1.65  & 0.89  & 0.55  & 3.02  & 1.64  & 0.84  & 0.42  & 0.64  & 1.92  & 2.97 \\
              &  $g_{9}$     & 1.7   & 0.84  & 0.42  & 1.75  & 0.84  & 0.44  & 0.17  & 1.66  & 0.85  & 0.42  & 0.17  & 0.12  & 0.09 \\
          & $g_{0}$   & 1.77  & 0.98  & 1.41  & 1.6   & 0.81  & 0.42  & 1.17  & 1.74  & 0.84  & 0.44  & 0.2   & 0.59  & 1.09 \\
    \midrule
          & $g_3$    & 1.71  & 0.84  & 0.41  & 1.69  & 0.86  & 0.43  & 0.18  & 1.63  & 0.81  & 0.41  & 0.17  & 0.11  & 0.09 \\
    \multicolumn{1}{c}{$(1, 0.5)$} &  $g_4$    & 1.85  & 2.15  & 5.37  & 1.66  & 0.9   & 0.72  & 5.09  & 1.65  & 0.86  & 0.44  & 1.1   & 3.32  & 5.04 \\
             &  $g_{9}$    & 1.7   & 0.84  & 0.42  & 1.75  & 0.84  & 0.44  & 0.17  & 1.66  & 0.86  & 0.42  & 0.17  & 0.12  & 0.09 \\
            & $g_{0}$  & 1.79  & 1.05  & 2.07  & 1.6   & 0.81  & 0.42  & 1.81  & 1.75  & 0.84  & 0.44  & 0.21  & 0.87  & 1.73 \\

    \bottomrule
    \end{tabular}%
  \label{tab:MSE_results}%
\end{table}%

\subsection{Measuring network regularity}\label{sec:psi_measure}
In Section \ref{sec:est_asf} we identified the function $\psi_{g}(\ell)$ as a key determinant of the estimation bias in Theorem \ref{thm:mse_bound}. This function measures the probability that the nearest-neighbor of $g$ from a randomly drawn network is within distance $\ell$ of $g$ as measured by $d$. 

Figure \ref{fig:psi_nulls} displays estimates of the $\psi_{g}$ function for the eight rooted networks considered in the simulation design of Section \ref{sec:sim_test}. The figures were constructed by generating $3000$ random graphs under different network formation models with $20$ nodes each and recording the distances of the nearest neighbor to $g$ in each graph. 

The results indicate that $g_3$ and $g_8$ appear frequently under Watts-Strogatz(2,0.2) and $g_6$ and $g_7$ appear frequently under Watts-Strogatz(2,0.8), while the other networks are relatively infrequently under their corresponding network formation models, which may explain why testing $g_3 =_d g_8$ under Watts-Strogatz(2,0.2) and $g_6 =_d g_7$ under Watts-Strogatz(2,0.8) displays better size control compared to the other cases (See Table \ref{tab:rej_prob_null}).

Figure \ref{fig:psi} displays estimates of the $\psi_{g}$ function for the four rooted networks considered in the simulation design of Section \ref{sec:sim_MSE}. The figures were constructed by generating $3000$ Erd\"os-Renyi(0.1) random graphs with $20$ nodes each and recording the distances of the nearest neighbor to $g$ in each graph. Intuitively, networks $g_{4}$ and  $g_{0}$ are rare because triadic closure is uncommon under the random graph model. The network $g_{9}$ is also relatively rare because the coincidence of five agents linked to a common agent is uncommon for such a sparse random network. 

We remark that strategic interaction between agents should in principle further discipline the regularity of the network, particularly if only a small number of configurations are consistent with equilibrium linking behavior. Characterizing $\psi_{g}$ for such strategic network formation models is an important area for future work.

 
%

\begin{figure}
    \centering
    \begin{subfigure}[b]{0.49\textwidth}
    	\centering
 	\includegraphics[width=1\textwidth]{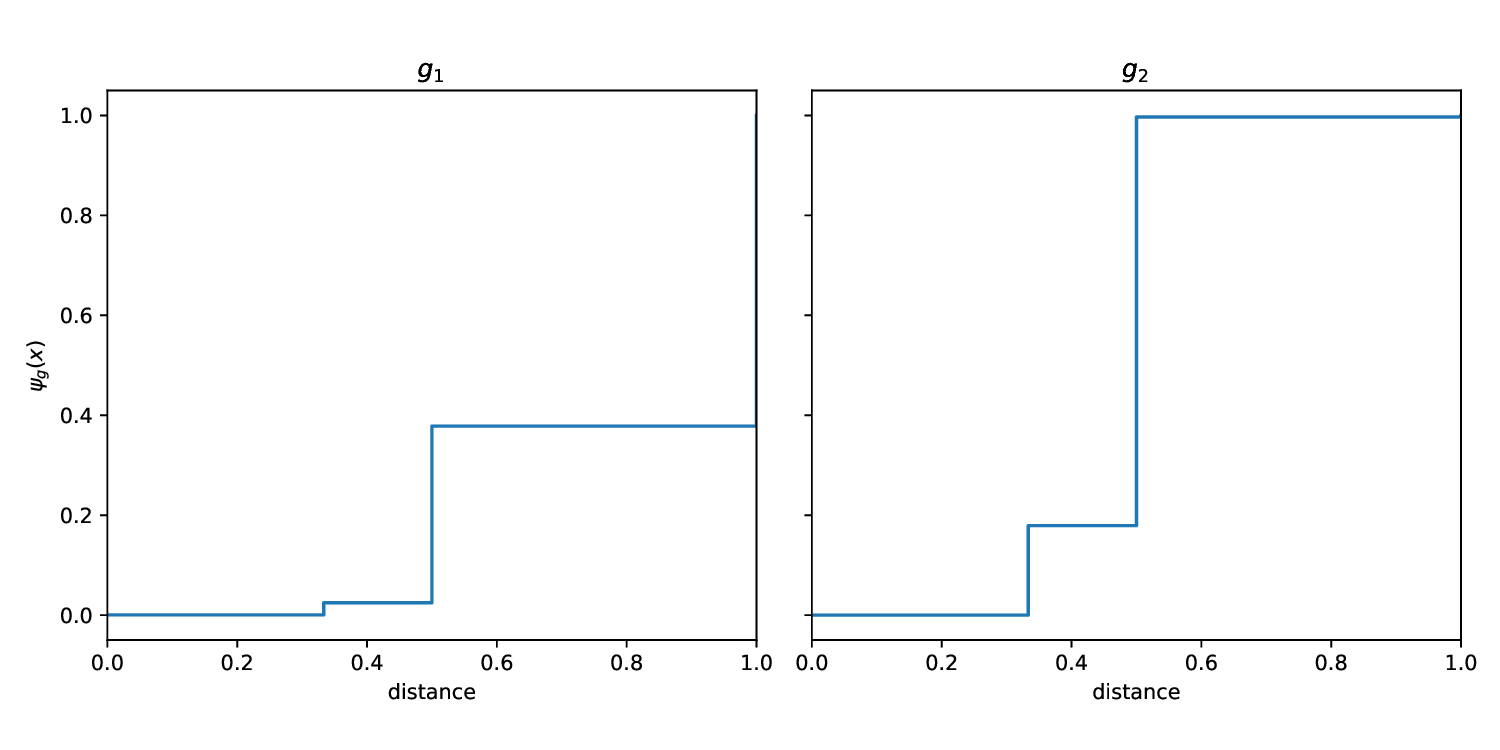}
    	\caption{$g_1$ and $g_2$ under Erd\"os-Renyi(0.1)}
    \end{subfigure}%
    \hfill
    \begin{subfigure}[b]{0.49\textwidth}
    	\centering
 	\includegraphics[width=1\textwidth]{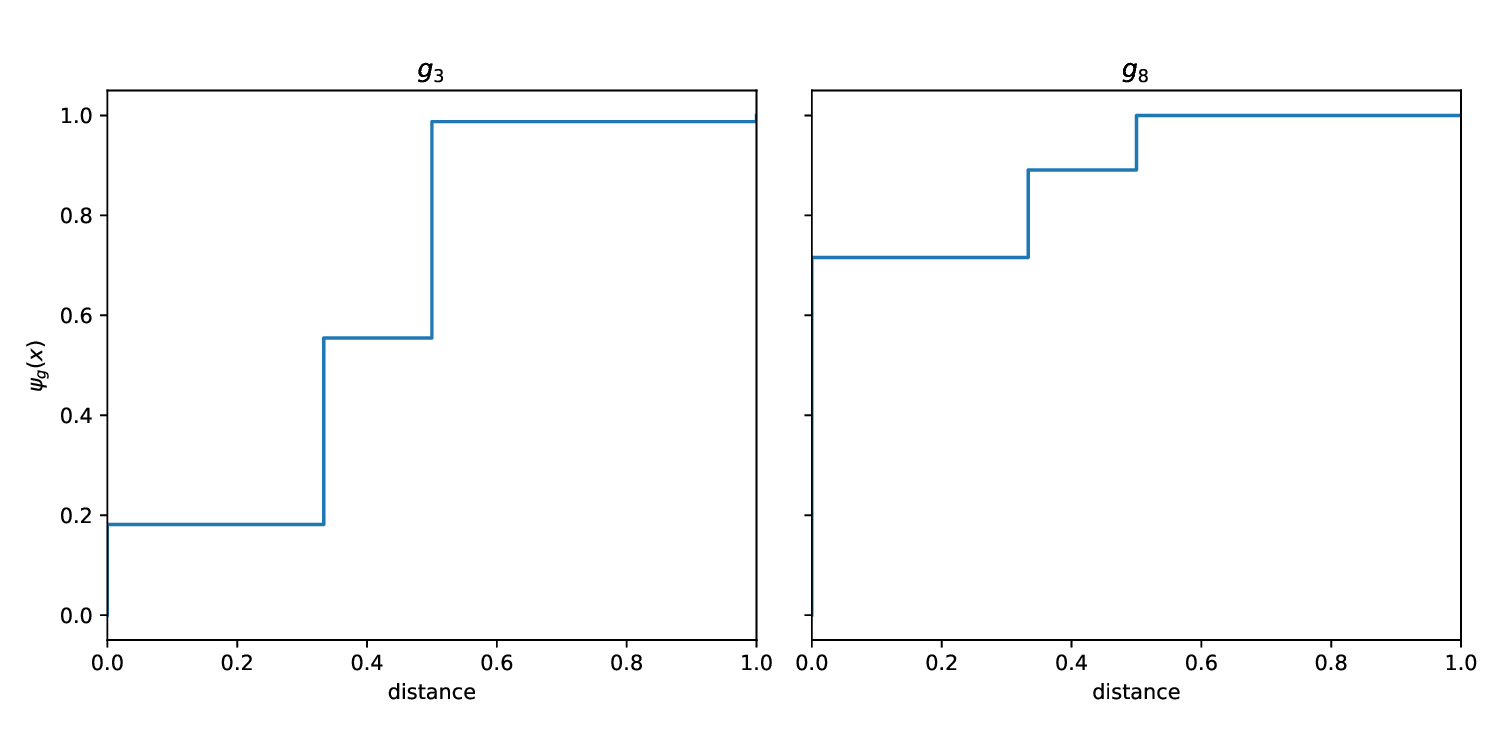}
    	\caption{$g_3$ and $g_8$ under Watts-Strogatz(2,0.2)}
    \end{subfigure}
    
    \begin{subfigure}[b]{0.49\textwidth}
    	\centering
 	\includegraphics[width=1\textwidth]{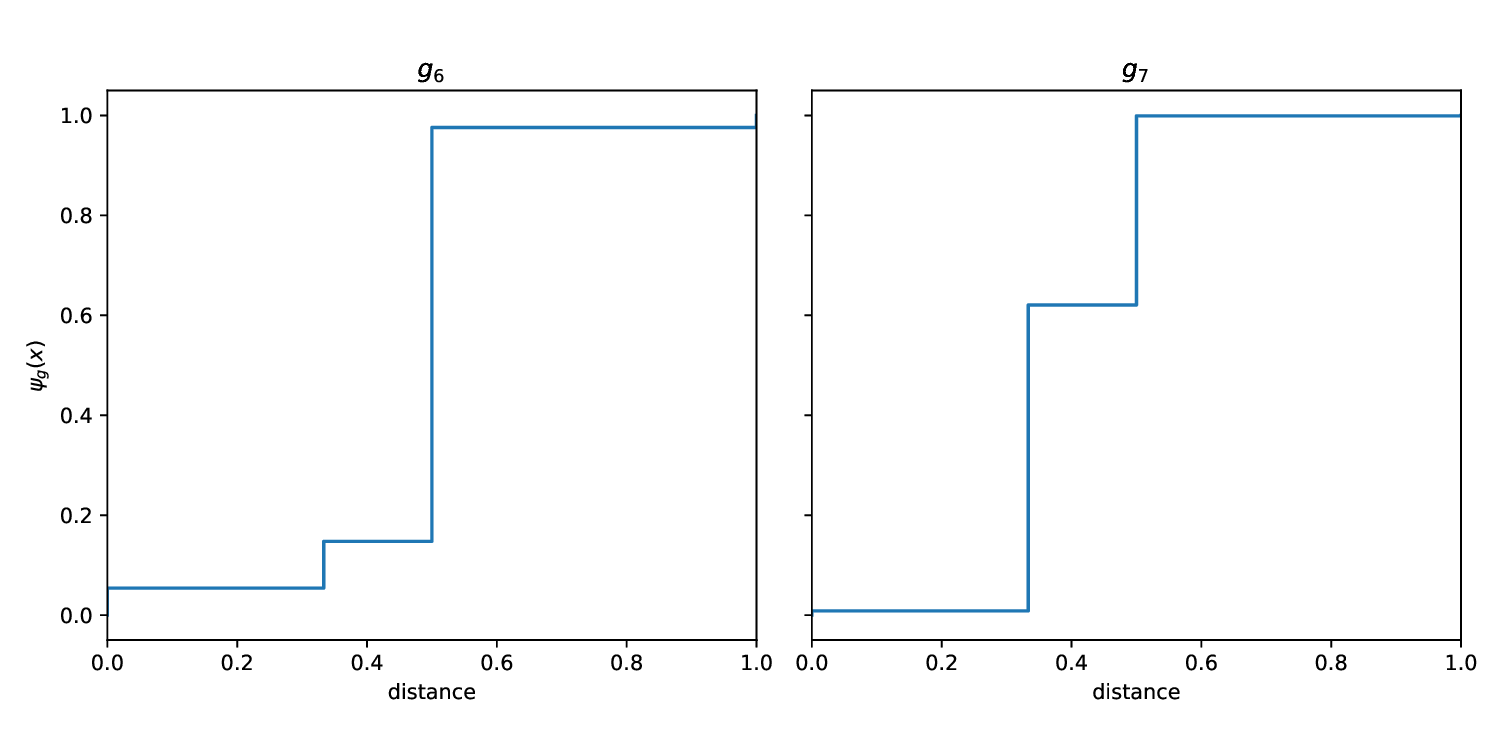}
    	\caption{$g_6$ and $g_7$ under Watts-Strogatz(2,0.8)}
    \end{subfigure}%
    \hfill
    \begin{subfigure}[b]{0.49\textwidth}
    	\centering
 	\includegraphics[width=1\textwidth]{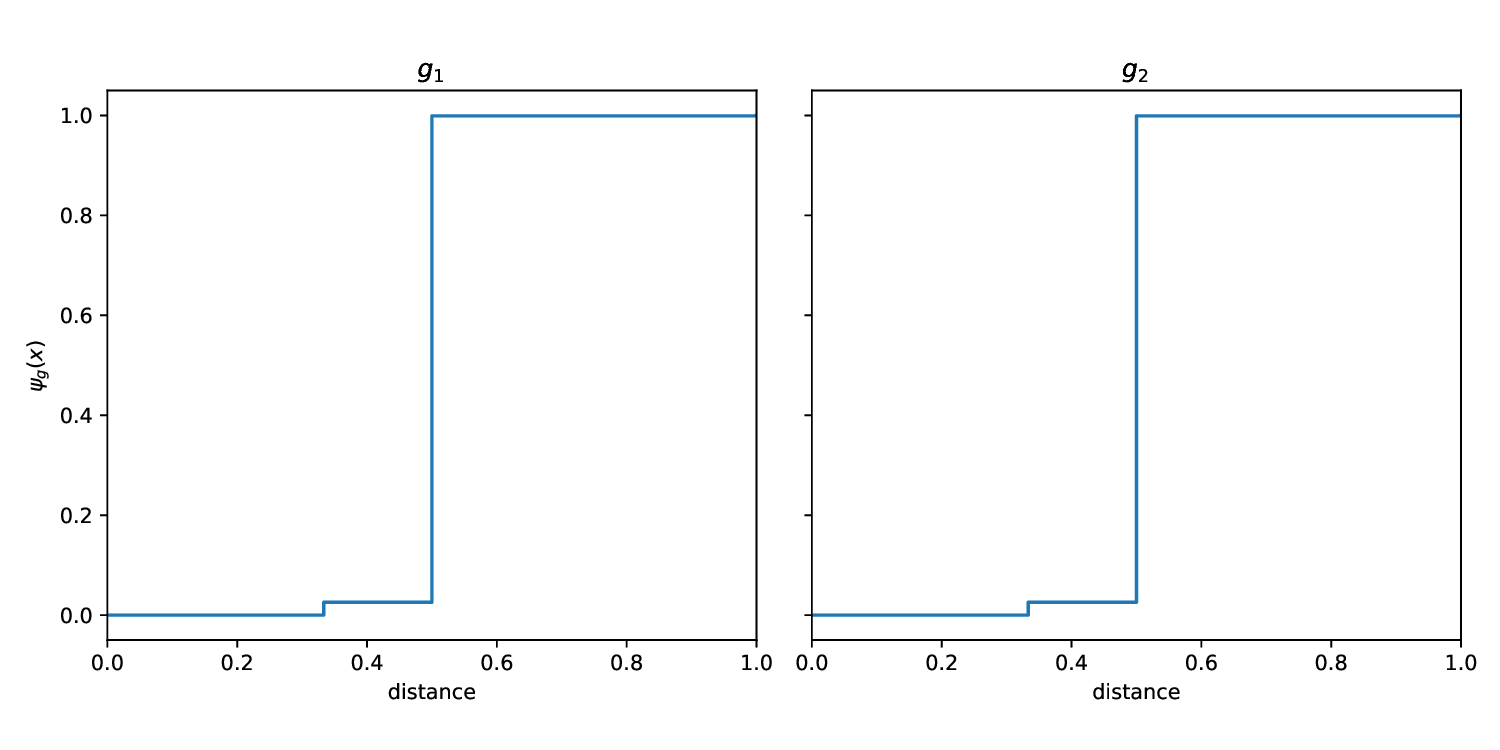}
    	\caption{$g_1$ and $g_2$ under Watts-Strogatz(2,0.275)}
    \end{subfigure}
    
     \begin{subfigure}[b]{0.49\textwidth}
    	\centering
 	\includegraphics[width=1\textwidth]{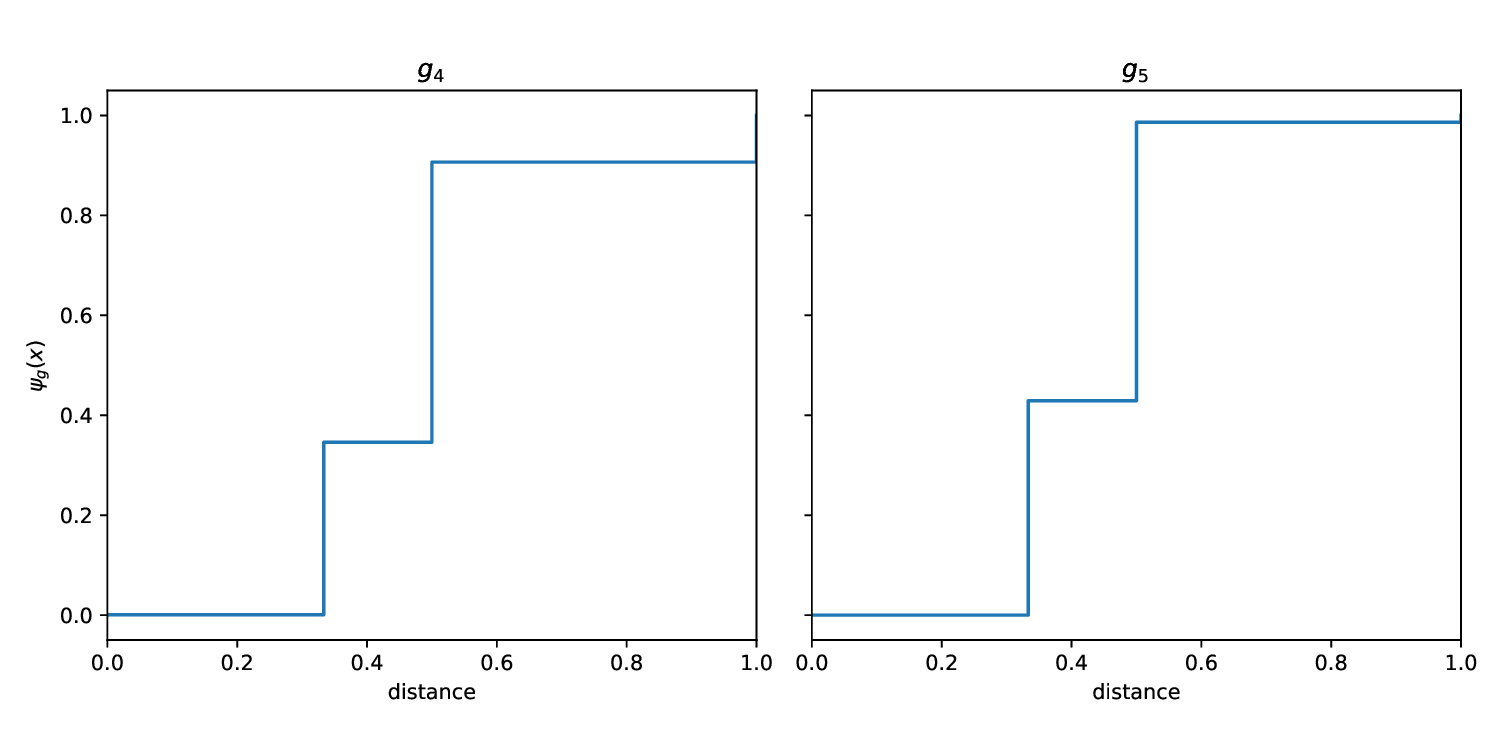}
    	\caption{$g_4$ and $g_5$ under random geometric (0.2)}
    \end{subfigure}%
    \hfill
    \begin{subfigure}[b]{0.49\textwidth}
    	\centering
 	\includegraphics[width=1\textwidth]{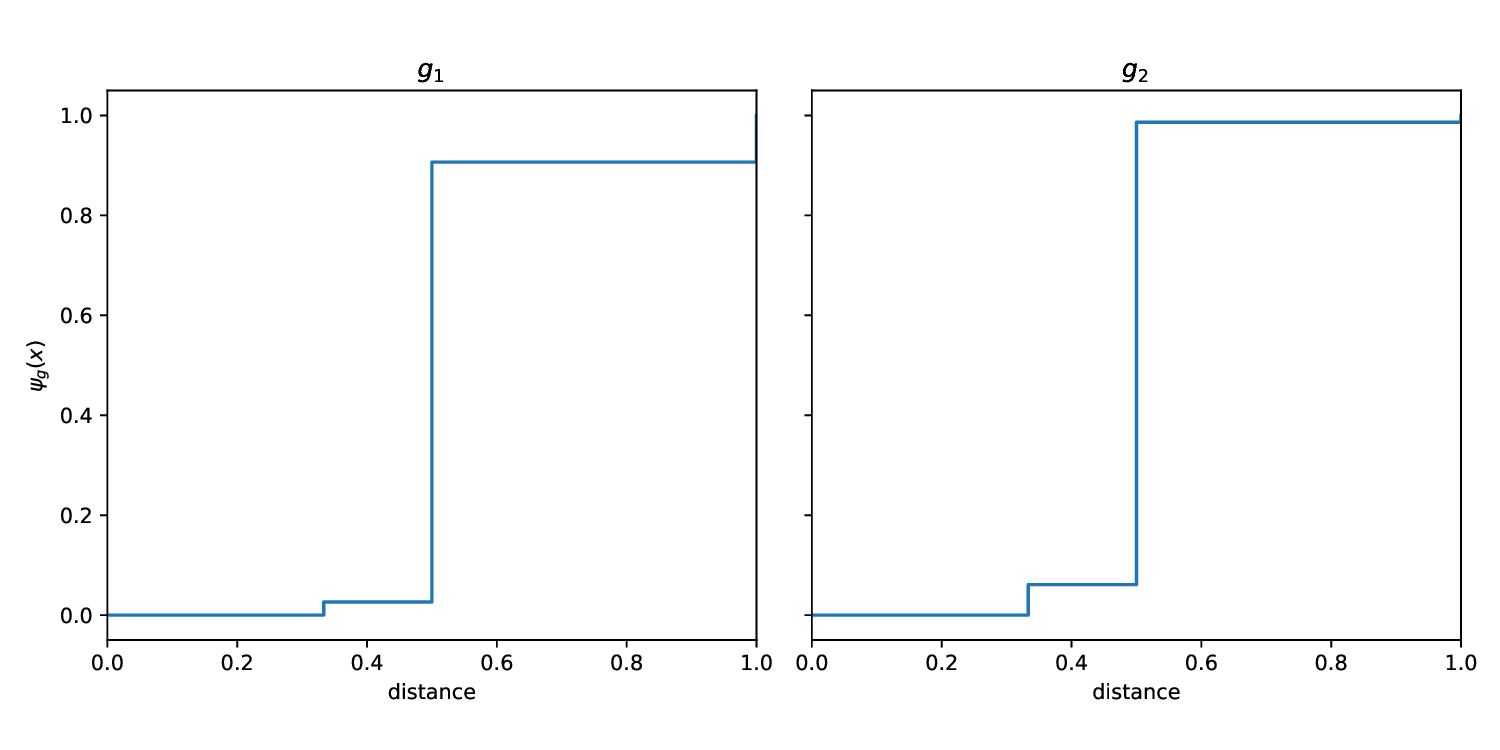}
    	\caption{$g_1$ and $g_2$ under random geometric (0.2)}
    \end{subfigure}
    
    \caption{Estimated $\psi_g(\cdot)$ for $g_1$ to $g_8$ under different network formation models}
    \label{fig:psi_nulls}
\end{figure}

\begin{figure}
	\centering
	\includegraphics[width=1\textwidth]{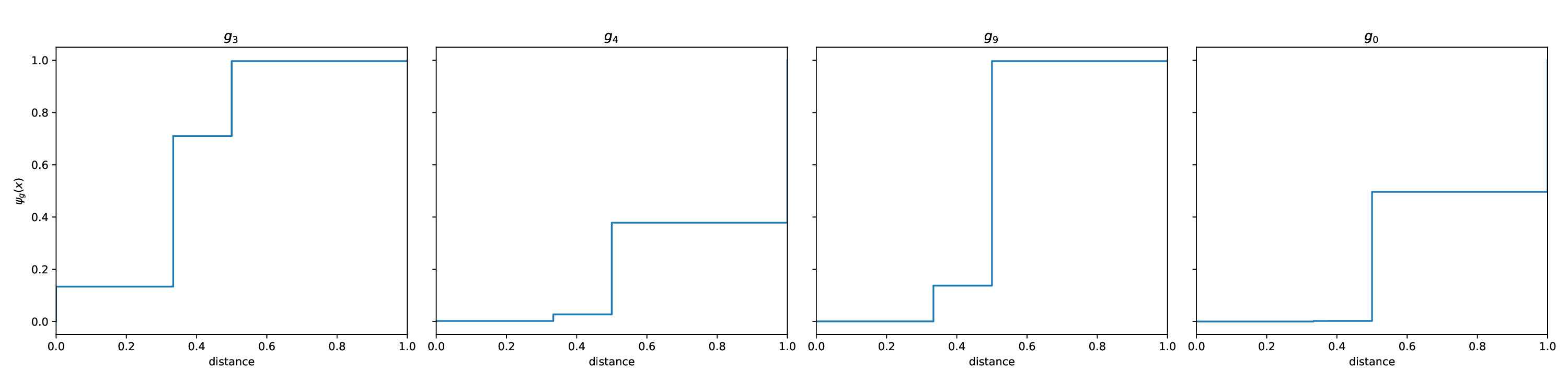}
	\caption{Estimated $\psi_{g}(\cdot)$ for $g_3$, $g_4$, $g_9$, $g_0$ under Erd\"os-Renyi(0.1).}\label{fig:psi}
\end{figure}

\section{Additional details for Section \ref{sec:empirical_illustration}}\label{sec:robust_app}
In this section we provide some additional supporting evidence for the empirical illustration. Figures \ref{fig:closest_knife_knife_vs_fork_jackson_testing}--\ref{fig:closest_spoon_fork_vs_spoon_jackson_testing} report the set of nearest neighbors which we use for our tests of policy irrelevance.  The figures demonstrate that, for fork and knife, there are villages which contain rooted networks which match at a radius of 2. The same is not true for spoon; this is further confirmed by the density plots provided in Figure \ref{fig:edf_distance_app}, which demonstrate that there are \emph{no} rooted networks in the data which match with spoon at a radius of two. In order to break these arbitrary ties when reporting the results of the approximate randomization test for our empirical application, we chose the rooted network with the fewest number of nodes. To check the robustness of the $p$-values to this arbitrary tie-breaking rule, in Tables \ref{tab:robust_p_knife_fork} and \ref{tab:robust_p_fork_spoon} we report the $p$-value obtained from ten randomly drawn rooted networks among those with the same distance from the target rooted network. The results show that our conclusions are relatively robust to this selection rule.

\begin{figure}[h]
     \centering
     \includegraphics[width=\textwidth]{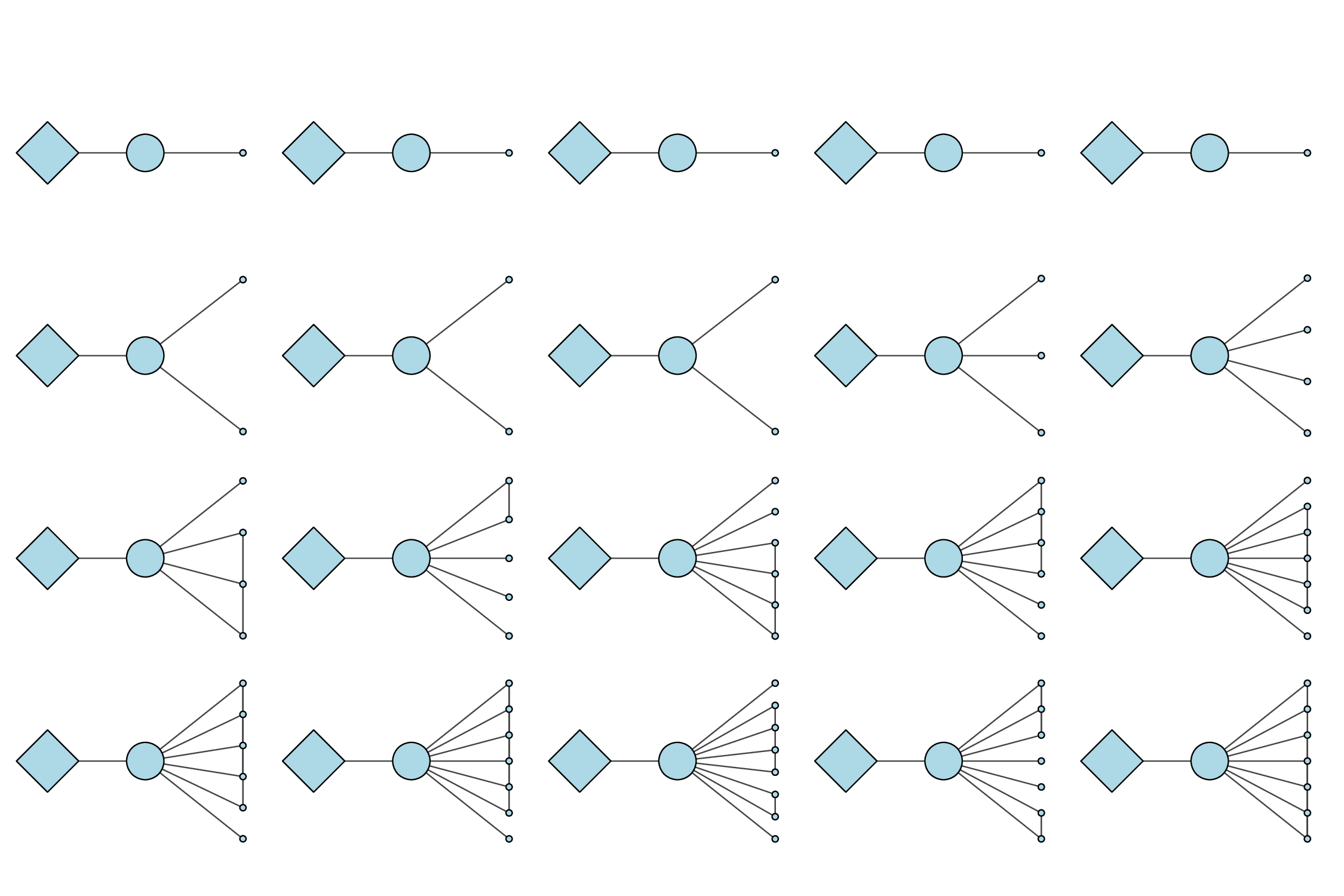}
    \caption{Neighbors matched with knife (radius=2) for testing $Y_\alpha =_d Y_\beta$}
    \label{fig:closest_knife_knife_vs_fork_jackson_testing}
\end{figure}

\begin{figure}[h]
     \centering
     \includegraphics[width=\textwidth]{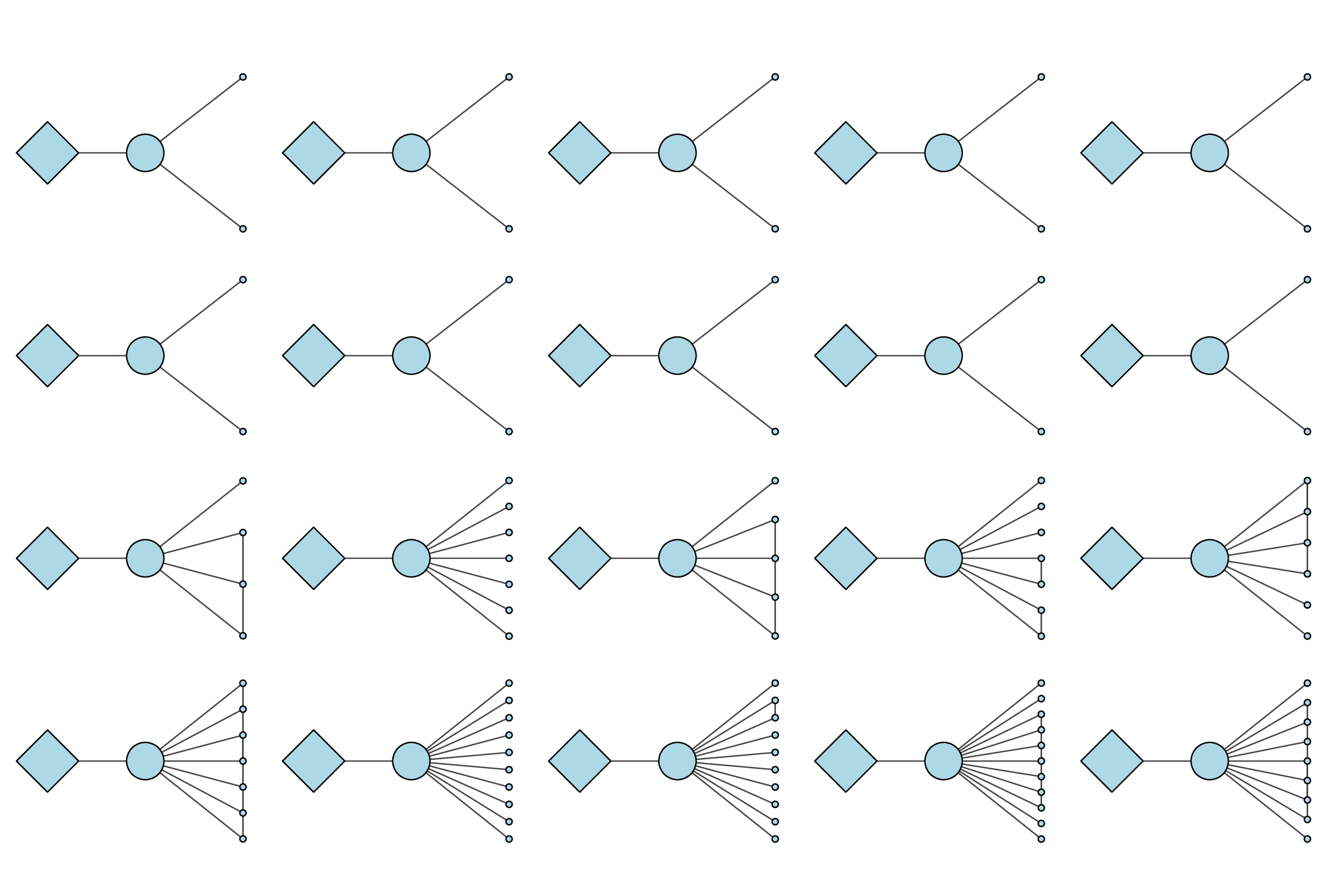}
    \caption{Neighbors matched with fork (radius=2) for testing $Y_\alpha =_d Y_\beta$}
    \label{fig:closest_fork_knife_vs_fork_jackson_testing}
\end{figure}

\begin{figure}[h]
     \centering
     \includegraphics[width=\textwidth]{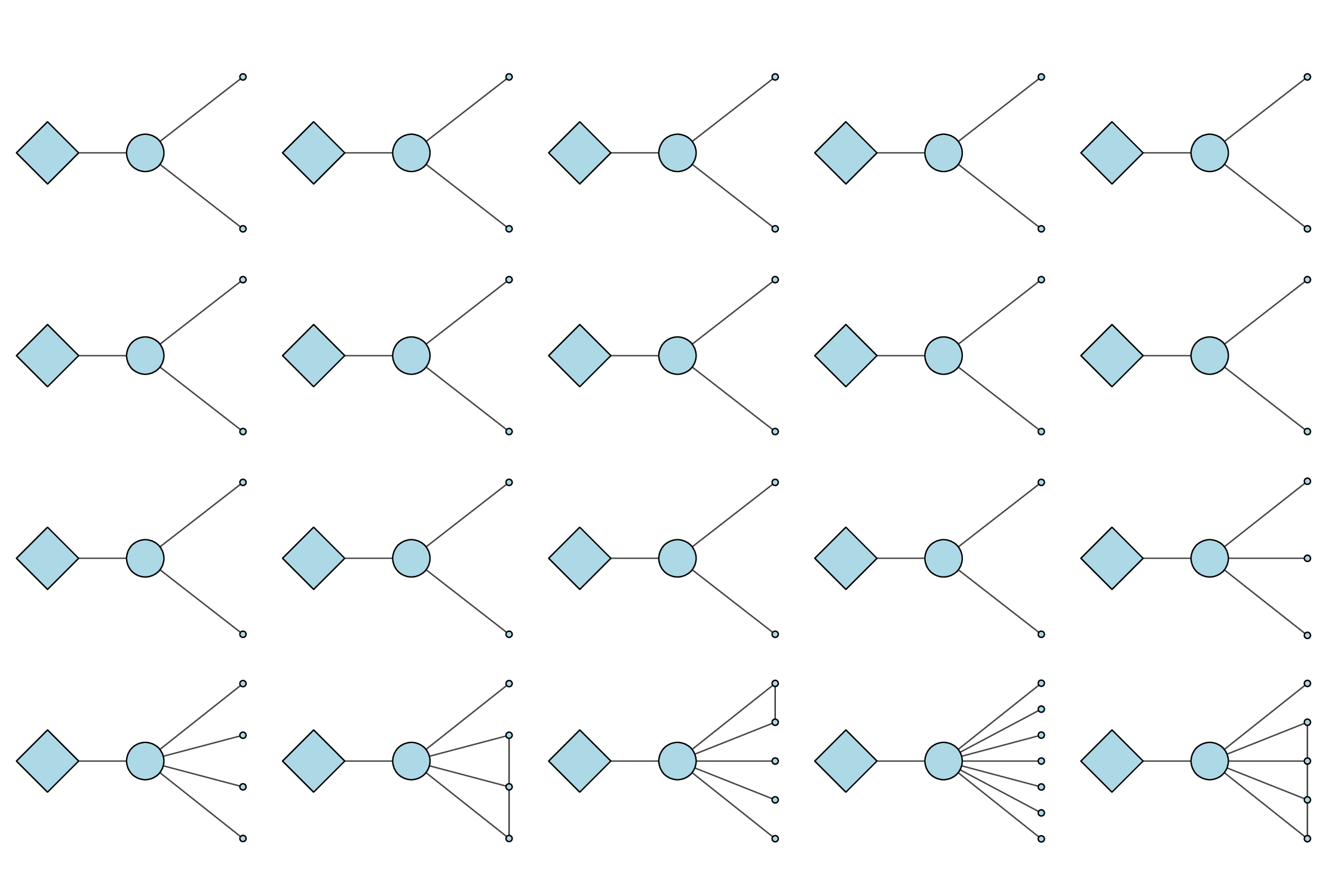}
    \caption{Neighbors matched with fork (radius=2) for testing $Y_\beta=_d Y_\gamma$}
    \label{fig:closest_fork_fork_vs_spoon_jackson_testing}
\end{figure}

\begin{figure}[h]
     \centering
     \includegraphics[width=\textwidth]{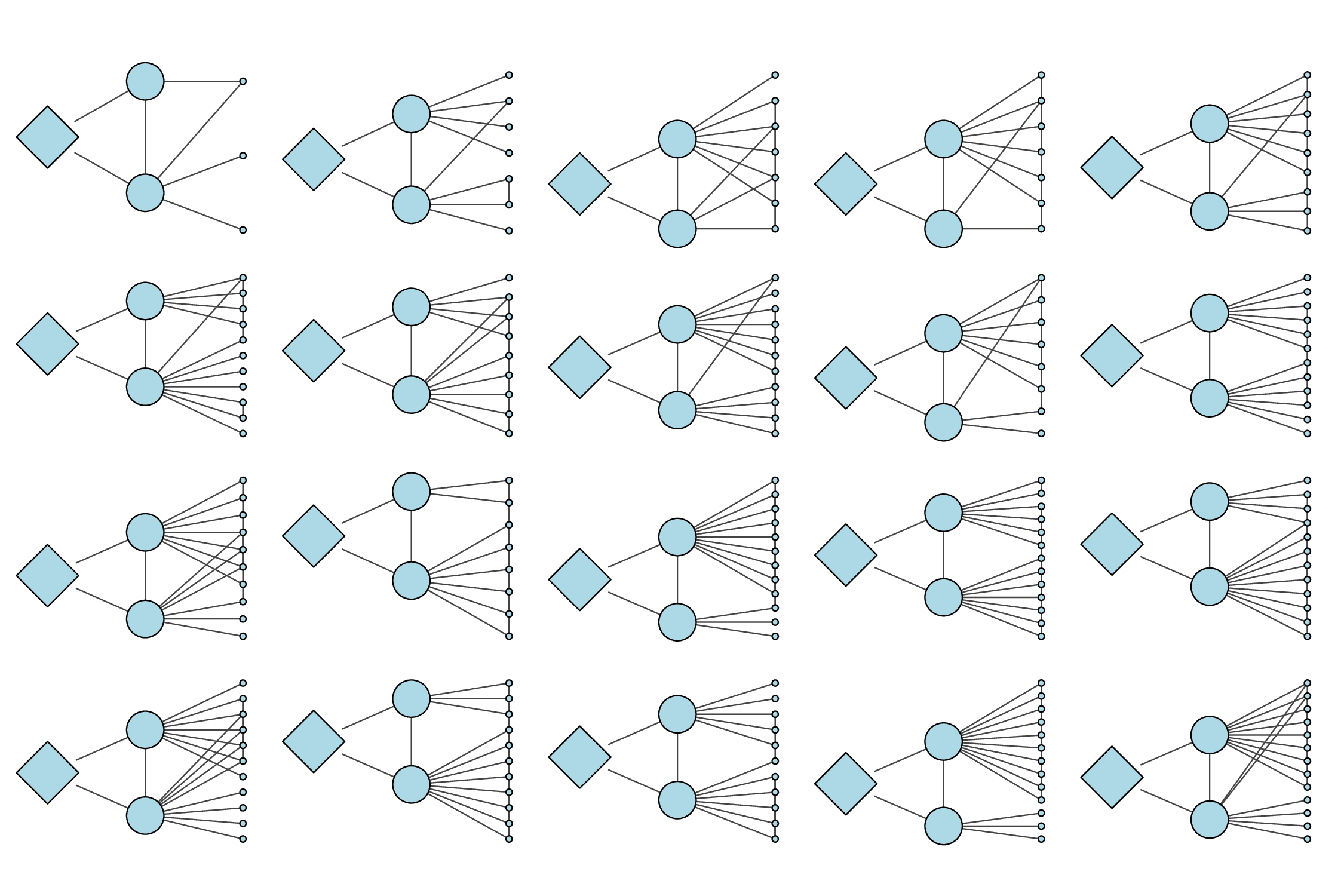}
    \caption{Neighbors matched with spoon (radius=2) for testing $Y_\beta=_d Y_\gamma$}
    \label{fig:closest_spoon_fork_vs_spoon_jackson_testing}
\end{figure}

\begin{figure}[htbp]
     \centering
     \includegraphics[width=\textwidth]{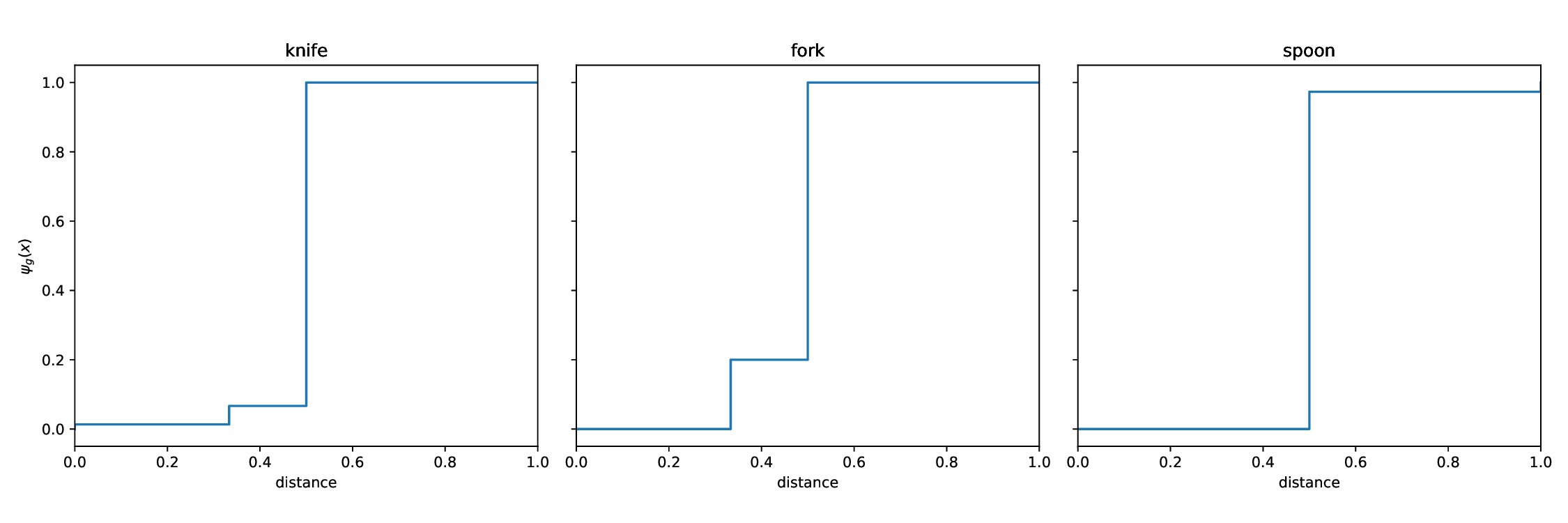}
    \caption{Estimated $\psi_g(\cdot)$ for knife, fork, and spoon}
    \label{fig:edf_distance_app}
\end{figure}

\begin{table}[htbp]
  \centering
  \caption{$p$-values for testing $Y_\alpha =_d Y_\beta$ among 10 different random picks of rooted networks with the same distance to the target rooted network}
    \begin{tabular}{ccccccccccc}
    	\toprule
    	$q=5$ & 1.00 & 0.29 & 0.89 & 1.00 & 1.00 & 1.00 & 1.00 & 1.00 & 1.00 & 1.00\\
	$q=10$ & 0.80 & 1.00 & 0.72 & 1.00 & 1.00 & 0.74 & 0.59 & 0.69 & 0.47 & 0.72 \\
	$q=20$ & 0.95 & 0.89 & 0.76 & 0.91 & 0.81 & 0.50 & 0.22 & 0.57 & 0.71 & 0.33 \\
	\bottomrule
    \end{tabular}%
  \label{tab:robust_p_knife_fork}%
\end{table}%

\begin{table}[htbp]
  \centering
  \caption{$p$-values for testing $Y_\beta =_d Y_\gamma$ among 10 different random picks of rooted networks with the same distance to the target rooted network}
    \begin{tabular}{ccccccccccc}
    	\toprule
    	$q=5$ & 0.09 & 0.36 & 1.00 & 0.28 & 0.09 & 0.22 & 0.27 & 0.03 & 0.05 & 1.00\\
	$q=10$ & 0.00 & 0.32 & 0.57 & 0.06 & 0.05 & 0.06 & 0.20 & 0.00 & 0.04 & 0.61 \\
	$q=20$ & 0.00 & 0.01 & 0.14 & 0.01 & 0.00 & 0.00 & 0.22 & 0.01 & 0.00 & 0.08 \\
	\bottomrule
    \end{tabular}%
  \label{tab:robust_p_fork_spoon}%
\end{table}%

\begin{table}[htbp]
    \centering
    \caption{Examples of networks with different distances to knife}
    \begin{tabular}{cccc}
    \toprule
    $d=0$ & $d=1/3$ & $d=1/2$ & $d=1$\\
    \midrule
    \begin{tikzpicture}[->,>= stealth,shorten >=1pt,auto,node distance=1.3cm,
            thick,main node/.style={circle,fill=blue!20,draw,minimum size=.3cm,inner sep=0pt]}]
        \raisebox{-2mm}{
        \node[main node] (2) {};
        \node[main node] (6) [ left of =2] {};
        \node[draw, fill=blue!20, shape=diamond, aspect=0.7, minimum height=0.4cm, inner sep=2.6pt] (5) [left of=6] {};
    
        \path[-]
        (6) edge node {} (2)   
        (5) edge node {} (6);
        }
    \end{tikzpicture}
    & 
    \begin{tikzpicture}[->,>= stealth,shorten >=1pt,auto,node distance=1.3cm,
            thick,main node/.style={circle,fill=blue!20,draw,minimum size=.3cm,inner sep=0pt]}]
        \raisebox{-2mm}{
        \node[main node] (2) {};
        \node[main node] (6) [ left of =2] {};
        \node[draw, fill=blue!20, shape=diamond, aspect=0.7, minimum height=0.4cm, inner sep=2.6pt] (5) [left of=6] {};
        \node[main node] (7) [right of=2] {};
    
        \path[-]
        (6) edge node {} (2)   
        (5) edge node {} (6)
        (2) edge node {} (7);
        }
    \end{tikzpicture}
    & 
    \begin{tikzpicture}[->,>= stealth,shorten >=1pt,auto,node distance=1.3cm,
            thick,main node/.style={circle,fill=blue!20,draw,minimum size=.3cm,inner sep=0pt]}]
        \raisebox{-2mm}{
        \node[main node] (6) {};
        \node[draw, fill=blue!20, shape=diamond, aspect=0.7, minimum height=0.4cm, inner sep=2.6pt] (5) [left of=6] {};
    
        \path[-] 
        (5) edge node {} (6);
        }
    \end{tikzpicture}
    & 
    \begin{tikzpicture}[->,>= stealth,shorten >=1pt,auto,node distance=1.3cm,
            thick,main node/.style={circle,fill=blue!20,draw,minimum size=.3cm,inner sep=0pt]}]
        \raisebox{-2mm}{
        \node[draw, fill=blue!20, shape=diamond, aspect=0.7, minimum height=0.4cm, inner sep=2.6pt] (6) {};
    
        \path[-];
        }
    \end{tikzpicture} \\
    & 
    \begin{tikzpicture}[->,>= stealth,shorten >=1pt,auto,node distance=1.3cm,
            thick,main node/.style={circle,fill=blue!20,draw,minimum size=.3cm,inner sep=0pt]}]
        \raisebox{-2mm}{
        \node[main node] (2) {};
        \node[main node] (6) [ left of =2] {};
        \node[draw, fill=blue!20, shape=diamond, aspect=0.7, minimum height=0.4cm, inner sep=2.6pt] (5) [left of=6] {};
        \node[main node] (7) [above right of=2] {};
        \node[main node] (8) [below right of=2] {};
    
        \path[-]
        (6) edge node {} (2)   
        (5) edge node {} (6)
        (2) edge node {} (7)
        (2) edge node {} (8);
        }
    \end{tikzpicture}
    & 
    \begin{tikzpicture}[->,>= stealth,shorten >=1pt,auto,node distance=1.3cm,
            thick,main node/.style={circle,fill=blue!20,draw,minimum size=.3cm,inner sep=0pt]}]
        \raisebox{-2mm}{
        \node[main node] (6) {};
        \node[draw, fill=blue!20, shape=diamond, aspect=0.7, minimum height=0.4cm, inner sep=2.6pt] (5) [left of=6] {};
        \node[main node] (7) [above right of=6]{};
        \node[main node] (8) [below right of=6]{};
    
        \path[-] 
        (5) edge node {} (6)
        (6) edge node {} (7)
        (6) edge node {} (8);
        }
    \end{tikzpicture}
     &  
     \begin{tikzpicture}[->,>= stealth,shorten >=1pt,auto,node distance=1.3cm,
            thick,main node/.style={circle,fill=blue!20,draw,minimum size=.3cm,inner sep=0pt]}]
        \raisebox{-2mm}{
        \node[draw, fill=blue!20, shape=diamond, aspect=0.7, minimum height=0.4cm, inner sep=2.6pt] (6) {};
        \node[main node] (7) [above right of=6] {};
        \node[main node] (8) [below right of=6] {};
    
        \path[-]
        (6) edge node {} (7)
        (6) edge node {} (8);
        }
    \end{tikzpicture} \\
     & 
     \begin{tikzpicture}[->,>= stealth,shorten >=1pt,auto,node distance=1.3cm,
            thick,main node/.style={circle,fill=blue!20,draw,minimum size=.3cm,inner sep=0pt]}]
        \raisebox{-2mm}{
        \node[main node] (2) {};
        \node[main node] (6) [ left of =2] {};
        \node[draw, fill=blue!20, shape=diamond, aspect=0.7, minimum height=0.4cm, inner sep=2.6pt] (5) [left of=6] {};
        \node[main node] (7) [above right of=2] {};
        \node[main node] (8) [below right of=2] {};
    
        \path[-]
        (6) edge node {} (2)   
        (5) edge node {} (6)
        (2) edge node {} (7)
        (2) edge node {} (8)
        (7) edge node {} (8);
        }
    \end{tikzpicture}
     & \begin{tikzpicture}[->,>= stealth,shorten >=1pt,auto,node distance=1.3cm,
            thick,main node/.style={circle,fill=blue!20,draw,minimum size=.3cm,inner sep=0pt]}]
        \raisebox{-2mm}{
        \node[main node] (6) {};
        \node[draw, fill=blue!20, shape=diamond, aspect=0.7, minimum height=0.4cm, inner sep=2.6pt] (5) [left of=6] {};
        \node[main node] (7) [above right of=6]{};
        \node[main node] (8) [below right of=6]{};
    
        \path[-] 
        (5) edge node {} (6)
        (6) edge node {} (7)
        (6) edge node {} (8)
        (7) edge node {} (8);
        }
    \end{tikzpicture} & 
    \begin{tikzpicture}[->,>= stealth,shorten >=1pt,auto,node distance=1.3cm,
            thick,main node/.style={circle,fill=blue!20,draw,minimum size=.3cm,inner sep=0pt]}]
        \raisebox{-2mm}{
        \node[draw, fill=blue!20, shape=diamond, aspect=0.7, minimum height=0.4cm, inner sep=2.6pt] (6) {};
        \node[main node] (7) [above right of=6] {};
        \node[main node] (8) [below right of=6] {};
        \node[main node] (9) [right of=8] {};
    
        \path[-]
        (6) edge node {} (7)
        (6) edge node {} (8)
        (8) edge node {} (9);
        }
    \end{tikzpicture} \\
     & 
     \begin{tikzpicture}[->,>= stealth,shorten >=1pt,auto,node distance=1.3cm,
            thick,main node/.style={circle,fill=blue!20,draw,minimum size=.3cm,inner sep=0pt]}]
        \raisebox{-2mm}{
        \node[main node] (2) {};
        \node[main node] (6) [ left of =2] {};
        \node[draw, fill=blue!20, shape=diamond, aspect=0.7, minimum height=0.4cm, inner sep=2.6pt] (5) [left of=6] {};
        \node[main node] (7) [above right of=2] {};
        \node[main node] (8) [below right of=2] {};
        \node[main node] (9) [right of=2] {};
    
        \path[-]
        (6) edge node {} (2)   
        (5) edge node {} (6)
        (2) edge node {} (7)
        (2) edge node {} (8)
        (2) edge node {} (9);
        }
    \end{tikzpicture}
     & \begin{tikzpicture}[->,>= stealth,shorten >=1pt,auto,node distance=1.3cm,
            thick,main node/.style={circle,fill=blue!20,draw,minimum size=.3cm,inner sep=0pt]}]
        \raisebox{-2mm}{
        \node[main node] (6) {};
        \node[draw, fill=blue!20, shape=diamond, aspect=0.7, minimum height=0.4cm, inner sep=2.6pt] (5) [left of=6] {};
        \node[main node] (7) [above right of=6]{};
        \node[main node] (8) [below right of=6]{};
        \node[main node] (9) [right of=6]{};
    
        \path[-] 
        (5) edge node {} (6)
        (6) edge node {} (7)
        (6) edge node {} (8)
        (6) edge node {} (9);
        }
    \end{tikzpicture} & 
    \begin{tikzpicture}[->,>= stealth,shorten >=1pt,auto,node distance=1.3cm,
            thick,main node/.style={circle,fill=blue!20,draw,minimum size=.3cm,inner sep=0pt]}]
        \raisebox{-2mm}{
        \node[draw, fill=blue!20, shape=diamond, aspect=0.7, minimum height=0.4cm, inner sep=2.6pt] (6) {};
        \node[main node] (7) [above right of=6] {};
        \node[main node] (8) [below right of=6] {};
        \node[main node] (10) [right of=6] {};
        \node[main node] (9) [right of=8] {};
    
        \path[-]
        (6) edge node {} (7)
        (6) edge node {} (8)
        (8) edge node {} (9)
        (6) edge node {} (10);
        }
    \end{tikzpicture} \\
    &&&\\
    \bottomrule
    \end{tabular}
    \label{tab:knife_distances}
\end{table}

\begin{table}[htbp]
    \centering
    \caption{Examples of networks with different distances to fork}
    \begin{tabular}{cccc}
    \toprule
    $d=0$ & $d=1/3$ & $d=1/2$ & $d=1$\\
    \midrule
    \begin{tikzpicture}[->,>= stealth,shorten >=1pt,auto,node distance=1.3cm,
            thick,main node/.style={circle,fill=blue!20,draw,minimum size=.3cm,inner sep=0pt]}]
        \raisebox{-2mm}{
        \node[main node] (2) {};
        \node[main node] (6) [above left of =2] {};
        \node[draw, fill=blue!20, shape=diamond, aspect=0.7, minimum height=0.4cm, inner sep=2.6pt] (5) [below left of=2] {};
         \node[main node] (7) [right of=2] {};
    
        \path[-]
        (6) edge node {} (2)   
        (5) edge node {} (2)
        (7) edge node {} (2);
        }
    \end{tikzpicture}
    & 
    \begin{tikzpicture}[->,>= stealth,shorten >=1pt,auto,node distance=1.3cm,
            thick,main node/.style={circle,fill=blue!20,draw,minimum size=.3cm,inner sep=0pt]}]
        \raisebox{-2mm}{
        \node[main node] (2) {};
        \node[main node] (6) [above left of =2] {};
        \node[draw, fill=blue!20, shape=diamond, aspect=0.7, minimum height=0.4cm, inner sep=2.6pt] (5) [below left of=2] {};
        \node[main node] (7) [right of=2] {};
        \node[main node] (8) [right of=6] {};
    
        \path[-]
        (6) edge node {} (2)   
        (5) edge node {} (2)
        (7) edge node {} (2)
        (6) edge node {} (8);
        }
    \end{tikzpicture}
    & 
    \begin{tikzpicture}[->,>= stealth,shorten >=1pt,auto,node distance=1.3cm,
            thick,main node/.style={circle,fill=blue!20,draw,minimum size=.3cm,inner sep=0pt]}]
        \raisebox{7.25mm}{
        \node[main node] (6) {};
        \node[draw, fill=blue!20, shape=diamond, aspect=0.7, minimum height=0.4cm, inner sep=2.6pt] (5) [left of=6] {};
        \node[main node] (9) [right of=6]{};
    
        \path[-] 
        (5) edge node {} (6)
        (6) edge node {} (9);
        }
    \end{tikzpicture}
    & 
    \begin{tikzpicture}[->,>= stealth,shorten >=1pt,auto,node distance=1.3cm,
            thick,main node/.style={circle,fill=blue!20,draw,minimum size=.3cm,inner sep=0pt]}]
        \raisebox{7.25mm}{
        \node[draw, fill=blue!20, shape=diamond, aspect=0.7, minimum height=0.4cm, inner sep=2.6pt] (6) {};
    
        \path[-];
        }
    \end{tikzpicture} \\
    & 
    \begin{tikzpicture}[->,>= stealth,shorten >=1pt,auto,node distance=1.3cm,
            thick,main node/.style={circle,fill=blue!20,draw,minimum size=.3cm,inner sep=0pt]}]
        \raisebox{-2mm}{
        \node[main node] (2) {};
        \node[main node] (6) [above left of =2] {};
        \node[draw, fill=blue!20, shape=diamond, aspect=0.7, minimum height=0.4cm, inner sep=2.6pt] (5) [below left of=2] {};
        \node[main node] (7) [right of=2] {};
        \node[main node] (8) [below of=7] {};
    
        \path[-]
        (6) edge node {} (2)   
        (5) edge node {} (2)
        (7) edge node {} (2)
        (7) edge node {} (8);
        }
    \end{tikzpicture}
    & 
    \begin{tikzpicture}[->,>= stealth,shorten >=1pt,auto,node distance=1.3cm,
            thick,main node/.style={circle,fill=blue!20,draw,minimum size=.3cm,inner sep=0pt]}]
        \raisebox{1.5mm}{
        \node[main node] (6) {};
        \node[draw, fill=blue!20, shape=diamond, aspect=0.7, minimum height=0.4cm, inner sep=2.6pt] (5) [left of=6] {};
        \node[main node] (7) [above right of=6]{};
        \node[main node] (8) [below right of=6]{};
        \node[main node] (9) [right of=6]{};
    
        \path[-] 
        (5) edge node {} (6)
        (6) edge node {} (7)
        (6) edge node {} (8)
        (6) edge node {} (9);
        }
    \end{tikzpicture}
     &  
     \begin{tikzpicture}[->,>= stealth,shorten >=1pt,auto,node distance=1.3cm,
            thick,main node/.style={circle,fill=blue!20,draw,minimum size=.3cm,inner sep=0pt]}]
        \raisebox{1.5mm}{
        \node[draw, fill=blue!20, shape=diamond, aspect=0.7, minimum height=0.4cm, inner sep=2.6pt] (6) {};
        \node[main node] (7) [above right of=6] {};
        \node[main node] (8) [below right of=6] {};
    
        \path[-]
        (6) edge node {} (7)
        (6) edge node {} (8);
        }
    \end{tikzpicture} \\
     & 
     \begin{tikzpicture}[->,>= stealth,shorten >=1pt,auto,node distance=1.3cm,
            thick,main node/.style={circle,fill=blue!20,draw,minimum size=.3cm,inner sep=0pt]}]
        \raisebox{-2mm}{
        \node[main node] (2) {};
        \node[main node] (6) [above left of =2] {};
        \node[draw, fill=blue!20, shape=diamond, aspect=0.7, minimum height=0.4cm, inner sep=2.6pt] (5) [below left of=2] {};
        \node[main node] (7) [right of=2] {};
        \node[main node] (8) [below of=7] {};
        \node[main node] (9) [right of=6] {};
    
        \path[-]
        (6) edge node {} (2)   
        (5) edge node {} (2)
        (7) edge node {} (2)
        (7) edge node {} (8)
        (6) edge node {} (9);
        }
    \end{tikzpicture}
     & 
     \begin{tikzpicture}[->,>= stealth,shorten >=1pt,auto,node distance=1.3cm,
            thick,main node/.style={circle,fill=blue!20,draw,minimum size=.3cm,inner sep=0pt]}]
        \raisebox{1.5mm}{
        \node[main node] (6) {};
        \node[draw, fill=blue!20, shape=diamond, aspect=0.7, minimum height=0.4cm, inner sep=2.6pt] (5) [left of=6] {};
        \node[main node] (7) [above right of=6]{};
        \node[main node] (8) [below right of=6]{};
        \node[main node] (9) [right of=6]{};
    
        \path[-] 
        (5) edge node {} (6)
        (6) edge node {} (7)
        (6) edge node {} (8)
        (6) edge node {} (9)
        (7) edge node {} (9);
        }
    \end{tikzpicture} & 
    \begin{tikzpicture}[->,>= stealth,shorten >=1pt,auto,node distance=1.3cm,
            thick,main node/.style={circle,fill=blue!20,draw,minimum size=.3cm,inner sep=0pt]}]
        \raisebox{1.5mm}{
        \node[draw, fill=blue!20, shape=diamond, aspect=0.7, minimum height=0.4cm, inner sep=2.6pt] (6) {};
        \node[main node] (7) [above right of=6] {};
        \node[main node] (8) [below right of=6] {};
        \node[main node] (9) [right of=8] {};
    
        \path[-]
        (6) edge node {} (7)
        (6) edge node {} (8)
        (8) edge node {} (9);
        }
    \end{tikzpicture} \\
     & 
     \begin{tikzpicture}[->,>= stealth,shorten >=1pt,auto,node distance=1.3cm,
            thick,main node/.style={circle,fill=blue!20,draw,minimum size=.3cm,inner sep=0pt]}]
        \raisebox{-2mm}{
        \node[main node] (2) {};
        \node[main node] (6) [above left of =2] {};
        \node[draw, fill=blue!20, shape=diamond, aspect=0.7, minimum height=0.4cm, inner sep=2.6pt] (5) [below left of=2] {};
        \node[main node] (7) [right of=2] {};
        \node[main node] (8) [below of=7] {};
        \node[main node] (9) [right of=6] {};
    
        \path[-]
        (6) edge node {} (2)   
        (5) edge node {} (2)
        (7) edge node {} (2)
        (7) edge node {} (8)
        (6) edge node {} (9)
        (7) edge node {} (9);
        }
    \end{tikzpicture}
     & 
     \begin{tikzpicture}[->,>= stealth,shorten >=1pt,auto,node distance=1.3cm,
            thick,main node/.style={circle,fill=blue!20,draw,minimum size=.3cm,inner sep=0pt]}]
        \raisebox{1.5mm}{
        \node[main node] (6) {};
        \node[draw, fill=blue!20, shape=diamond, aspect=0.7, minimum height=0.4cm, inner sep=2.6pt] (5) [left of=6] {};
        \node[main node] (7) [above right of=6]{};
        \node[main node] (8) [below right of=6]{};
        \node[main node] (9) [right of=6]{};
    
        \path[-] 
        (5) edge node {} (6)
        (6) edge node {} (7)
        (6) edge node {} (8)
        (6) edge node {} (9)
        (7) edge node {} (9)
        (8) edge node {} (9);
        }
    \end{tikzpicture} & 
    \begin{tikzpicture}[->,>= stealth,shorten >=1pt,auto,node distance=1.3cm,
            thick,main node/.style={circle,fill=blue!20,draw,minimum size=.3cm,inner sep=0pt]}]
        \raisebox{1.5mm}{
        \node[draw, fill=blue!20, shape=diamond, aspect=0.7, minimum height=0.4cm, inner sep=2.6pt] (6) {};
        \node[main node] (7) [above right of=6] {};
        \node[main node] (8) [below right of=6] {};
        \node[main node] (10) [right of=6] {};
        \node[main node] (9) [right of=8] {};
    
        \path[-]
        (6) edge node {} (7)
        (6) edge node {} (8)
        (8) edge node {} (9)
        (6) edge node {} (10);
        }
    \end{tikzpicture} \\
    &&&\\
    \bottomrule
    \end{tabular}
    \label{tab:fork_distances}
\end{table}

\begin{table}[htbp]
    \centering
    \caption{Examples of networks with different distances to spoon}
    \begin{tabular}{cccc}
    \toprule
    $d=0$ & $d=1/3$ & $d=1/2$ & $d=1$\\
    \midrule
    \begin{tikzpicture}[->,>= stealth,shorten >=1pt,auto,node distance=1.3cm,
            thick,main node/.style={circle,fill=blue!20,draw,minimum size=.3cm,inner sep=0pt]}]
        \raisebox{-2mm}{
        \node[main node] (2) {};
        \node[main node] (6) [above left of =2] {};
        \node[draw, fill=blue!20, shape=diamond, aspect=0.7, minimum height=0.4cm, inner sep=2.6pt] (5) [below left of=2] {};
         \node[main node] (7) [right of=2] {};
    
        \path[-]
        (6) edge node {} (2)   
        (5) edge node {} (2)
        (7) edge node {} (2)
        (5) edge node {} (6);
        }
    \end{tikzpicture}
    & 
    \begin{tikzpicture}[->,>= stealth,shorten >=1pt,auto,node distance=1.3cm,
            thick,main node/.style={circle,fill=blue!20,draw,minimum size=.3cm,inner sep=0pt]}]
        \raisebox{-2mm}{
        \node[main node] (2) {};
        \node[main node] (6) [above left of =2] {};
        \node[draw, fill=blue!20, shape=diamond, aspect=0.7, minimum height=0.4cm, inner sep=2.6pt] (5) [below left of=2] {};
        \node[main node] (7) [right of=2] {};
        \node[main node] (8) [above right of=7] {};
    
        \path[-]
        (6) edge node {} (2)   
        (5) edge node {} (2)
        (7) edge node {} (2)
        (5) edge node {} (6)
        (7) edge node {} (8);
        }
    \end{tikzpicture}
    & 
    \begin{tikzpicture}[->,>= stealth,shorten >=1pt,auto,node distance=1.3cm,
            thick,main node/.style={circle,fill=blue!20,draw,minimum size=.3cm,inner sep=0pt]}]
        \raisebox{-2mm}{
        \node[main node] (2) {};
        \node[main node] (6) [above left of =2] {};
        \node[draw, fill=blue!20, shape=diamond, aspect=0.7, minimum height=0.4cm, inner sep=2.6pt] (5) [below left of=2] {};
    
        \path[-]
        (6) edge node {} (2)   
        (5) edge node {} (2)
        (5) edge node {} (6);
        }
    \end{tikzpicture}    & 
    \begin{tikzpicture}[->,>= stealth,shorten >=1pt,auto,node distance=1.3cm,
            thick,main node/.style={circle,fill=blue!20,draw,minimum size=.3cm,inner sep=0pt]}]
        \raisebox{7.25mm}{
        \node[draw, fill=blue!20, shape=diamond, aspect=0.7, minimum height=0.4cm, inner sep=2.6pt] (6) {};
    
        \path[-];
        }
    \end{tikzpicture} \\
    & 
    \begin{tikzpicture}[->,>= stealth,shorten >=1pt,auto,node distance=1.3cm,
            thick,main node/.style={circle,fill=blue!20,draw,minimum size=.3cm,inner sep=0pt]}]
        \raisebox{-2mm}{
        \node[main node] (2) {};
        \node[main node] (6) [above left of =2] {};
        \node[draw, fill=blue!20, shape=diamond, aspect=0.7, minimum height=0.4cm, inner sep=2.6pt] (5) [below left of=2] {};
        \node[main node] (7) [right of=2] {};
        \node[main node] (8) [above right of=7] {};
         \node[main node] (9) [below right of=7] {};
    
        \path[-]
        (6) edge node {} (2)   
        (5) edge node {} (2)
        (7) edge node {} (2)
        (5) edge node {} (6)
        (7) edge node {} (8)
        (7) edge node {} (9);
        }
    \end{tikzpicture}
    & 
   \begin{tikzpicture}[->,>= stealth,shorten >=1pt,auto,node distance=1.3cm,
            thick,main node/.style={circle,fill=blue!20,draw,minimum size=.3cm,inner sep=0pt]}]
        \raisebox{-2mm}{
        \node[main node] (2) {};
        \node[main node] (6) [above left of =2] {};
        \node[draw, fill=blue!20, shape=diamond, aspect=0.7, minimum height=0.4cm, inner sep=2.6pt] (5) [below left of=2] {};
        \node[main node] (7) [above right of =2] {};
        \node[main node] (8) [below right of =2] {};
    
        \path[-]
        (6) edge node {} (2)   
        (5) edge node {} (2)
        (5) edge node {} (6)
        (7) edge node {} (2)
        (8) edge node {} (2);
        }
    \end{tikzpicture}
     &  
     \begin{tikzpicture}[->,>= stealth,shorten >=1pt,auto,node distance=1.3cm,
            thick,main node/.style={circle,fill=blue!20,draw,minimum size=.3cm,inner sep=0pt]}]
        \raisebox{-1.5mm}{
        \node[draw, fill=blue!20, shape=diamond, aspect=0.7, minimum height=0.4cm, inner sep=2.6pt] (6) {};
        \node[main node] (7) [above right of=6] {};
        \node[main node] (8) [below right of=6] {};
    
        \path[-]
        (6) edge node {} (7)
        (6) edge node {} (8);
        }
    \end{tikzpicture} \\
     & 
     \begin{tikzpicture}[->,>= stealth,shorten >=1pt,auto,node distance=1.3cm,
            thick,main node/.style={circle,fill=blue!20,draw,minimum size=.3cm,inner sep=0pt]}]
        \raisebox{-2mm}{
        \node[main node] (2) {};
        \node[main node] (6) [above left of =2] {};
        \node[draw, fill=blue!20, shape=diamond, aspect=0.7, minimum height=0.4cm, inner sep=2.6pt] (5) [below left of=2] {};
        \node[main node] (7) [right of=2] {};
        \node[main node] (8) [above right of=7] {};
         \node[main node] (9) [below right of=7] {};
    
        \path[-]
        (6) edge node {} (2)   
        (5) edge node {} (2)
        (7) edge node {} (2)
        (5) edge node {} (6)
        (7) edge node {} (8)
        (7) edge node {} (9)
        (8) edge node {} (9);
        }
    \end{tikzpicture}     & 
    \begin{tikzpicture}[->,>= stealth,shorten >=1pt,auto,node distance=1.3cm,
            thick,main node/.style={circle,fill=blue!20,draw,minimum size=.3cm,inner sep=0pt]}]
        \raisebox{-2mm}{
        \node[main node] (2) {};
        \node[main node] (6) [above left of =2] {};
        \node[draw, fill=blue!20, shape=diamond, aspect=0.7, minimum height=0.4cm, inner sep=2.6pt] (5) [below left of=2] {};
        \node[main node] (7) [above right of =2] {};
        \node[main node] (8) [below right of =2] {};
    
        \path[-]
        (6) edge node {} (2)   
        (5) edge node {} (2)
        (5) edge node {} (6)
        (7) edge node {} (2)
        (8) edge node {} (2)
        (8) edge node {} (7);
        }
    \end{tikzpicture} & 
    \begin{tikzpicture}[->,>= stealth,shorten >=1pt,auto,node distance=1.3cm,
            thick,main node/.style={circle,fill=blue!20,draw,minimum size=.3cm,inner sep=0pt]}]
        \raisebox{-1.5mm}{
        \node[draw, fill=blue!20, shape=diamond, aspect=0.7, minimum height=0.4cm, inner sep=2.6pt] (6) {};
        \node[main node] (7) [above right of=6] {};
        \node[main node] (8) [below right of=6] {};
        \node[main node] (9) [right of=8] {};
    
        \path[-]
        (6) edge node {} (7)
        (6) edge node {} (8)
        (8) edge node {} (9);
        }
    \end{tikzpicture} \\
     & 
    \begin{tikzpicture}[->,>= stealth,shorten >=1pt,auto,node distance=1.3cm,
            thick,main node/.style={circle,fill=blue!20,draw,minimum size=.3cm,inner sep=0pt]}]
        \raisebox{-2mm}{
        \node[main node] (2) {};
        \node[main node] (6) [above left of =2] {};
        \node[draw, fill=blue!20, shape=diamond, aspect=0.7, minimum height=0.4cm, inner sep=2.6pt] (5) [below left of=2] {};
        \node[main node] (7) [right of=2] {};
        \node[main node] (8) [above right of=7] {};
        \node[main node] (9) [below right of=7] {};
        \node[main node] (10) [right of=7] {};
    
        \path[-]
        (6) edge node {} (2)   
        (5) edge node {} (2)
        (7) edge node {} (2)
        (5) edge node {} (6)
        (7) edge node {} (8)
        (7) edge node {} (9)
        (7) edge node {} (10);
        }
    \end{tikzpicture} 
     & 
     \begin{tikzpicture}[->,>= stealth,shorten >=1pt,auto,node distance=1.3cm,
            thick,main node/.style={circle,fill=blue!20,draw,minimum size=.3cm,inner sep=0pt]}]
        \raisebox{-2mm}{
        \node[main node] (2) {};
        \node[main node] (6) [above left of =2] {};
        \node[draw, fill=blue!20, shape=diamond, aspect=0.7, minimum height=0.4cm, inner sep=2.6pt] (5) [below left of=2] {};
        \node[main node] (7) [above right of =2] {};
        \node[main node] (8) [below right of =2] {};
        \node[main node] (9) [right of =2] {};
    
        \path[-]
        (6) edge node {} (2)   
        (5) edge node {} (2)
        (5) edge node {} (6)
        (7) edge node {} (2)
        (8) edge node {} (2)
        (9) edge node {} (2);
        }
    \end{tikzpicture} & 
    \begin{tikzpicture}[->,>= stealth,shorten >=1pt,auto,node distance=1.3cm,
            thick,main node/.style={circle,fill=blue!20,draw,minimum size=.3cm,inner sep=0pt]}]
        \raisebox{-2mm}{
        \node[draw, fill=blue!20, shape=diamond, aspect=0.7, minimum height=0.3cm, inner sep=2.6pt] (6) {};
        \node[main node] (7) [above right of=6] {};
        \node[main node] (8) [below right of=6] {};
        \node[main node] (10) [right of=6] {};
        \node[main node] (9) [right of=8] {};
    
        \path[-]
        (6) edge node {} (7)
        (6) edge node {} (8)
        (8) edge node {} (9)
        (6) edge node {} (10);
        }
    \end{tikzpicture} \\
    &&&\\
    \bottomrule
    \end{tabular}
    \label{tab:spoon_distances}
\end{table}

\end{document}